\newtheorem{theorem}{Theorem}[section]
\newtheorem{lemma}[theorem]{Lemma}
\newtheorem{proposition}[theorem]{Proposition}
\newtheorem{corollary}[theorem]{Corollary}
\newenvironment{proof}[1][Proof]{\begin{trivlist}
\item[\hskip \labelsep {\bfseries #1}]}{\end{trivlist}}
\newcommand{\qed}{\nobreak \ifvmode \relax \else
      \ifdim\lastskip<1.5em \hskip-\lastskip
      \hskip1.5em plus0em minus0.5em \fi \nobreak
      \vrule height0.75em width0.5em depth0.25em\fi}
\renewcommand{\vec}[1]{\mathbf{#1}}
\begin{document}

\preprint{}
\title{Extensions of Generalized Two-Qubit Separability Probability Analyses to Higher Dimensions, Additional Measures and New Methodologies}
\author{Paul B. Slater}
 \email{slater@kitp.ucsb.edu}
\affiliation{%
Kavli Institute for Theoretical Physics, University of California, Santa Barbara, CA 93106-4030\\
}
\date{\today}% It is always \today, today,
             %  but any date may be explicitly specified
            
\begin{abstract}
We first seek the rebit-retrit counterpart to the (formally proven by Lovas and Andai) two-rebit Hilbert-Schmidt separability probability of $\frac{29}{64} =\frac{29}{2^6} \approx 0.453125$ and the qubit-qutrit analogue of the (strongly supported) value of  
$\frac{8}{33} = \frac{2^3}{3 \cdot 11} \approx 0.242424$. We advance the possibilities of  a rebit-retrit value of 
$\frac{860}{6561} =\frac{2^2 \cdot 5 \cdot 43}{3^8} \approx 0.131078$ and a  qubit-qutrit one of 
$\frac{27}{1000} = (\frac{3}{10})^3  =\frac{3^3}{2^3 \cdot 5^3} = 0.027$. These four values for $2 \times m$ systems ($m=2,3$) suggest certain numerator/denominator sequences involving powers of $m$, which
we further investigate for $m>3$. Additionally, we find that the Hilbert-Schmidt separability/PPT-probabilities for the two-rebit, rebit-retrit and two-retrit $X$-states all equal $\frac{16}{3 \pi^2} \approx 0.54038$, as well as more generally,
that the probabilities based on induced measures are equal across these three sets. Then, we extend the master Lovas-Andai formula to  induced measures. For instance, the  two-qubit function ($k=0$) is $\tilde{\chi}_{2,0}(\varepsilon)=\frac{1}{3} \varepsilon^2 (4 -\varepsilon^2)$, yielding $\frac{8}{33}$, while its $k=1$ induced measure counterpart is
$\tilde{\chi}_{2,1}(\varepsilon)=\frac{1}{4} \varepsilon ^2 \left(3-\varepsilon ^2\right)^2$,  yielding $\frac{61}{143} =\frac{61}{11 \cdot 13} \approx 0.426573$, where $\varepsilon$ is a singular-value ratio. Interpolations between Hilbert-Schmidt and operator monotone (Bures, $\sqrt{x}$) measures are also studied. Using a recently-developed golden-ratio-related (quasirandom sequence) approach, current (significant digits) estimates of the two-rebit and two-qubit Bures separability probabilities are 0.15709 and 0.07331, respectively.
\end{abstract}

\pacs{Valid PACS 03.67.Mn, 02.50.Cw, 02.40.Ft, 02.10.Yn, 03.65.-w}
                             % Classification Scheme.
\keywords{separability probabilities, qubit-qudit,  two-qubits,  two-rebits, Hilbert-Schmidt measure, random matrix theory, rebit-retrits, qubit-qutrits, quaternions, PPT-probabilities, operator monotone functions, Bures measure, induced measure, Lovas-Andai functions, quasirandom sequences, golden ratio}

\maketitle
\tableofcontents
\section{Introduction} \label{Introduction}
It has now been formally proven by Lovas and Andai \cite[Thm. 2]{lovas2017invariance} that the separability probability with respect to Hilbert-Schmidt  measure \cite{zyczkowski2003hilbert} \cite[sec. 13.3]{bengtsson2017geometry} of the 9-dimensional convex set of 
two-rebit states is $\frac{29}{64}$. Additionally, the multifacted evidence  \cite{slater2017master,milz2014volumes,fei2016numerical,shang2015monte,slater2013concise,slater2012moment,slater2007dyson}--including a recent ``master'' extension \cite{slater2017master} of the Lovas-Andai framework to {\it generalized} two-qubit states--is strongly compelling that the corresponding value for the 15-dimensional convex set of two-qubit states is $\frac{8}{33}$ (with that of the 27-dimensional convex set of two-quater[nionic]bits being $\frac{26}{323}$ [cf. \cite{adler1995quaternionic}], among other still higher-dimensional companion random-matrix related results). (Certainly, one can, however, still aspire to a yet greater
``intuitive'' understanding of these assertions, particularly in some ``geometric/visual'' sense [cf. \cite{szarek2006structure,samuel2018lorentzian,avron2009entanglement,braga2010geometrical,gamel2016entangled,jevtic2014quantum}], as well as further formalized proofs.) It is of interest to compare/contrast these finite-dimensional studies with those other quantum-information-theoretic ones, presented in the recent comprehensive volume of Aubrun and Szarek
\cite{aubrun2017alice}, employing {\it asymptotic geometric analysis}.

By a separability probability, we mean the ratio of the volume of the separable states to the volume  of all (separable and entangled) states with respect to the chosen measure, as proposed, apparently first, by {\.Z}yczkowski, Horodecki, Sanpera and Lewenstein \cite{zyczkowski1998volume} (cf. \cite{petz1996geometries,e20020146,singh2014relative,batle2014geometric}). 

In these regards, we present the formulas derived by {\.Z}yczkowski and Sommers
for the {\it total} ($N^2-1$)-dimensional Hilbert-Schmidt (HS) volumes of the $N \times N$ (off-diagonal complex-valued) density matrices \cite[eq. (4.5)]{zyczkowski2003hilbert} 
(cf. \cite[eq. (14.38)]{bengtsson2017geometry}),
\begin{equation} \label{ZSComplex}
V_{HS}^{\mathbb{C}}(N)= \frac{\sqrt{N} (2 \pi )^{\frac{1}{2} (N-1) N} \prod_{i=1}^N \Gamma(i)}{\Gamma(N^2) },   
\end{equation}
and their $\frac{N^2+N-2}{2}$-dimensional real-valued counterparts \cite[eq. (7.7)]{zyczkowski2003hilbert},
\begin{equation} \label{ZSreal}
V_{HS}^\mathbb{R}(N)=  \frac{\sqrt{N} 2^N (2 \pi )^{\frac{1}{4} (N-1) N} \Gamma[(N+1)/2] \prod_{i=1}^N \Gamma(1+i/2)}{\Gamma(N(N+1)/2) \Gamma(1/2) }.     \end{equation}

Further, Andai alternatively employed Lebesgue measure (yielding results equivalent with the use of the normalization factor, $\sqrt{N} 2^{N(N-1)/2}$ \cite[p. 13648]{andai2006volume} to the Hilbert-Schmidt ones), obtaining in the complex case \cite[Thm. 2]{andai2006volume},
\begin{equation} \label{AndaiComplex}
V_{Lebesgue}^\mathbb{C}(N)=\frac{\pi ^{\frac{1}{2} (N-1) N} \Pi_{i=1}^{N-1} i!}{\left(N^2-1\right)!}.
\end{equation}
For  the real case  (we are only immediately interested here in the even dimensions $N=4, 6, 8, 10$), taking $2 l=N$, Andai gave 
\cite[Thm. 1]{andai2006volume},
\begin{equation}  \label{AndaiReal}
V_{Lebesgue}^\mathbb{R}(l)=\frac{2^{-l^2-l} \pi ^{l^2} (2 l)!}{l! \left(2 l^2+l-1\right)!}    \prod_{i=1}^{l-1} (2 i)!.
\end{equation}

To, additionally, obtain the volume formulas with respect to induced measure  \cite{zyczkowski2001induced} \cite[sec. 15.5]{bengtsson2017geometry}, in the two-qubit cases ($N=4$), we must multiply the complex ($\mathbb{C}$) total volume expressions
of {\.Z}yczkowski and Sommers (\ref{ZSComplex}) and of Andai (\ref{AndaiComplex}) for $N=4$ by \cite[eq. (3.7)]{zyczkowski2001induced}
\begin{equation}
 \frac{217945728000 (1)_k (2)_k (3)_k \Gamma (k+4)}{\Gamma (4 (k+4))},   
\end{equation}
where the Pochhammer symbol is indicated.
Similarly, for the qubit-qutrit case ($N=6$), we must multiply by 
\begin{equation}
\frac{86109566386551207747222094479360000000 (1)_k (2)_k (3)_k (4)_k (5)_k \Gamma
   (k+6)}{\Gamma (6 (k+6))}.    
\end{equation}
\subsection{Outline of study}
Our first collection of (qubit-qutrit, qubit-qudit, rebit-retrit, rebit-redit and two-quaterbit) analyses, reported in the immediately following sections (sec.~\ref{qubitqutritsection}-\ref{quaternionicsection}), could be conducted with either  set of the volume formulas referenced above ((\ref{ZSComplex})-(\ref{AndaiReal})). For specificity, we will proceed with the second (Andai/Lebesgue) set (cf. (\ref{MZ1}), (\ref{MZ2})), in investigating higher-dimensional counterparts to the now available extensive collection of very well-supported results for the generalized two-qubit states. In particular, let us note, the conjecture (\ref{qubitqutritconjecture}) that the qubit-qutrit counterpart to the apparent $\frac{8}{33}=\frac{2^3}{3 \cdot 11}$ two-qubit separability probability is $\frac{27}{1000}=(\frac{3}{10})^3 =\frac{3^3}{2^3 \cdot 5^3}$.

In the next secs.~\ref{twoqutritssection} and \ref{Xstates}, we examine separability/PPT-probability issues in the context of the two-qutrits and $X$-states, respectively.

In sec.~\ref{Determinantal}, we study for several scenarios, the division of separability probabilities between  the determinantal inequalities $|\rho^{PT}| > |\rho|$ and $|\rho| > |\rho^{PT} |$, where $\rho^{PT}$ denotes the partial transpose of the density matrix $\rho$. Equidivision occurs with Hilbert-Schmidt measure, but not otherwise.

Further, in secs.~\ref{Bures1}, \ref{Bures2} and \ref{Bures3},  we employ a number of innovative approaches to increase our understanding/estimation of the separability probability with respect to the fundamental Bures (minimal monotone) measure \cite{slater2002priori,slater2005silver}. We also study interpolations of separability probabilities between Hilbert-Schmidt and operator monotone (Bures, $\sqrt{x}$) measures in these sections.

In sec.~\ref{LovasAndaiExtension} and the detailed appendices of Charles Dunkl (sec.~\ref{CFD}) below, we  extend to  induced measure, the ``master Lovas-Andai formula'' for generalized two-qubit states
\begin{equation} \label{MasterFormula}
\tilde{\chi}_{d,0}(\varepsilon) \equiv  \tilde{\chi_d}(\varepsilon)= \frac{\varepsilon ^d \Gamma (d+1)^3 \,
   _3\tilde{F}_2\left(-\frac{d}{2},\frac{d}{2},d;\frac{d}{2}+1,\frac{3
   d}{2}+1;\varepsilon ^2\right)}{\Gamma \left(\frac{d}{2}+1\right)^2},  
\end{equation}
reported in \cite[sec. VIII.A]{slater2017master}.
(Here, $\varepsilon$ is a singular-value ratio, and $d$ the random-matrix Dyson index.)
We will find that the extended formula $\tilde{\chi}_{d,k}(\varepsilon)$ naturally breaks into the sum of two parts, both parts simply reducing to one-half of $\tilde{\chi}_{d,0}(\varepsilon)$ when the induced measure index $k$ equals zero, that is, for the case of Hilbert-Schmidt measure. (The original ``unextended'' formula (\ref{MasterFormula}) applies in that specific case, as well as when the measure employed is that based on the operator monotone function $\sqrt{x}$--as Lovas and Andai showed \cite[sec. 4]{lovas2017invariance}.) 

One of these two parts, which both together sum to  
$\tilde{\chi}_{d,k}(\varepsilon)$, is
\begin{equation} \label{onehalf}
J\left(  \varepsilon\right)     = \frac{(1)_d \varepsilon ^d \Gamma (d+k+1)^2 \,
   _3\tilde{F}_2\left(\frac{d}{2},d,-\frac{d}{2}-k;\frac{d}{2}+1,\frac{3
   d}{2}+k+1;\varepsilon ^2\right)}{d \Gamma \left(\frac{d}{2}\right) \Gamma
   \left(\frac{d}{2}+k+1\right)}.   
\end{equation}
The other complementary part of the extended  master formula takes the form
\begin{align} \label{otherhalf}
I\left(  \varepsilon\right)    & =\frac{2\Gamma\left(  1+d+k\right)
^{2}\Gamma\left(  \frac{d}{2}\right)  \Gamma\left(  d\right)  k!\varepsilon
^{d}}{d\Gamma\left(  \frac{d}{2}\right)  ^{3}\Gamma\left(  1+\frac{d}%
{2}+k\right)  \Gamma\left(  1+k\right)  \Gamma\left(  \frac{3d}{2}+1+k\right)
}\\
& \times\sum_{j=0}^{k}\frac{\left(  d\right)  _{k-j}}{\left(  \frac{d}%
{2}+1\right)  _{k-j}}\left(  1-\varepsilon^{2}\right)  ^{k-j}~_{3}F_{2}\left(
%
%TCIMACRO{\QATOP{-d/2-j,d/2,d+k-j}{1+d/2+k-j,1+k+3d/2}}%
%BeginExpansion
\genfrac{}{}{0pt}{}{-d/2-j,d/2,d+k-j}{1+d/2+k-j,1+k+3d/2}%
%EndExpansion
;\varepsilon^{2}\right)  .
\end{align}
Then, to obtain a generalized ($d$) two-qubit separability with respect to an induced measure corresponding to $k$, one implements the formula
\begin{equation} \label{interpolationGeneral}
\mathcal{P}_{sep/PPT}(d,k)= \frac{\int_{-1}^1  \int_{-1}^x\tilde{\chi}_{d,k} (\sqrt{\frac{1-x}{1+x}}  \sqrt{\frac{1+y}{1-y}})(1-x^2)^{d+k} (1-y^2)^{d+k} (x-y)^d \mbox{d} y \mbox{d} x}{\int_{-1}^1  \int_{-1}^x(1-x^2)^{d+k} (1-y^2)^{d+k} (x-y)^d \mbox{d} y \mbox{d} x}.
\end{equation}
For example, $d=2, k= 1$, with $\tilde{\chi}_{2,1}(\varepsilon)=\frac{1}{4} \varepsilon ^2 \left(3-\varepsilon ^2\right)^2$, yields $\frac{61}{143} =\frac{61}{11 \cdot 13} \approx 0.426573$, as previously reported \cite[eq. (2)]{slater2015formulas} \cite[eq. (4)]{slater2016formulas}. If we replace the $d+k$ terms in (\ref{interpolationGeneral}) by $-\frac{d}{4}+k$, we obtain the separability/PPT-probability function in the operator monotone function $\sqrt{x}$ case.

In our concluding remarks (sec.~\ref{Concluding}), we discuss, among other things, the establshed relevance of {\it Casimir invariants} \cite{slater2016invariance,gerdt20116,byrd2003characterization} to the study of separability probabilities in sec.~\ref{Casimir}.

\section{Qubit-qutrit analyses} \label{qubitqutritsection}
For the two-qubit ($N=4$) case, using (\ref{AndaiComplex}), we have for the 15-dimensional volume of two-qubit states,
\begin{equation} \label{CN4one}
V_{Lebesgue}^\mathbb{C}(4)    = \frac{\pi ^6}{108972864000} =\frac{\pi^6}{2^9 \cdot 3^5 \cdot 5^3 \cdot 7^2 \cdot 11 \cdot 13}.
\end{equation}
Multiplying this by the associated well-supported separability probability $\frac{8}{33}$, we have 
\begin{equation} \label{CN4two}
V_{Sep/Lebesgue}^\mathbb{C}(4)    = \frac{\pi ^6}{449513064000} =\frac{\pi^6}{2^6 \cdot 3^6 \cdot 5^3 \cdot 7^2 \cdot 11^2 \cdot 13}.
\end{equation}
So, we see that the same primes (but to different powers) occur in  the denominators of both volume formulas, while the two numerators remain the same.

Let us now see if we can find analogous behavior in the bipartite ($2 \times 3$) qubit-qutrit ($N=6$) case. On the basis of 2,900,000,000 randomly-generated 
qubit-qutrit density matrices \cite[sec. 4]{al2010random},\cite{zyczkowski2011generating}, we obtained an estimate (with 78,293,301
separable density matrices found) for an associated separability probability of 0.026997690.
(We incorporate the results for one hundred million density matrices reported in  \cite[sec. II]{slater2016invariance}.
Milz and Strunz give a confidence interval of $0.02700 \pm 0.00016$ for this probability \cite[eq. (33)]{milz2014volumes}. A [narrower] $95\%$ confidence interval
based  on our just indicated calculation is $[0.0269918, 0.0270036]$. In the decade-old 2007 paper 
\cite[sec.  10.2]{slater2007dyson}, where the $\frac{8}{33}$ two-qubit conjecture was first formulated, we had advanced a hypothesis of $\frac{32}{1199} =\frac{2^5}{11 \cdot 109} \approx 0.0266889$--subsequently rejected as lying outside the confidence interval reported in \cite[sec.II]{slater2016invariance}. An effort to extend the Lovas-Andai form of analysis
\cite{lovas2017invariance} to the qubit-qutrit and rebit-retrit states has been reported in \cite[App. B]{slater2017master}--but, it now seems, that the separability probabilities reported there were subject to some small, yet not explained,  systematic error.) 

We have for the 35-dimensional volume of qubit-{\it qutrit} states,
\begin{equation}
V_{Lebesgue}^\mathbb{C}(6)    =     \frac{\pi ^{15}}{298991549953302804677854494720000000} =
\end{equation}
\begin{displaymath}
\frac{\pi^{15}}{2^{24} \cdot 3^{12} \cdot 5^7 \cdot 7^5 \cdot 11^3 \cdot 13^2 \cdot 17^2 \cdot 19 \cdot 23 \cdot 29 \cdot 31}.
\end{displaymath}
Now, we have found that, for  a separability probability of 
\begin{equation} \label{qubitqutritconjecture}
\frac{27}{1000} =\frac{3^3}{2^3 \cdot 5^3} =(\frac{3}{10})^3= 0.027,
\end{equation}
we would have the corresponding volume of separable states,
\begin{equation}
V_{Sep/Lebesgue}^\mathbb{C}(6)    =    \frac{\pi ^{15}}{298991549953302804677854494720000000} =
\end{equation}
\begin{displaymath}
\frac{\pi^{15}}{2^{27} \cdot 3^{9} \cdot 5^{10} \cdot 7^5 \cdot 11^3 \cdot 13^2 \cdot 17^2 \cdot 19 \cdot 23 \cdot 29 \cdot 31}.
\end{displaymath}
So, we see that only the powers of 2, 3 and 5 are modified,  closely following the pattern observed ((\ref{CN4one})-(\ref{CN4two})) in the $N=4$ scenario.

A  point to note here is that in the $4 \times 4$ density matrix setting, the positivity of  the determinant of the partial transpose is sufficient for separability to hold \cite{augusiak2008universal}, but not so in the $6 \times 6$ setting. (The partial transpose for an entangled state might  have {\it two} negative eigenvalues \cite{johnston2013non}--but not, we note, in the corresponding $X$-states scenario \cite[App. A]{mendoncca2017maximally}.) The possibility of a pair of  negative eigenvalues renders it less directly useful to employ determinantal moments of density matrices and of their partial transposes to approximate underlying separability probability distributions, as was importantly done in 
\cite{slater2012moment,slater2013concise}, using ``moment-based density approximants'' \cite{provost2005moment}, based on Legendre polynomials.

In App.~\ref{Qubitqutritinduced}, we report parallel qubit-qutrit analyses employing, rather than the Hilbert-Schmidt measure ($k=0$), induced measures ($k \neq 0$). However, at this stage, we do not advance specific conjectures as to their corresponding separability probabilities.

\section{Qubit-qudit  analyses}
\subsection{$2 \times 4$ case}
In \cite[sec. III.B]{slater2016invariance}, we reported a PPT (positive partial transpose) probability, for the $8  \times 8$ density matrices (viewed as $2 \times 4$ systems) of
0.0012923558, based on 348,500,000 random realizations \cite{al2010random}, 450,386 of them having PPT's.  The associated $95\%$ confidence interval is $[0.0128863, 0.0129609]$. (Milz and Strunz did report  an estimate of 0.0013 \cite[Fig. 5]{milz2014volumes}, but gave no 
associated confidence interval or sample size.)

Let us interestingly note that the numerator of the ($2 \times 2$) two-qubit separability probability $\frac{8}{33}$ is $2^3$, and of our ($2 \times 3$) qubit-qutrit conjecture, $\frac{27}{1000}$, it is $3^3$. So, we might speculate that in this $2 \times 4$ setting, the numerator of the PPT-probability would be $4^3 = 64$. Proceeding as in sec.~\ref{qubitqutritsection}, using the Andai Lebesgue volume formula (\ref{AndaiComplex}), with $N=8$,  we did find a 
candidate PPT-probability (but with a numerator of $4^2$) of  $\frac{16}{12375} =\frac{4^2}{3^2 \cdot 5^3 \cdot 11} \approx 0.001292929$.

It would be of interest to try to examine the issue of what proportion of the  $2 \times 4$ PPT-states are, in fact, separable (cf. \cite[sec. IV]{zyczkowski1999volume}), as opposed to bound entangled, using the methodologies recently presented in \cite{qian2018state,Li2018}.
\subsection{$2 \times 5$ case}
We generated 621,000,000 $10 \times 10$ random such density matrices. Of these, 16,205 had a PPT, giving us as
estimated PPT-probability of 0.0000260950. A possible exact value, in line with the noted numerator phenomenon, might be
$\frac{125}{4790016} = \frac{5^3}{2^8 \cdot 3^5 \cdot 5  \cdot 7 \cdot 11} \approx 0.0000260959$.

In a supplementary analysis, for thirty-six  million $10 \times 10$ density matrices, again randomly generated with respect to Hilbert-Schmidt measure, we  found 950 to have PPT's. Among these,  {\it none} passed the further test for {\it separability from spectrum} \cite[Thm. 1]{johnston2013separability}. That is, for 
none, in this 10-dimensional setting, did the condition hold that $\lambda_1< \lambda_9 +2 \sqrt{\lambda_8 \lambda_{10}}$, where the $\lambda$'s are the ten ordered eigenvalues of 
the density matrices, with $\lambda_1$ being the greatest (cf. \cite[App. A]{slater2017master}).

\section{Rebit-retrit analysis} \label{Rebitretritanalysis}
For the two-rebit ($l=2, N=4$) case, we have for the 9-dimensional volume of two-rebit states,
\begin{equation} \label{RN4one}
V_{Lebesgue}^\mathbb{R}(2)    = \frac{\pi ^4}{967680} =\frac{\pi^4}{2^{10} \cdot 3^3 \cdot 5^ \cdot 7}.
\end{equation}
Multiplying this by the established (by Lovas and Andai \cite[Cor. 2]{lovas2017invariance})  separability probability $\frac{29}{64}$, we find
\begin{equation} \label{RN4two}
V_{Sep/Lebesgue}^\mathbb{R}(2)    =\frac{29 \pi ^4}{61931520} =\frac{29 \pi^4}{2^{16} \cdot 3^3 \cdot 5 \cdot 7}.
\end{equation}
So, we see that only the power of 2 is modified, and the exponents of 3, 5 and 7 in the denominators are unchanged.

Let us now see if we can find analogous simple behavior in the rebit-retrit ($l=3, N=6)$ case. On the basis of 3,530,000,000 randomly-generated 
rebit-retrit density matrices \cite[sec. 4]{al2010random}, with respect to Hilbert-Schmidt measure, we obtained an estimate (with 462,704,503
separable density matrices found) for an associated separability probability of 0.1310777629. The associated 
$95\%$ confidence interval is $[0.131067, 0.131089]$.

We have
for the total (20-dimensional) volume of both separable and entangled rebit-retrit states,
\begin{equation}
V_{Lebesgue}^\mathbb{R}(3)    =    \frac{\pi ^9}{1730063650258944000} =\frac{\pi^{9}}{2^{23} \cdot 3^{6} \cdot 5^3 \cdot 7^2 \cdot 11 \cdot 13 \cdot 17 \cdot 19}.
\end{equation}
Then we found that, assuming a very closely fitting separability probability of 
\begin{equation}
\frac{860}{6561} =\frac{2^2 \cdot 5 \cdot 43}{3^8} \approx 0.1310775796,
\end{equation}
we would have
\begin{equation}
V_{Sep/Lebesgue}^\mathbb{R}(3)    = \frac{859 \pi ^9}{11338145138337015398400} =
\end{equation}
\begin{displaymath}
\frac{859 \pi^{9}}{2^{38} \cdot 3^{6} \cdot 5^{2} \cdot 7^2 \cdot 11}.
\end{displaymath}
So, we see that only the powers of 2 and now of 5 in the denominator are again modified.

We note, in the case of $\frac{860}{6561}$, a possible parallism with the conjectured numerators in the qubit-qudit $2 \times m$ cases being powers of $m$, while now in the real cases, the denominators would be.

Let us further observe that the two-rebit counterpart to the two-qubit induced measure formula (\ref{qubitinduced}) 
is \cite[eq. (4)]{slater2015formulas} \cite[eq. (6)]{slater2016formulas},
\begin{equation} \label{rebitinduced}
 P^{2-rebits}_k=1-\frac{4^{k+1} (8 k+15) \Gamma{(k+2)} \Gamma{(2 k+\frac{9}{2}})}{\sqrt{\pi} \Gamma{(3 k+7)}}.
\end{equation}
For the Hilbert-Schmidt $k=0$ case, we obtain the formally demonstrated result, $\frac{29}{64}$.
\section{Rebit-redit analyses}
\subsection{$2 \times 4$ case}
We generated  490,000,000 $8 \times 8$ random density matrices with respect to Hilbert-Schmidt ($k=0$) measure. Of these, 12,022,129 had a PPT, giving us as
estimated PPT-probability of 0.02453496. A good fit is provided by $\frac{201}{8192} = \frac{3 \cdot 67}{2^{13}} \approx 0.0245361$. We note, in light of 
our previous analyses, that the denominator $2^{13}$ is obviously also expressible as $4^{6+\frac{1}{2}}$.
\subsection{$2 \times 5$ case}
We generated  620,000,000 $10 \times 10$ random density matrices with respect to Hilbert-Schmidt ($k=0$) measure. Of these, 1,844,813 had a PPT, giving us as
estimated PPT-probability of 0.002975505. A well-fitting candidate PPT-probability is $\frac{29058}{9765625}= \frac{2 \cdot 3 \cdot 29 \cdot 167}{5^{10}} \approx 0.00297554$.
\section{Quaternionic formulas} \label{quaternionicsection}
Let us also note that in \cite[Thm. 3]{andai2006volume}, Andai presented the quaternionic ($\mathbb{H}$) counterpart,
\begin{equation}
 V_{Lebesgue}^\mathbb{H}(N)=   \frac{\pi ^{N^2-N} (2 N-2)!}{\left(2 N^2-N-1\right)!} \prod_{i=1}^{N-2}(2 i)!,
\end{equation}
of the complex ($\mathbb{C}$) and real ($\mathbb{R}$) volume formulas ((\ref{AndaiComplex}), (\ref{AndaiReal})) given above.
We, then have for the 27-dimensional volume of the two-quaterbit states,
\begin{equation}
 V_{Lebesgue}^\mathbb{H}(4)=   \frac{\pi ^{12}}{315071454005160652800000} =
\end{equation}
\begin{displaymath}
\frac{\pi^{12}}{2^{15} \cdot 3^{10} \cdot 5^5 \cdot 7^3 \cdot 11^2 \cdot 13^2 \cdot 17 \cdot 19 \cdot 23}.
\end{displaymath}
Multiplying by the well-supported separability/PPT-probability value (cf. \cite{hildebrand2008semidefinite}) of $\frac{26}{323}$, 
we find
\begin{equation}
 V_{Sep/Lebesgue}^\mathbb{H}(4)=  \frac{\pi ^{12}}{3914156909371803494400000} =
\end{equation}
\begin{displaymath}
\frac{\pi^{12}}{2^{14} \cdot 3^{10} \cdot 5^5 \cdot 7^3 \cdot 11^2 \cdot 13^2 \cdot 17 \cdot 19 \cdot 23}.
\end{displaymath}
We would like to extend our earlier analyses above to the (50-dimensional) ``quaterbit-quatertrit'' setting. But it is 
clearly a challenging problem
to suitably generate sufficient numbers of random $6 \times 6$ density matrices of such a nature (cf. \cite[App. C]{slater2017master}  of C. Dunkl), in order to obtain the needed probability estimates to attempt to closely fit.

We further note that the two-quaterbit counterpart to the two-qubit  and two-rebit induced measure formulas (\ref{qubitinduced}) and (\ref{rebitinduced}), 
is \cite[eq. (3)]{slater2015formulas} \cite[eq. (5)]{slater2016formulas},
\begin{equation} \label{quaterbitinduced}
 P^{2-quaterbits}_k=1-\frac{4^{k+6} (k(k(2 k(k+21)+355)+1452)+2430)\Gamma{(k+\frac{13}{2})} \Gamma{(2 k+15)}}{3 \sqrt{\pi} \Gamma{(3 k+22)}}.
\end{equation}
\section{Two-qutrits} \label{twoqutritssection}
In \cite[sec. III.A]{slater2016invariance}, we reported an estimated Hilbert-Schmidt PPT-probability  of 0.00010218 for the two-{\it qutrit} states \cite{baumgartner2006state}, based on one hundred million randomly generated density matrices. Following the framework employed above, we have made some limited efforts to suggest a possible corresponding exact probability. It is by no means clear, however, if one can hope to extend ($2 \times m$) qubit-based results to a fully qutrit setting. (In any case, we did find that the rational value $\frac{323}{3161088}= \frac{17 \cdot 19}{2^{10} \cdot 3^2 \cdot 7^3} \approx 0.00010218$ provides an exceptional fit.) It would be of interest to try to examine the issue of what proportion of the two-qutrit PPT-states are, in fact, separable (cf. \cite{zyczkowski1998volume}) using the methodologies recently presented in \cite{qian2018state,Li2018}.
\section{$X$-states} \label{Xstates}
We have found that 
the Hilbert-Schmidt separability/PPT-probabilities for both the ($6 \times 6$) rebit-retrit and ($9 \times 9$) two-retrit 
$X$-states to be, somewhat remarkably, equal to that previously reported \cite[p. 3]{dunkl2015separability} for the lower-dimensional ($4 \times 4$) two-rebit $X$-states, that is, $\frac{16}{3 \pi^2} \approx 0.54038$. (The HS two-qubit $X$-states 
separability probability has previously
been shown to equal $\frac{2}{5} = 0.4$ \cite[eq. (22)]{milz2014volumes} \cite[p. 3]{dunkl2015separability}. In \cite[App. B]{slater2017master}, we noted that Dunkl had concluded that the same separability probability did hold for the qubit-qutrit $X$-states.)  

We have also found that the equality between two-rebit and rebit-retrit $X$-states separability probabilities continues to hold when the Hilbert-Schmidt measure (the case $k=0$) is 
generalized to the class of induced measures \cite{zyczkowski2001induced,bengtsson2017geometry}. In Fig. 1, we present two equivalent formulas that yield these induced measure two-rebit, rebit-retrit separability probabilities.
\begin{figure}
\includegraphics[page=1,scale=0.9]{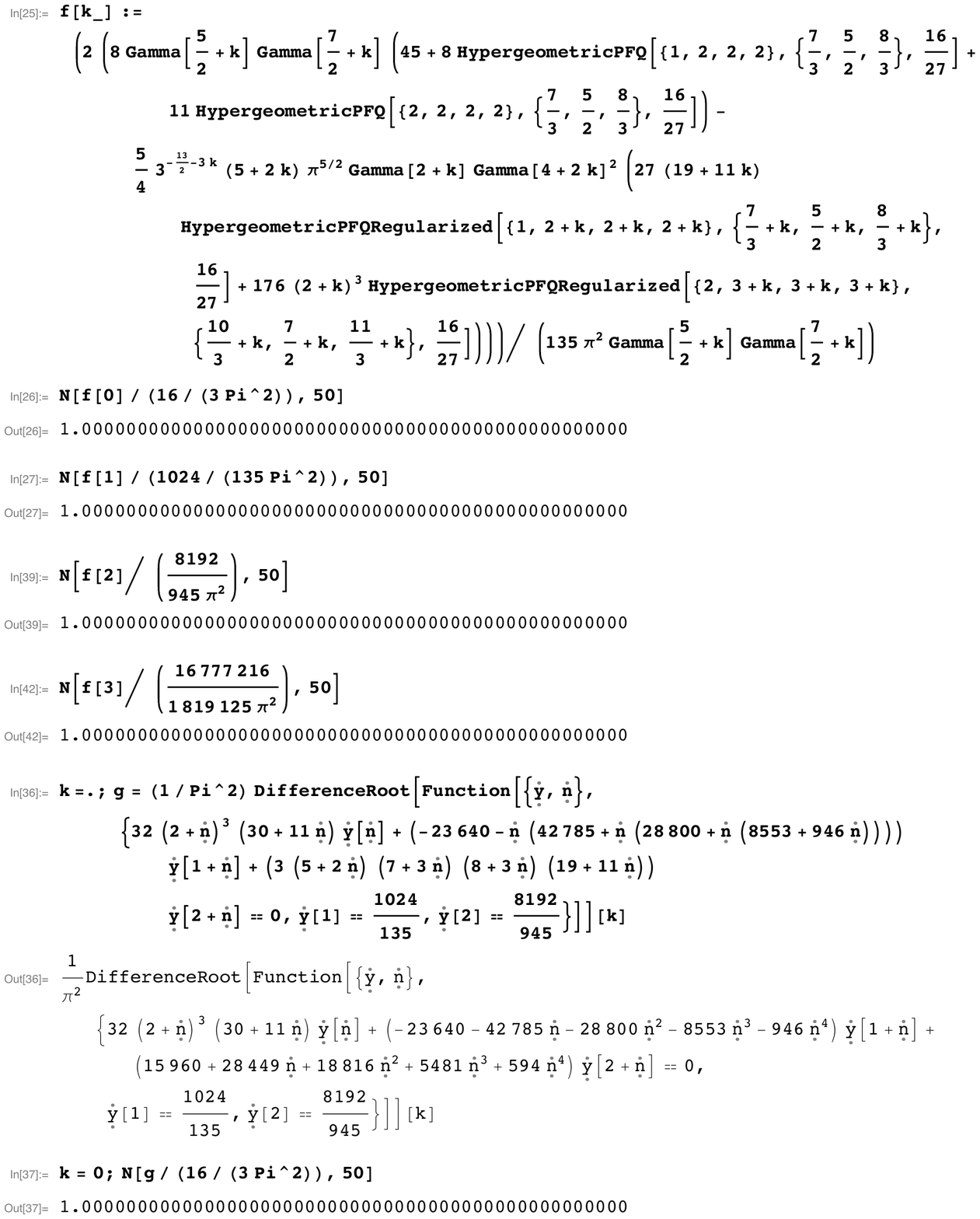}
\label{fig:Induced}
\end{figure}
\begin{figure}
\includegraphics[page=2,scale=0.9]{TwoRebitXstatesFormulas.pdf}
\caption{Two-rebit and rebit-retrit $X$-states induced separability probability formulas}
\end{figure}
\section{Determinantal equipartition of Hilbert-Schmidt separability probabilities} \label{Determinantal}
In \cite{slater2016formulas}, a formula (eq. (62) there),
\begin{align*} \label{Hyper1}
Q\left(  k,\alpha\right)   &  =\frac{1}{2}-\frac{\alpha\left(  20\alpha+8k+11\right)
\Gamma\left(  5\alpha+2k+2\right)  \Gamma\left(  3\alpha+k+\frac{3}{2}\right)
\Gamma\left(  2\alpha+k+\frac{3}{2}\right)  }{4\sqrt{\pi}\Gamma\left(
5\alpha+2k+\frac{7}{2}\right)  \Gamma\left(  \alpha+k+2\right)  \Gamma\left(
4\alpha+k+2\right)  }\\
&  \times~_{6}F_{5}\left(
%TCIMACRO{\QATOP{1,\frac{5}{2}\alpha+k+1,\frac{5}{2}\alpha+k+\frac{3}{2},2\alpha+k+\frac{3}%
%{2},3\alpha+k+\frac{3}{2},\frac{5}{2}\alpha+k+\frac{19}{8}}{\alpha+k+2,4\alpha+k+2,\frac{5}%
%{2}\alpha+k+\frac{7}{4},\frac{5}{2}\alpha+k+\frac{9}{4},\frac{5}{2}\alpha+k+\frac{11}{8}}}%
%BeginExpansion
\genfrac{}{}{0pt}{}{1,\frac{5}{2}\alpha+k+1,\frac{5}{2}\alpha+k+\frac{3}{2}%
,2\alpha+k+\frac{3}{2},3\alpha+k+\frac{3}{2},\frac{5}{2}\alpha+k+\frac{19}{8}%
}{\alpha+k+2,4\alpha+k+2,\frac{5}{2}\alpha+k+\frac{7}{4},\frac{5}{2}\alpha+k+\frac{9}{4},\frac
{5}{2}\alpha+k+\frac{11}{8}}%
%EndExpansion
;1\right)
\end{align*}
was given for that part of the {\it total} induced-measure separability probability, $P(k,\alpha)$, for 
generalized (real [$\alpha=\frac{1}{2}$], complex [$\alpha=1$], quaternionic [$\alpha=2$],...) two-qubit states for which the determinantal inequality $|\rho^{PT}| > |\rho|$ holds. For the 
Hilbert-Schmidt case ($k=0$) the formula yielded $\frac{P(0,\alpha)}{2}$. (In \cite{szarek2006structure}--making use of Archimedes’ formula for the 
volume of a D-dimensional pyramid of unit height, and of ``pyramid-decomposability''--it was shown that the 
Hilbert-Schmidt separability 
probability of the minimially degenerate states is, likewise, {\it one-half} of that of the nondegenerate states.) Our simulations appear to indicate that this
equal division of separability probabilities continues to hold in the rebit-retrit and qubit-qutrit cases. Based on 96,350,607 separable rebit-retrit cases, the estimated proportion for which  $|\rho^{PT}| > |\rho|$ held was 0.499987, and based on 9,450,652 separable qubit-qutrit cases, the companion estimated proportion  was 0.500033. 

However, in the non-Hilbert-Schmidt analysis in sec.~\ref{k1K7}, pertaining to the qubit-qutrit states with induced measure parameters $k=1$, $K=7$, we
found that the determinantal inequality $|\rho^{PT}| > |\rho|$ held in only $31.17\%$ (and not $50\%$) of the cases. Further, in sec.~\ref{k2K8}, pertaining to the qubit-qutrit states with induced measure parameters $k=2$, $K=8$, the corresponding percentage was $22.63\%$.
\section{Lovas-Andai-type formula for operator monotone measures}
It is of clear interest to extend the forms of analysis above to measures of interest other than the Hilbert-Schmidt
(flat/Euclidean/Frobenius) one, in particular perhaps, the Bures (minimal monotone) one (cf. \cite{slater2000exact,vsafranek2017discontinuities}). In these regards, in \cite[sec. VII.C]{slater2017master}, we recently reported, building upon analyses of Lovas and Andai \cite[sec. 4]{lovas2017invariance}, a two-qubit separability probability equal to $1 -\frac{256}{27 \pi^2} =1- \frac{2^8}{3^3 \pi^2} \approx 0.0393251$. This was based on another 
(of the infinite family of) operator monotone functions, namely 
$\sqrt{x}$. (Let us note that the complementary ``entanglement probability'' is simply $\frac{256}{27 \pi^2} \approx 0.960675$. There appears to be no intrinsic reason
to prefer/privilege one of these two forms of probability to the other [cf. \cite{dunkl2015separability}].  We observe that the upper-limit-of-integration variable  denoted $K_s:=\frac{(s+1)^{s+1}}{s^s}$, equalling $\frac{256}{27} =\frac{4^4}{3^3}$, for $s=3$, is frequently employed in the Penson-{\.Z}yczkowski paper, ``Product of Ginibre matrices: Fuss-Catalan and Raney 
distributions'' \cite[eqs. (2),  (3)]{penson2011product}.)  
\subsection{Operator monotone measure $\sqrt{x}$}
Within the Lovas-Andai framework, employing the previously reported two-qubit ``separability function" $\tilde{\chi}_2(\varepsilon)=\frac{1}{3} \varepsilon^2 (4 -\varepsilon^2)$ 
\cite[eq. (42)]{slater2017master}, we can interpolate between the computation for the noted ($\sqrt{x}$) operator monotone separability probability of $1- \frac{256}{27 \pi^2}$ ($\eta=-\frac{1}{2}$) 
and the computation for  the Hilbert-Schmidt counterpart of $\frac{8}{33}$ ($\eta=2$).
This is accomplishable using the formula (Fig.~\ref{fig:InterpolatedHSmonotone}),
\begin{equation} \label{interpolation}
u(\eta) =   \frac{\int_{-1}^1  \int_{-1}^x\tilde{\chi}_2 (\sqrt{\frac{1-x}{1+x}}  \sqrt{\frac{1+y}{1-y}})(1-x^2)^\eta (1-y^2)^\eta (x-y)^2 \mbox{d} y \mbox{d} x}{\int_{-1}^1  \int_{-1}^x(1-x^2)^\eta (1-y^2)^\eta (x-y)^2 \mbox{d} y \mbox{d} x}= 
\end{equation}
\begin{displaymath}
-\frac{-3 \eta  (\eta +4) ((\eta -6) \eta -15)+\frac{16^{2 \eta +3} ((\eta -10) \eta -5)
   \Gamma \left(\eta +\frac{3}{2}\right) \Gamma \left(\eta +\frac{5}{2}\right)^3}{\pi ^2
   (2 \eta +3) \Gamma (4 \eta +5)}+60}{3 (\eta -1)^2 \eta ^2},  
\end{displaymath}
where $u(-1)=0$ and $u(1)=\frac{41471}{105}-40 \pi ^2 \approx 0.177729$. 
(It is not now clear if any particularly meaningful measure-theoretic/quantum-information-theoretic interpretation can be given to these interpolated values.)
\begin{figure}
    \centering
    \includegraphics{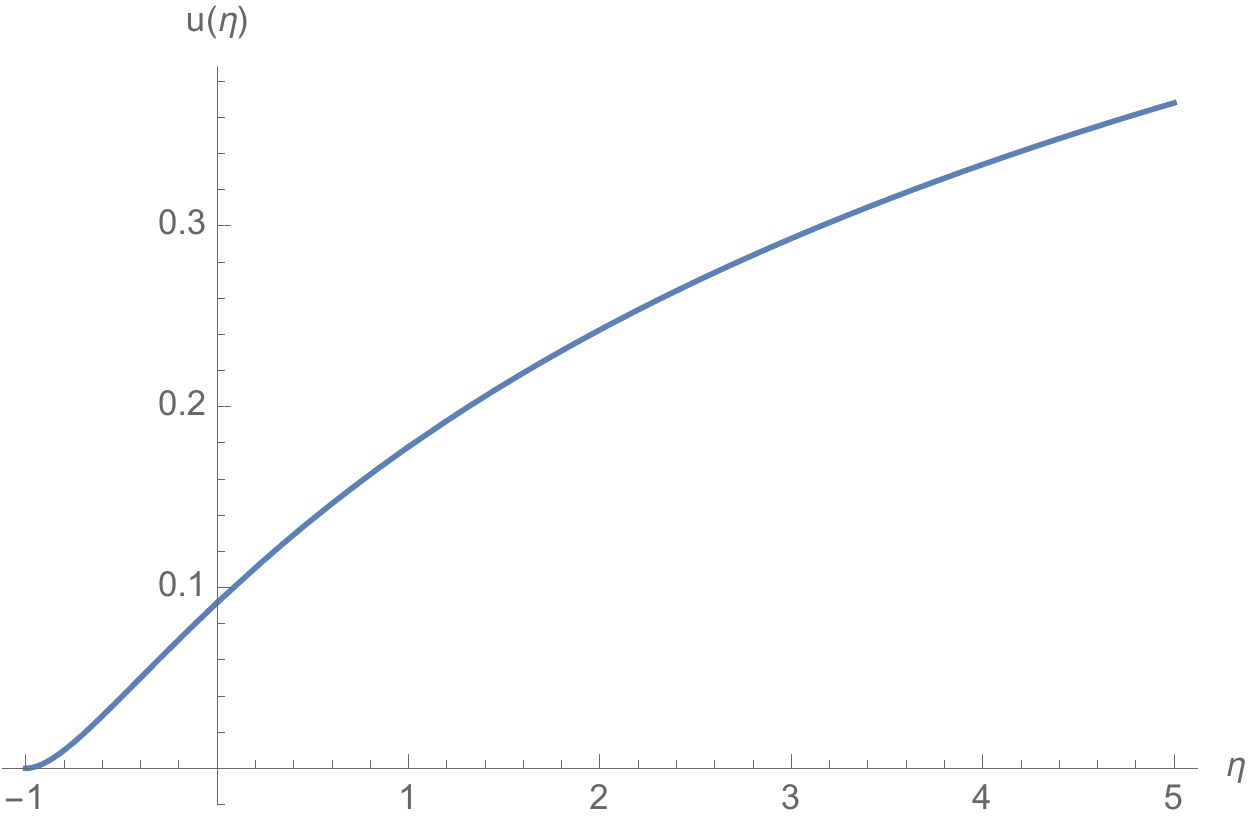}
    \caption{Two-qubit separability probability function $u(\eta)$, given by eq.  (\ref{interpolation}), interpolating between the  ($x \rightarrow \sqrt{x}$) operator monotone result ($\eta=-\frac{1}{2}$) 
    of $1- \frac{256}{27 \pi^2}$ and the Hilbert-Schmidt result ($\eta=2$) of 
    $\frac{8}{33}$. The average of the two is attained at $\eta=0.53544108.$}
    \label{fig:InterpolatedHSmonotone}
\end{figure}
``We argue that from the separability probability point of view, the main difference between the Hilbert-Schmidt measure and the volume form
generated by the operator monotone function $x \rightarrow \sqrt{x}$ is a special distribution on the unit ball in operator norm of 
$2 \times 2$ matrices, more precisely in the Hilbert-Schmidt case one faces a uniform distribution on the whole unit ball and for 
monotone volume forms one obtains uniform distribution on the surface of the unit ball'' \cite[p. 2]{lovas2017invariance}. 
\subsection{Bures measure} \label{Bures1}
Let us, in these regards, also interestingly note that Osipov, Sommers and {\.Z}yzckowski remarked that one can readily interpolate between the generation of random
density matrices with respect to Hilbert-Schmidt and Bures measures \cite[eq. (24)]{al2010random} (cf. \cite[eq. (33)]{borot2012purity}). Their formula took the form
\begin{equation}  \label{JointBuresHSformula}
\rho_x= \frac{(y \mathbb{I} +x U) AA^{\dagger}(y \mathbb{I}+x U^{\dagger})}{\mbox{Tr} (y \mathbb{I}+x U) A A^{\dagger} (y \mathbb{I}   +x U^{\dagger})},
\end{equation}
where $y=1-x$ and $x=0$ yields the Hilbert-Schmidt case, and $x=\frac{1}{2}$, the Bures instance. Let us now perform the change-of-variable
$\eta=2 -5 x$ in eq. (\ref{interpolation}). This allows us to present Fig.~\ref{fig:TwoInterpolations}, in which we simultaneously show 
the Hilbert-Schmidt/Bures interpolation--based on 550,000 sets of 101 randomly-generated density matrices--together with the Hilbert-Schmidt-operator monotone function $x \rightarrow \sqrt{x}$ interpolation, shown using
the $\eta$ variable in Fig.~\ref{fig:InterpolatedHSmonotone}.
\begin{figure}
    \centering
    \includegraphics{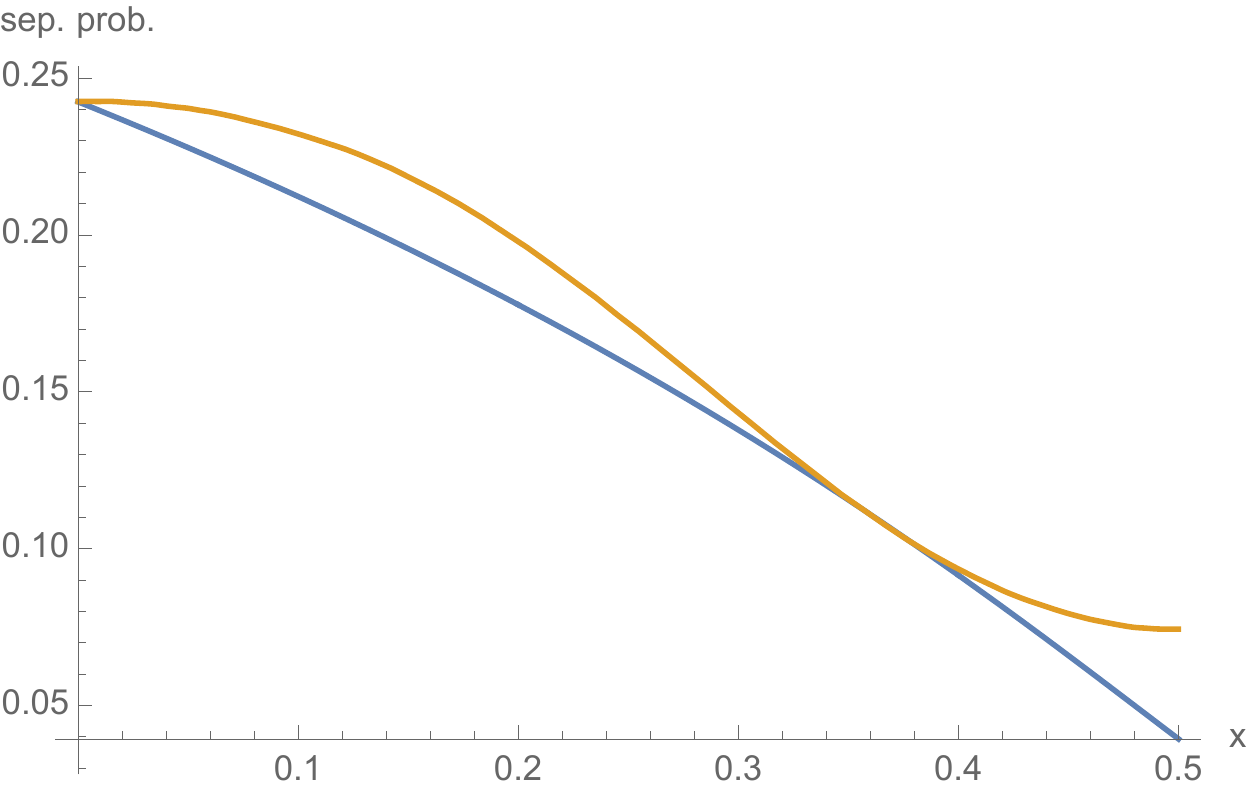}
    \caption{Joint plot of the Hilbert-Schmidt/Bures-based on 550,000 sets of 101 randomly-generated density matrices--and Hilbert-Schmidt/operator monotone $x \rightarrow \sqrt{x}$ interpolations. At $x=\frac{1}{2}$, the former curve has an estimated value of 0.073 and the latter, the 
    lower exact value $1- \frac{256}{27 \pi^2} \approx 0.0393251$. The curves intersect at $x=0.380241$.}
    \label{fig:TwoInterpolations}
\end{figure}
In Fig.~\ref{fig:OneInterpolation}, we more directly interpolate between the HS and Bures two-qubit separability probabilities employing (\ref{JointBuresHSformula}).
\begin{figure}
    \centering
    \includegraphics{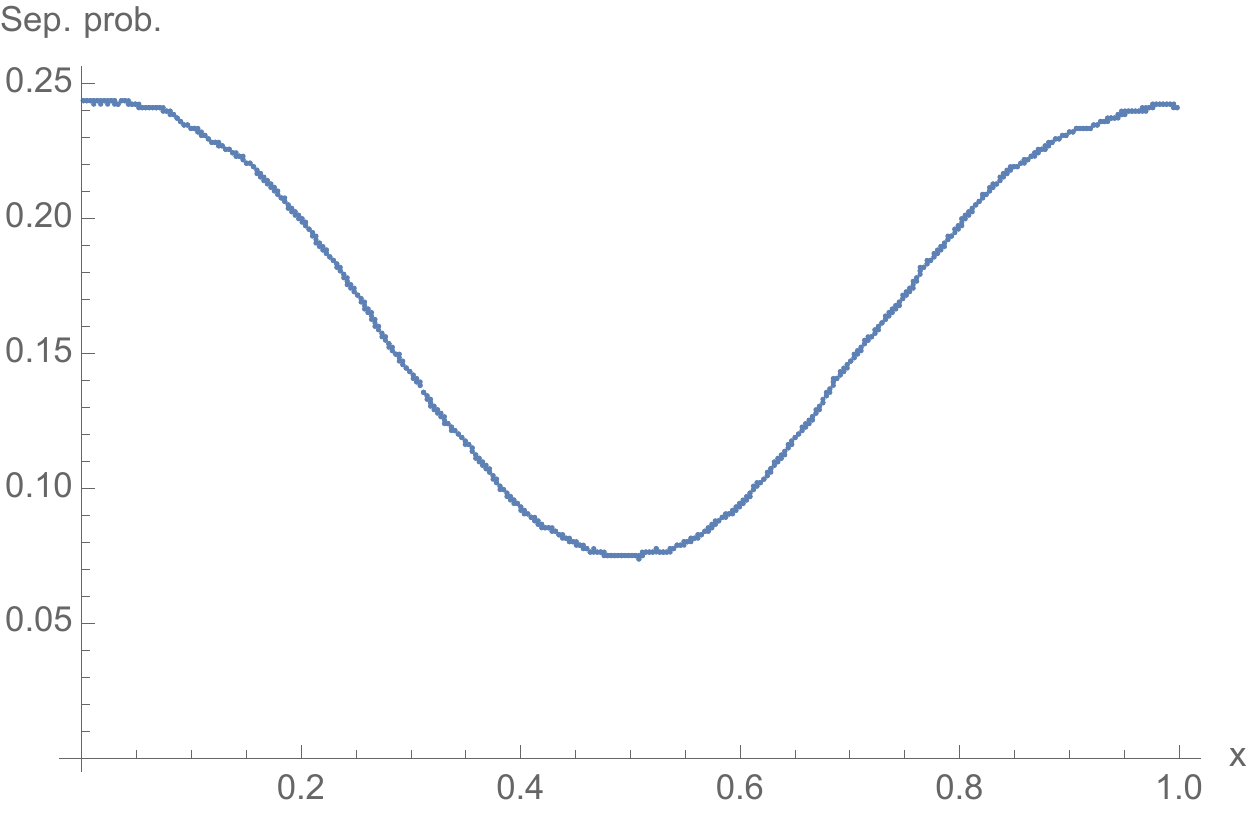}
    \caption{Cosine-like interpolating curve between the Hilbert-Schmidt separability probability of $\frac{8}{33}$ and the (yet not exactly known) Bures separability probability ($x=\frac{1}{2}$) based upon (\ref{JointBuresHSformula}). One million random realizations--with two thousand points in $x \in [0,1]$ per realization--were employed.}
    \label{fig:OneInterpolation}
\end{figure}

Perhaps it is not too unreasonable to anticipate that the Bures two-qubit separability probability 
(associated with the operator monotone function $\frac{1+x}{2}$)--though perhaps not so much the two-rebit case in light of \cite[Thm. 4]{lovas2017invariance}--will also be found to assume a strikingly elegant form. (In 
\cite{slater2002priori}, we had conjectured a value of $\frac{8}{11 \pi^2} \approx .0736881$. But it was later proposed   
in \cite{slater2005silver}, in part motivated by the lower-dimensional {\it exact} results reported in \cite{slater2000exact}, that the value might be $\frac{1680 \sigma_{Ag}}{\pi^8} \approx 0.07334$, where $\sigma_{Ag}= \sqrt{2}-1 \approx 0.414214$ is the ``silver mean''. Both of these studies \cite{slater2002priori,slater2005silver}   were conducted using quasi-Monte Carlo procedures, before the reporting of the  Ginibre-ensemble methodology for generating density matrices, random with respect to the Bures measure \cite{al2010random}. In \cite[Table 1]{slater2005silver} an estimate of 0.0732398 was reported, based on two billion quasi-Monte Carlo [scrambled Tezuka-Faure] points.) 
In 
\cite[sec. VII]{slater2016invariance}, it was noted that ``on the other hand, clear evidence has been provided 
that the apparent $r$-invariance phenomenon revealed by the work of Milz and Strunz,\ldots, does {\it not} continue to hold if one employs, 
rather than Hilbert-Schmidt measure, its Bures (minimal monotone) counterpart''. It would be of interest to examine this issue of $r$-invariance in the context of the induced measures (which, of course, include the Hilbert-Schmidt measure as the special $k=0$ case). 

Another phenomenon apparently restricted to the Hilbert-Schmidt case, is that the separable states are equally divided (in terms of HS measure) between those for which 
$|\rho^{PT}| >|\rho|$, and {\it vice versa}. Based on some 122,000,000 two-qubit density matrices randomly generated with respect 
to Bures measure, of the 8,945,951 separable ones, 5,894,648 of them (that is, 65.8918$\%$) had $|\rho^{PT}| >|\rho|$, clearly distinct from simply one-half of them.
\subsubsection{Hilbert-Schmidt-assisted estimation of two-qubit Bures separability probability} \label{Bures2}
Let us attempt to exploit the Bures-Hilbert-Schmidt interpolation formula  (\ref{JointBuresHSformula}) of Osipov, Sommers and {\.Z}yczkowski, in light of our recently-acquired
high degree of certainty that the Hilbert-Schmidt two-qubit separability probability is $\frac{8}{33}$ 
\cite{slater2017master,lovas2017invariance,milz2014volumes,fei2016numerical,shang2015monte,slater2013concise,slater2012moment,slater2007dyson}. That is, we {\it jointly} estimate the Ginibre matrix $A$ at each
of 4,372,000,000 iterations. (Of these, 1,059,902,370 and 320,546,752 corresponded to separable density matrices, in terms of the Hilbert-Schmidt and Bures measures, respectively.) Then, we apply the average of 1 and the ``correction factor" 1.00002224983, needed to transform the Hilbert-Schmidt estimate to
precisely $\frac{8}{33}$, to the estimated Bures probability. (The average is employed, since a second set of an equal number of 32 [but now independent] standard normal random variates are needed to generate the random matrices with respect to Bures measure.)  Doing so, gives us a very slightly ``corrected'' Bures estimate'' of 0.07331891996, rather than 
0.0733181043.
(Since the additional randomly-generated $4 \times 4$ unitary matrix $U$ enters into the Bures computation--but not the Hilbert-Schmidt one, since $x=0$ in that case--it does seem plausible that convergence is slower in the Bures case.)

Performing a parallel (but much smaller) computation in the two-{\it rebit} 
case, based on forty million random density matrices (6,286,209 of them being separable), we obtain a corresponding (slightly corrected) Bures separability probability estimate of 0.1571469. (In doing so we take, as required, the Ginibre matrix $A$ in (\ref{JointBuresHSformula}) to be $4 \times 5$ \cite[eqs. (24),(28)]{al2010random}, and not $4 \times 4$ as in the two-qubit calculation.)
\section{Estimation of Bures separability probabilities using  quasirandom sequences} \label{Bures3}
We have begun to explore the question of whether the estimation of the  Bures two-qubit and two-rebit separability probabilities could be accelerated--with faster convergence properties--by, rather than using simply
random generation of normal variates for the Ginibre ensembles, making use of normal variates generated by employing low-discrepancy (quasi-Monte Carlo) sequences \cite{leobacher2014introduction}. In particular, we are employing an ``open-ended'' sequence (based on extensions of the {\it golden ratio}) recently introduced by Martin Roberts in the detailed "blog post", ``The Unreasonable Effectiveness 
of Quasirandom Sequences'' \cite{Roberts}.
 
Roberts notes: ``The solution to the 
$s$-dimensional problem, depends on a special constant $\phi_s$, where $\phi_s$ is the value of smallest, positive real-value of x such that''
\begin{equation}
  x^{s+1}=x+1,
\end{equation}
($s=1$, yielding the golden ratio, and $s=2$, the ``plastic constant'' (https://mathematica.stackexchange.com/questions/143457/how-can-one-generate-an-open-ended-sequence-of-low-discrepancy-points-in-3d). (For notational clarity here, we have changed his parameter $d$ to $s$.)
The  $n$-th terms in the quasirandom (Korobov) sequence take the form
\begin{equation} \label{QR}
(\alpha _0+n \vec{\alpha}) \bmod 1, n = 1, 2, 3, \ldots  
\end{equation}
where we have the $s$-dimensional vector,
\begin{equation} \label{quasirandompoints}
\vec{\alpha} =(\frac{1}{\phi_s},\frac{1}{\phi_s^2},\frac{1}{\phi_s^3},\ldots,\frac{1}{\phi_s^s}).
\end{equation}
The additive constant $\alpha_0$ is typically taken to be 0. ``However, there are some arguments, relating to symmetry, that suggest that $\alpha_0=\frac{1}{2}$
is a better choice,''  Roberts observes. These points, lying in the $s$-dimensional hypercube $[0,1]^s$, can be converted to (quasirandomly distributed) normal variates using the inverse of the cumulative distribution function \cite[Chap. 2]{devroye1986}. 
(Henrik Schumacher developed a specialized algorithm, that we employ, that accelerated the  Mathematica command InverseCDF approximately {\it ten-fold}, as reported at   https://mathematica.stackexchange.com/questions/181099/can-i-use-compile-to-speed-up-inversecdf). We take $s=36$ and 64 to estimate the Bures two-rebit and two-qubit separability probabilities, respectively.

In Fig.~\ref{fig:Roberts2Rebits}, we show the paired estimates of the Bures two-rebit separability probability based on $\alpha_0=0$ and $\alpha_0=\frac{1}{2}$ sequences, and similarly in Fig.~\ref{fig:Roberts2Qubits} for the two-qubit case. We see strong convergence of the paired estimates in both figures. For the (longer) $\alpha_0 =\frac{1}{2}$ sequences, the last recorded estimates are 0.1570962 (based on 3,235,000,000 points) and 0.0733116 (2,280,000,000 points), and for the $\alpha_0= 0$ sequences, the last recorded estimates are 0.1570974 (2,685,000,000 points) and 0.0733143 (2,075,000,000 points).
\begin{figure}
    \centering
    \includegraphics{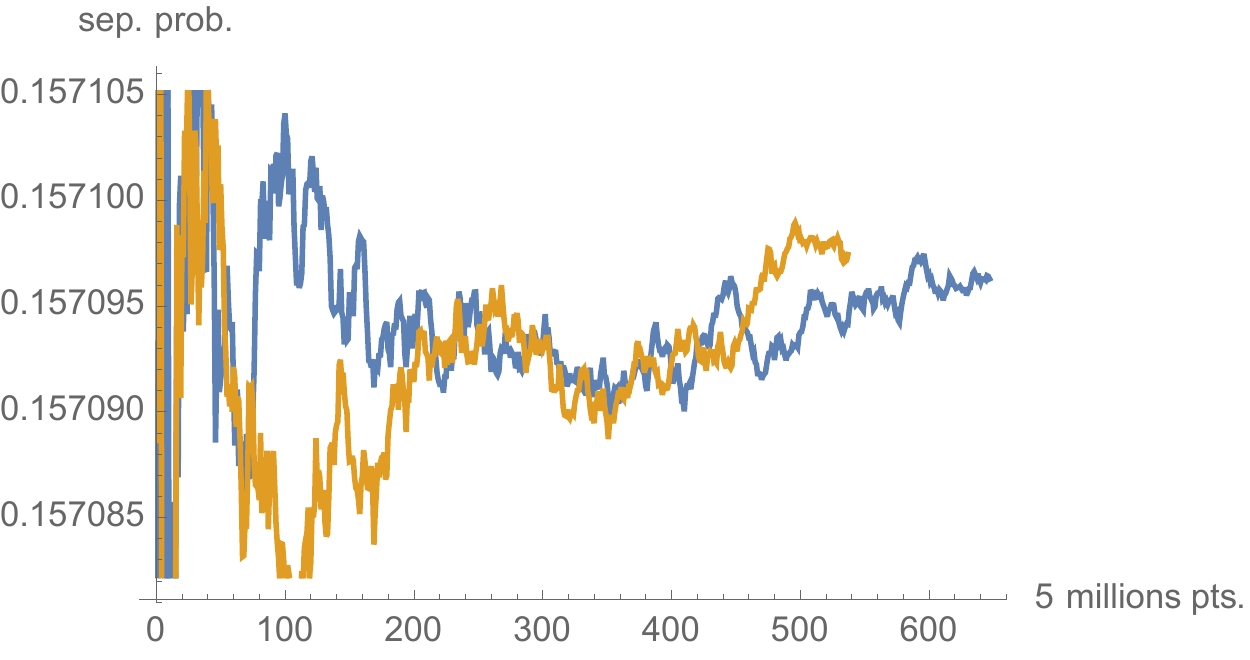}
    \caption{Bures two-rebit separability probability estimates--in units of five million points sampled--based on $\alpha_0=0$ and $\alpha_0=\frac{1}{2}$ quasirandom sequences (\ref{QR}), with the sampling dimension $s=36$. The  $\alpha=\frac{1}{2}$ curve is longer.}
    \label{fig:Roberts2Rebits}
\end{figure}
\begin{figure}
    \centering
    \includegraphics{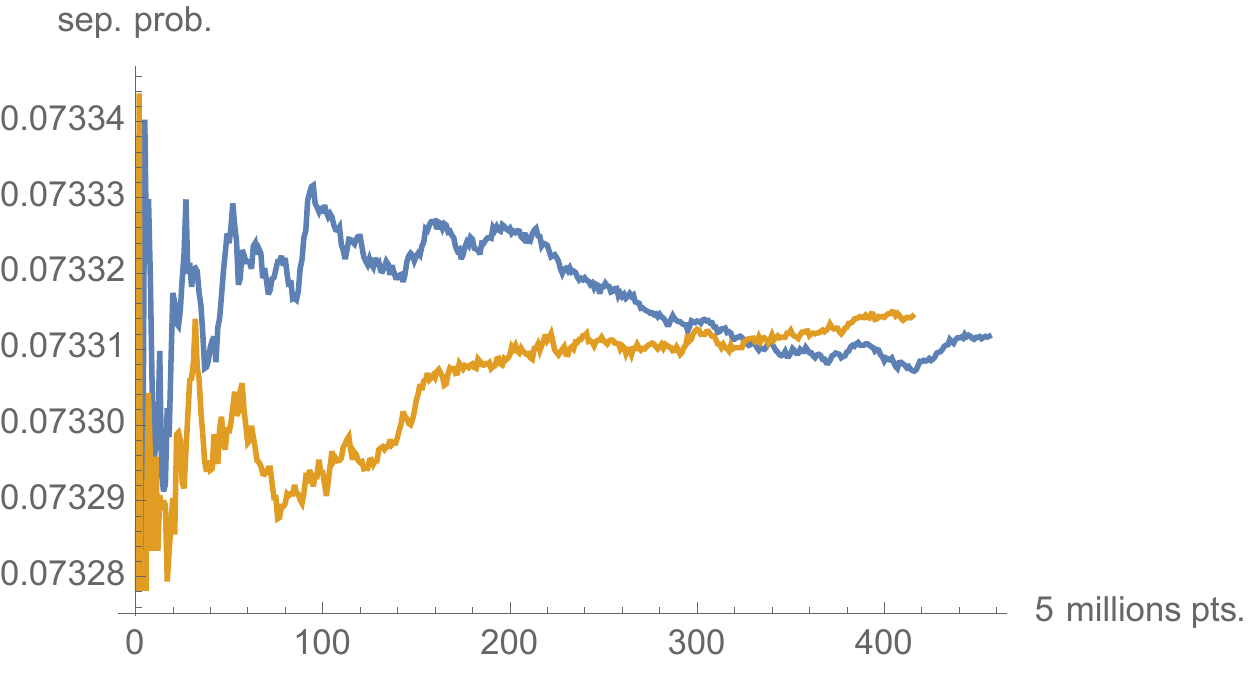}
    \caption{Bures two-qubit separability probability estimates--in units of five million points sampled--based on $\alpha_0=0$ and $\alpha_0=\frac{1}{2}$ quasirandom sequences (\ref{QR}), with the sampling dimension $s=64$. The  $\alpha=\frac{1}{2}$ curve is longer.}
    \label{fig:Roberts2Qubits}
\end{figure}

Our current (significant digits) estimates of the two-rebit and two-qubit Bures separability probabilities, based on these results, are 0.15709 and 0.07331, respectively \cite[Table 1]{khvedelidze2018generation}. We continue with these analyses, aspiring to obtain insights into possible {\it exact} values.

It is of interest to compare/contrast the relative merits of estimation of this pair of Bures separability probabilities in the present 36- and 64-dimensional settings with earlier studies
(largely involving Euler parameterizations of $4 \times 4$ density matrices \cite{tilma2002parametrization}), in which 9- and 15-dimensional integration problems were addressed \cite{slater2005silver,slater2009eigenvalues}. In the higher-dimensional frameworks, the integrands are effectively unity, while not so in the other cases.

``Observe that the Bures volume of the set of mixed states is equal to the volume of an $(N^2-1)$-dimensional hemisphere of radius $R_B=\frac{1}{2}$'' \cite[p. 415]{bengtsson2017geometry}. It is also noted there that $R_B$ times the area-volume ratio asymptotically increases with the dimensionality $D=N^2-1$, which is typical for hemispheres.
\section{Extension of Lovas-Andai formulas to induced measures} \label{LovasAndaiExtension}
Now, let us raise what appears to be a quite interesting research question. That is, can the  Lovas-Andai framework, which has been successfully applied using both Hilbert-Schmidt and operator monotone function $\sqrt{x}$ measures \cite{lovas2017invariance,slater2017master}, be further adopted to the generalization of Hilbert-Schmidt measure to its induced extensions (through the use of the determinantal powers of density matrices in the derivations)? If so, the specific induced separability probabilities reported in 
\cite{slater2015formulas} \cite{slater2016formulas}, including formulas (\ref{qubitinduced}), (\ref{rebitinduced}) and (\ref{quaterbitinduced}) above, could be presumably further verified. We now investigate this topic.
\subsection{Extended Lovas-Andai functions $\tilde{\chi}_{d,k}$ for a few specific cases}
To begin, let us replace $\tilde{\chi}_2(\varepsilon)=\frac{1}{3} \varepsilon^2 (4 -\varepsilon^2)$  in 
the middle expression in the two-qubit separability probability formula (\ref{interpolation}) for $u(\eta)$ by
\begin{equation} \label{d=2k=1}
\tilde{\chi}_{2,1}(\varepsilon)=\frac{1}{4} \varepsilon ^2 \left(3-\varepsilon ^2\right)^2, 
\end{equation}
and set $\eta=3$ (it now being understood, notationally, that $\tilde{\chi}_{2,0}(\varepsilon) \equiv \tilde{\chi}_{2}(\varepsilon)$).

Then, this expression does, in fact, evaluate to the previously-found two-qubit induced $k=1$ value $\frac{61}{143}$ given by formula (\ref{qubitinduced}).
That is,
\begin{equation} \label{interpolation2}
\frac{\int_{-1}^1  \int_{-1}^x\tilde{\chi}_{2,1} (\sqrt{\frac{1-x}{1+x}}  \sqrt{\frac{1+y}{1-y}})(1-x^2)^3 (1-y^2)^3 (x-y)^2 \mbox{d} y \mbox{d} x}{\int_{-1}^1  \int_{-1}^x(1-x^2)^3 (1-y^2)^3 (x-y)^2 \mbox{d} y \mbox{d} x}=  \frac{61}{143}.
\end{equation}
Fig.~\ref{fig:randominducefitk1} shows the residuals from a (clearly close) fit of  $\tilde{\chi}_{2,1}(\varepsilon)$ to an  estimation of it based on
sixty million appropriately generated $4 \times 4$ density matrices.
\begin{figure}
    \centering
    \includegraphics{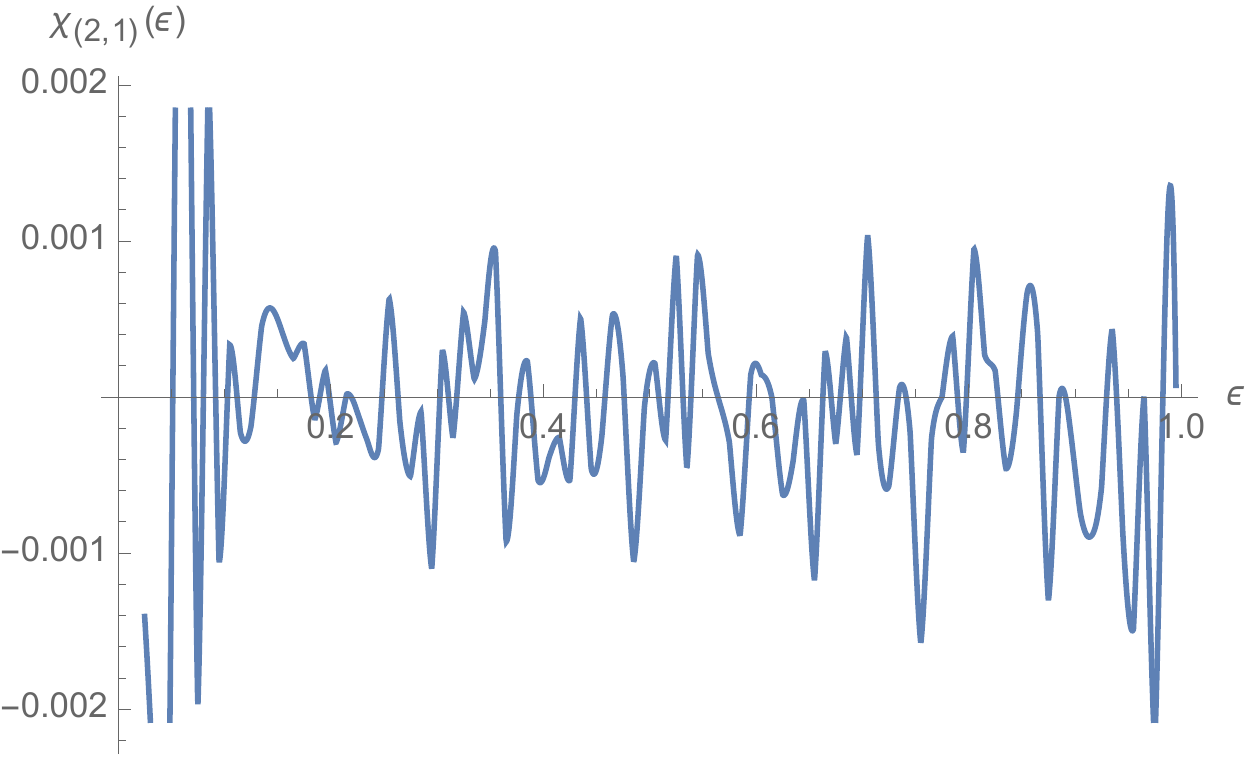}
    \caption{Residuals from a fit of the Lovas-Andai-type formula $\tilde{\chi}_{2,1}(\varepsilon)$ to its estimation based on
sixty million randomly generated (with respect to $k=1$ induced measure) $4 \times 4$ density matrices}
    \label{fig:randominducefitk1}
\end{figure}
(We can now further, using $\tilde{\chi}_{2,1}(\varepsilon)$, find the $k=1$ separability probability based on the previously-discussed operator monotone function $x \rightarrow \sqrt{x}$. The result we obtain is $4427-\frac{131072}{3 \pi ^2} =19 \cdot 233 -\frac{2^{17}}{3 \pi^2} \approx 0.209939$.)

Proceeding onward to the $k=2$ case, still in the complex domain ($\mathbb{C}$), we have
\begin{equation} \label{interpolationk2}
\frac{\int_{-1}^1  \int_{-1}^x\tilde{\chi}_{2,2} (\sqrt{\frac{1-x}{1+x}}  \sqrt{\frac{1+y}{1-y}})(1-x^2)^4 (1-y^2)^4 (x-y)^2 \mbox{d} y \mbox{d} x}{\int_{-1}^1  \int_{-1}^x(1-x^2)^4 (1-y^2)^4 (x-y)^2 \mbox{d} y \mbox{d} x}=  \frac{259}{442},
\end{equation}
agreeing with (\ref{qubitinduced}), where, now, 
\begin{equation*}
 \tilde{\chi}_{2,2}(\varepsilon)    =\frac{1}{5} \varepsilon ^2 \left(-\varepsilon ^6+8 \varepsilon ^4-18 \varepsilon ^2+16\right).
\end{equation*}
(We can now, using $\tilde{\chi}_{2,2}(\varepsilon)$ find the $k=2$ separability probability based, once again, on the previously-discussed operator monotone measure. The result we obtain is $-\frac{1713917}{3}+\frac{26642219008}{4725 \pi ^2} = 
-\frac{61 \cdot 28097}{3}+\frac{2^{26} \cdot 397}{3^2 \cdot 5^2 \cdot 19 \pi^2} \approx 0.399947$.)

Moving from the complex to quaternionic domain ($\mathbb{H}$), again for $k=1$, we have
\begin{equation} \label{interpolationk3}
\frac{\int_{-1}^1  \int_{-1}^x\tilde{\chi}_{4,1} (\sqrt{\frac{1-x}{1+x}}  \sqrt{\frac{1+y}{1-y}})(1-x^2)^5 (1-y^2)^5 (x-y)^4 \mbox{d} y \mbox{d} x}{\int_{-1}^1  \int_{-1}^x(1-x^2)^5 (1-y^2)^5 (x-y)^4 \mbox{d} y \mbox{d} x}=  \frac{3736}{22287},
\end{equation}
agreeing with (\ref{quaterbitinduced}), where, now, we employ 
\begin{equation}
 \tilde{\chi}_{4,1}(\varepsilon)    = \frac{1}{21} \varepsilon ^4 \left(-9 \varepsilon ^6+55 \varepsilon ^4-125 \varepsilon ^2+100\right).
\end{equation}
(The separability probability, based on $\tilde{\chi}_{4,1}(\varepsilon)$, in the noted operator monotone case is 
$\frac{27637}{168}-\frac{50 \pi ^2}{3} = \frac{29 \cdot 953}{2^3 \cdot 3 \cdot 7} 
-\frac{2 \cdot 5^2 \pi^2}{3} \approx 0.0125457$.) 

We remark that the two-{\it rebit} ($d=1$)  functions $\tilde{\chi}_{1,k}(\varepsilon)$, and more generally $\tilde{\chi}_{d,k}(\varepsilon)$, for {\it odd} $d$, appear to be of considerably more complicated non-polynomial form, involving inverse hyperbolic, logarithmic and polylogarithmic functions.)
\subsection{General $\tilde{\chi}_{d,k}(\varepsilon)$ formula}
It now seems clear that to obtain--in our extended Lovas-Andai framework--an induced measure-based separability/PPT probability ($\mathcal{P}_{sep/PPT}(d,k)$) in the real ($\mathbb{R}$), complex ($\mathbb{C}$) or 
quaternionic  ($\mathbb{H}$) domain, we must set the exponent ($d$) of the
$(x-y)$ terms in the numerators and denominators   to 1, 2 or 4, respectively. While to obtain a specific 
$k$-induced measure result, we must take the exponents of the $(1-x^2)$ and $(1-y^2)$ terms to be $d+k$. In other words, we have the general ($(d,k)$-parameterized) 
formula $\mathcal{P}_{sep/PPT}(d,k)$ given by (\ref{interpolationGeneral}). 
If we replace the $d+k$ terms  in (\ref{interpolationGeneral}) by $-\frac{d}{4}+k$, we obtain the separability/PPT-probability function in the operator monotone function $\sqrt{x}$ case.
\subsection{Analytical background}
Now, let us indicate the general manner in which we obtained the three specific indicated new functions $\tilde{\chi}_{2,1}(\varepsilon)$, $\tilde{\chi}_{2,2}(\varepsilon)$ 
and  $\tilde{\chi}_{4,1}(\varepsilon)$ above. 
In this direction, we have for the complex case, $d=2$, the  general induced measure formula
\begin{equation} \label{(d,k)-parameterized}
 \tilde{\chi}_{2,k}(\varepsilon)   =\frac{\left(-k+\varepsilon ^2-3\right) \left(1-\varepsilon ^2\right)^{k+1}+k+3}{k+3},
\end{equation}
with the denominator of (\ref{interpolationGeneral}) evaluating to $\frac{\pi  \Gamma (k+3)^2}{2 \Gamma \left(k+\frac{7}{2}\right) \Gamma
   \left(k+\frac{9}{2}\right)}$. Alternatively,
\begin{equation}
\tilde{\chi}_{2,k}(\varepsilon)  =    \frac{(k+2) \varepsilon ^2 \left(\sum _{j=0}^{k-1} \left(1-\varepsilon ^2\right)^{k-j}
     {_2F_1}\left(1,-j-1;k+4;\varepsilon ^2\right)+2 \,
   _2F_1\left(1,-k-1;k+4;\varepsilon ^2\right)\right)}{k+3}.
\end{equation}
(For a formal proof by C. Dunkl of this relation, see App. \ref{doublesums}.)
We, then, have for $k=-\frac{5}{2}$,
\begin{equation}
 \tilde{\chi}_{2,-\frac{5}{2}}(\varepsilon)   = 2 \left(\frac{\varepsilon ^2-\frac{1}{2}}{\left(1-\varepsilon
   ^2\right)^{3/2}}+\frac{1}{2}\right),
\end{equation}
which we can interestingly use to replace $\tilde{\chi}_{2}(\varepsilon) $ in (\ref{interpolation}), giving us (again setting $\eta=-\frac{1}{2}$) a result 
now of $\frac{21 \pi -64}{21 \pi } \approx 0.0299127$ to compare (in the induced measure framework) with the previously-given ($x \rightarrow \sqrt{x}$) operator monotone result of 
$1 -\frac{256}{27 \pi^2}  \approx 0.0393251$. (We can closely approximate this result using $\tilde{\chi}_{2,-2.86051355}(\varepsilon)$.)
Additionally, Charles Dunkl has found--employing rational interpolation--the (quaternionic, $d=4$) counterpart (and also for $d=6$),
\begin{equation}
\tilde{\chi}_{4,k}(\varepsilon)=    \left(\frac{2 \left(2 k^2+14 k+21\right) \varepsilon ^4}{(k+5) (k+6)}-\frac{6 (k+3) \varepsilon
   ^6}{(k+6) (k+7)}-(k+1) \varepsilon ^2-1\right) \left(1-\varepsilon ^2\right)^{k+1}+1.
\end{equation}
For $k=0$, we recover the previously-reported Hilbert-Schmidt formula of $\tilde{\chi}_{4}(\varepsilon)=\frac{1}{35} \varepsilon ^4 \left(15 \varepsilon ^4-64 \varepsilon ^2+84\right)$ 
\cite[sec. VI]{slater2017master}. The corresponding formula for $k=1$ is (\ref{interpolationk3}).
\subsection{Methodology employed}
To further elaborate upon the general methodology employed to obtain the above results, we refer to the analyses and notation employed in \cite[sec. VII]{slater2017master}. We must, again, perform the constrained integrations  presented there, but now, additionally,  for induced measure of order $k \neq 0$, we must
multiply both the (numerator and denominator) integrands by the $k$-th power of 
$\left(\left(r_{14}^2-1\right) \left(r_{23}^2-1\right)-r_{24}^2\right) \varepsilon ^2$. This term is the relevant factor in the determinant
of the $4 \times 4$ density matrices (having three pairs of nullified entries) employed in the cited reference. The additional determinantal factors are all positive and not functions of the $r$'s, and would cancel, so can be ignored in the computations.

To be more specific, in these regards, in \cite[sec. VII]{slater2017master} we employed the set of constraints (imposing--in quantum-information-theoretic terms--the positivity of the density matrix and its partial transpose),
\begin{equation}
r_{23}^2<1\land \left(r_{14}^2-1\right) \left(r_{23}^2-1\right)>r_{24}^2\land r_{23}^2
   \left(\varepsilon ^2 r_{14}^2-1\right)>\varepsilon ^2 \left(\varepsilon ^2
   r_{14}^2+r_{24}^2-1\right).   
\end{equation}
Then, subject to these constraints, we had to integrate the jacobian (corresponding to the hyperspherical parameterization
of the three off-diagonal non-nullified entries of the density matrix) $\left(r_{14} r_{23} r_{24}\right){}^{d-1}$
over the unit cube $[0,1]^3$. Dividing the result of the integration by
\begin{equation} \label{denominator}
\frac{\pi  4^{-d} \Gamma \left(\frac{d}{2}+1\right)^2}{d^3 \Gamma
   \left(\frac{d+1}{2}\right)^2},
\end{equation}
yielded the desired $\tilde{\chi_d} (\varepsilon )$. (If we were to take $r_{24}=0$, and a jacobian of 
$\left(r_{14} r_{23}\right){}^{d-1}$, we would revert to the X-states setting, and obtain simply 
$\varepsilon^d$ as the corresponding  function.)

This last result (\ref{denominator}) was obtained by integrating the same jacobian 
$\left(r_{14} r_{23} r_{24}\right){}^{d-1}$ over the unit cube,
subject to the constraints (imposing the positivity of the density matrix),
\begin{equation}
r_{23}^2<1\land \left(r_{14}^2-1\right) \left(r_{23}^2-1\right)>r_{24}^2.
\end{equation}
So to reiterate, to move on to the more general induced measure setting (that is, $k \ne 0$), we must multiply both the indicated (numerator and denominator) integrands by the $k$-th power of 
$\left(\left(r_{14}^2-1\right) \left(r_{23}^2-1\right)-r_{24}^2\right) \varepsilon ^2$. The Hilbert-Schmidt ($k=0$) denominator
integration result (\ref{denominator}), then, generalizes to 
\begin{equation}
 \frac{\Gamma \left(\frac{d}{2}\right)^3 \varepsilon ^{2 k} \Gamma (k+1) \Gamma
   \left(\frac{d}{2}+k+1\right)}{8 \Gamma (d+k+1)^2}.   
\end{equation}
\subsection{Extended master formula investigation}
Our goal now is the development of a still more general  Lovas-Andai ``master formula'' for $\tilde{\chi}_{d,k}(\varepsilon)$ than has been so far reported for  $\tilde{\chi}_{d}(\varepsilon) \equiv \tilde{\chi}_{d,0}(\varepsilon)$ in \cite[sec. VII.A]{slater2017master}, that is, (\ref{MasterFormula}) above.
An expression for the anticipated (induced measure) extended master formula is as the sum of (\ref{onehalf}),
(reducing to one-half of (\ref{MasterFormula}) for $k=0$)
and the two-dimensional integral of the product of 
\begin{equation} \label{integrand}
  \frac{1}{\Gamma \left(\frac{d}{2}\right)^2 \Gamma (k+1) \Gamma
   \left(\frac{d}{2}+k+1\right)}  
\end{equation}
and
\begin{equation}
Y^{d-1} \left(\frac{1}{r_{14} \varepsilon }\right){}^{d+1} \left(1-r_{14}^2 \varepsilon
   ^2\right){}^{d/2} \left(\left(1-\frac{1}{r_{14}^2}\right) Y^2-r_{14}^2+1\right){}^k
   \left(r_{14}^2 \varepsilon ^2-Y^2\right){}^{d/2}
\end{equation}
and
\begin{equation}
\, _2\tilde{F}_1\left(\frac{d}{2},-k;\frac{d+2}{2};\frac{\left(r_{14}^2 \varepsilon
   ^2-1\right) \left(Y^2-r_{14}^2 \varepsilon ^2\right)}{\left(r_{14}^2-1\right) \varepsilon ^2
   \left(Y^2-r_{14}^2\right)}\right)   .  
\end{equation}
The two-dimensional domain of integration is 
\begin{equation}
r_{14} \in [0,1], \hspace{.25in}  Y \in [\varepsilon   r_{14}, \varepsilon^2   r_{14}]  .
\end{equation}
The result of this integration must also, as (\ref{onehalf}) does, equal $\frac{1}{2}$  of the master formula (\ref{MasterFormula}) result for $k=0$.
(Questions pertaining to these last discussed issues have been posted at https://mathematica.stackexchange.com/questions/171351/evaluate-over-a-two-dimensional-domain-the-integral-of-hypergeometric-based-f
and https://math.stackexchange.com/questions/2744828/find-five-parameter-values-for-a-3-tildef-2-function-yielding-five-pCFDinolynomi  .)   C. Dunkl has 
been able to solve this two-dimensional integration problem (with a change-of-variables, I had suggested, in the integrand), with the
result (App.~\ref{CFDintegration}) given near the end of sec.~\ref{Introduction}.

In \cite[sec. VII.A]{slater2017master}, we reported a formula there (74), which when
multiplied by the ``master  Lovas-Andai'' [Hilbert-Schmidt, $k=0$] formula for $\tilde{\chi}_{d}(\varepsilon)$, (70) there, and appropriately
normalized (by (75) there) would yield the $d$-specific separability/PPT-probability. We have now obtained the 
induced measure extension of that Hilbert-Schmidt formula (74)  (thus, reducing to it for $k=0$). It takes the form
\begin{equation} \label{unnormalized}
(-1)^d 2^{5 d+4 k+3} \left(t^2-1\right)^d t^{-4 d-2 k-3} \Gamma (3 d+2 k+2)^2 \,
   _2\tilde{F}_1\left(3 d+2 k+2,3 d+2 k+2;6 d+4 k+4;1-\frac{1}{t^2}\right).    
\end{equation}
For its normalization factor, corresponding to \cite[eq. (75)]{slater2017master}, we have found 
\begin{equation} \label{normalization}
2^{5 d+4 k+2} \Gamma (d+1) \Gamma (d+k+1) \Gamma (3 d+2 k+2)^2 \, _3\tilde{F}_2(3 d+2
   k+2,3 d+2 k+2,d+1;6 d+4 k+4,2 d+k+2;1).    
\end{equation}
As a check on this normalization formula, in the Hilbert-Schmidt ($k=0$) case, we see that it agrees with \cite[eq. (75)]{slater2017master}
for $d=6$, both formulas yielding $\frac{33554432}{4854694845}$. (For a related analysis of Dunkl, see App.~\ref{laststeps}.) 
For $d \bmod 4=0$, the normalization factor reduces to
\begin{equation}
\frac{\Gamma \left(\frac{d+1}{2}\right) 2^{\left\lfloor \frac{d}{4}\right\rfloor
   +\frac{11 d}{4}+2 k} \Gamma (d+k+1)^2 \Gamma \left(\frac{3 d}{2}+k+1\right) \Gamma
   \left(\frac{5 d}{4}+k+\left\lfloor \frac{d}{4}\right\rfloor
   +\frac{3}{2}\right)}{\Gamma \left(\frac{3 (d+1)}{2}+k\right)^2 \Gamma \left(\frac{5
   d}{2}+2 k+2\right)} .   
\end{equation}
In the operator monotone
($\sqrt{x}$) case at hand, the analogous (hypergeometric-free) normalization factor to (\ref{normalization}) is 
\begin{equation}
    \frac{4^{\frac{d}{4}+k} \Gamma \left(\frac{d}{2}+\frac{1}{2}\right) \Gamma
   \left(-\frac{d}{4}+k+1\right)^2 \Gamma \left(\frac{d}{4}+k+1\right)}{\Gamma (2 k+2)
   \Gamma \left(\frac{d}{4}+k+\frac{3}{2}\right)}.
\end{equation}
We have been able to use the two results ((\ref{unnormalized}), (\ref{normalization})) to see that in agreement with a previously-reported formula \cite[eq. (62)]{slater2016formulas} (given above in sec.~\ref{Determinantal}) 
 that (\ref{onehalf}) accounts for that part of the separability probability for which $|\rho^{PT}| \geq |\rho|$, and the formula of Dunkl (\ref{otherhalf}) for the complementary separability probability. (We have specifically verified this assertion in the $d=2,k=1$ case for which we have $\tilde{\chi}_{2,1}(\varepsilon)=\frac{1}{4} \varepsilon ^2 \left(3-\varepsilon ^2\right)^2$, given by (\ref{d=2k=1}). This decomposes into the sum of the Lovas-Andai separability functions $\frac{1}{20} \varepsilon ^2 \left(\varepsilon ^4-6 \varepsilon ^2+15\right)$ and $\frac{3 \varepsilon ^2}{2}+\frac{1}{5} \left(\varepsilon ^2-6\right) \varepsilon ^4$. These yield, respectively, separability probabilities of $\frac{45}{286}$, for the first complementary part, and  $\frac{7}{26}$, for the second, summing, as known, to $\frac{61}{143}$.)
\section{Concluding remarks} \label{Concluding}
An interesting question is whether it is possible to obtain the Bures (minimal monotone) two-qubit separability probability within the 
Lovas-Andai framework, in a somewhat similar manner to how we have extended this framework to the induced measures. Now, in the Bures
case, rather than incorporating simply powers of the determinants of the density matrices into the required integrations, we must include the 
reciprocal ($R$) of the product ($P$) of the square root of the determinant and the product term $\Pi_{j<k}^4 (\lambda_i+\lambda_j)$, where the $\lambda$'s  
are the eigenvalues of the density matrices \cite[eq. (3.18)]{sommers2003bures}. We were able to employ the symmetric reduction of this last term and re-express it in terms of the traces of matrix powers of the density matrices, and, thus, of the entries of the density matrix. Unfortunately, we have not been able to
perform the requisite integrations. (We should note that both the Hilbert-Schmidt and operator monotone ($\sqrt{x}$) measures lead to separability probabilities invariant over the Bloch radii of the qubit subsystems, while the Bures measure has been demonstrated in \cite[Fig. 31]{slater2015bloch} to not have such behavior. Lovas and Andai have commented \cite[sec. 6]{lovas2017invariance} that in this operator monotone case, the associated  probability is determined by the structure of the surface of the unit ball.)

Of course, it would be most desirable to  rigorously derive the Hilbert-Schmidt/Lebesgue separability/PPT-probabilities for the 35- and 63-dimensional convex sets of qubit-qutrit and qubit-qudit states, among others,  examined above. But, given that the  Hilbert-Schmidt 
separability probability of $\frac{8}{33}$ for the 15-dimensional convex set of two-qubit  states has itself proved highly formidable to well establish \cite{slater2017master,lovas2017invariance,milz2014volumes,fei2016numerical,shang2015monte,slater2013concise,slater2012moment,slater2007dyson} (though not yet formally proven), 
it seems that major advances would be required to achieve such a goal 
in these still higher-dimensional settings (and, thus, confirm or reject the conjectures above).

Implicit in the analytical approach pursued here has been the clearly yet unverified assumption that the separability/PPT-probabilities will continue to be {\it rational-valued} for the higher-dimensional systems, as they have, remarkably, been found to be in the $4 \times 4$ setting.

We can, of course, as future research, continue our various simulations 
reported above of random density matrices, hoping to obtain further accuracy in  separability/PPT-probability estimates. One relevant issue of interest would then be the trade-off between the use of  increased precision in the random normal variates employed (we have so far used the Mathematica default precision option), and the presumed consequence, then, of decreased number of  variates to be generated.
\subsection{Casimir invariants} \label{Casimir}
One of our goals here  has been to determine if we could use the $N=4$ results \cite{slater2017master,lovas2017invariance,milz2014volumes,fei2016numerical,shang2015monte,slater2013concise,slater2012moment,slater2007dyson} to gain insight into the $N>4$ counterparts, and, more specifically, 
if certain analytical properties continue to hold. We found some encouragement for undertaking such a course from the research reported in 
 \cite{slater2016invariance}. There, evidence was provided that a most interesting common characteristic is 
shared by two-qubit ($N=4$), qubit-qutrit $(N=6$), qubit-qudit ($N=8$, specifically) and two-qutrit ($N=9$) systems. That is, the associated
(HS) separability/PPT probabilities hold constant over the {\it Casimir invariants} \cite{gerdt20116,byrd2003characterization} of both their subsystems
(such as the lengths of the Bloch radii of the reduced qubit subsystems) (cf. \cite[Corollary 2]{lovas2017invariance}). (A Casimir invariant is a  distinguished element of the center of the universal enveloping algebra of a Lie algebra \cite{gerdt20116}.) 

It would be of interest to computationally employ such apparent invariance (formally proved by Lovas and Andai
\cite[Corollary 2]{lovas2017invariance} in the two-rebit $\frac{29}{64}$ case) in strategies to ascertain these various separability/PPT-probabilities. However, we have yet to find
an effective manner of doing so (even after setting the Casimir invariants to zero, leading to 
lower-dimensional settings). (In our paper, ``Two-qubit separability probabilities as joint functions of the Bloch radii of the qubit subsystems'' \cite{slater2016two}, we observed a relative repulsion effect between the Casimir invariants of the two reduced systems of several forms of bipartite states.)

Let us, in these regards, also indicate the interesting paper of Altafini, entitled ``Tensor of coherences parametrization of multiqubit density operators for entanglement characterization'' \cite{altafini2004tensor}. In it, he applies the term ``partial quadratic Casimir invariant'' in relation to reduced density matrices. He notes that a quadratic Casimir invariant can be regarded as the specific form ($q = 2$) of Tsallis entropy. Further, he remarks that ``partial transposition is a linear norm preserving operation: $\mbox{tr}(\rho^2) = \mbox{tr}((\rho^{T_1})^2) =  \mbox{tr}((\rho^{T_2})^2)$. Hence entanglement violating PPT does {\it not modify} the quadratic Casimir invariants of the density and the necessary [separability] conditions $[\mbox{tr}(\rho^2_A) \geq \mbox{tr}(\rho^2), \mbox{tr}(\rho^2_B) \geq \mbox{tr}(\rho^2)] $, are 
{\it insensible} to it'' (emphasis added).

Let us, relatedly, indicate the pair of formulas (cf. (\ref{ZSComplex}), (\ref{AndaiComplex}))
\begin{equation} \label{MZ1}
  V_{HS}^{2 \times m}(r)=   V_{HS}^{2 \times m}(0) (1-r^2)^{2 (m^2-1)}
\end{equation}
and
\begin{equation} \label{MZ2}
 V_{HS}^{2 \times m}(0)   = \sqrt{m} \cdot 2^{6 m^2 -m -\frac{23}{2}} \cdot \pi^{2 m^2-m -\frac{3}{2}} \cdot 
 \frac{\Pi_{k=1}^{2 m} \Gamma{(k)} \cdot \Gamma{(\frac{1}{2}+2 m^2)}}{\Gamma{(4 m^2)} \cdot \Gamma{(-1+2 m^2)}}
\end{equation}
that Milz and Strunz conjectured for the Hilbert-Schmidt volume of the $2\times m$ qubit-qudit states \cite[eqs. (27), (28)]{milz2014volumes}, as a function of the Bloch radius ($r$) of the qubit subsystem. (These appear to have been confirmed for the two-qubit 
[$m=2$] case by the analyses of Lovas and Andai 
\cite[Cor. 1]{lovas2017invariance}.)
\appendix
\section{Qubit-qutrit induced measures analyses} \label{Qubitqutritinduced}
Let us now investigate qubit-{\it qutrit} scenarios in which the measure employed is {\it not} that induced by tracing over a $K$-dimensional environment, where $K=6$, $k=K-6=0$, as in the Hilbert-Schmidt case, but with $K \neq 6$, $k \neq 0$. 

For the corresponding induced measure two-qubit cases, we reported, among others, the formula \cite[eq. (2)]{slater2015formulas} \cite[eq. (4)]{slater2016formulas},
\begin{equation} \label{qubitinduced}
 P^{2-qubits}_k=1-\frac{3 \cdot 4^{k+3} (2 k (k+7)+25) \Gamma{(k+\frac{7}{2}}) \Gamma{(2 k+9)}}{\sqrt{\pi} \Gamma{(3 k+13)}}.
\end{equation}
\subsection{$k=2$, $K=8$} \label{k2K8}
In the two-qubit case for $k=2$, the formula (\ref{qubitinduced})  gives $\frac{259}{442}=\frac{7 \cdot 37}{2 \cdot 13 \cdot 17} \approx 
0.585973$ (see also (\ref{interpolationk2}) below). Now, of 150,000,000 randomly-generated  qubit-qutrit density matrices with the indicated $k=2$ measure, 23,721,307 had PPT's, yielding an estimated separability probability of 0.15814205.

Among these 23,721,307,  only 171 of them passed the further test for {\it separability from spectrum} presented by Johnston \cite[Thm. 1]{johnston2013separability}. That is, only for 
these 171, did the condition hold that $\lambda_1< \lambda_5 +2 \sqrt{\lambda_4 \lambda_{6}}$, where the $\lambda$'s are the six ordered eigenvalues of 
the density matrices, with $\lambda_1$ being the greatest (cf. \cite[App. A]{slater2017master}).

\subsection{$k=1$, $K=7$} \label{k1K7}
In the two-qubit case for $k=1$, the formula (\ref{qubitinduced})  gives $\frac{61}{143}=\frac{61}{11 \cdot 13} \approx 0.4265734$ 
 (see also (\ref{interpolation2}) below). 
Of 171,000,000 randomly-generated qubit-qutrit density matrices for $k=1$, 13,293,906 had PPT's, yielding an estimated separability probability of 0.0777402.
Among these 13,293,906,  only 19 passed the previously-noted (Johnston) test for  separability from spectrum.

\subsection{$k=-1$, $K=5$}
In the two-qubit case with $k=-1$, the
formula (\ref{qubitinduced}) yields $\frac{1}{14}=\frac{1}{2 \cdot 7}$ \cite[sec. III]{slater2015formulas}.
Now, of 294,000,000 randomly-generated such $6 \times 6$ density matrices, 1,435,605 had PPT's, giving 0.00488301, as a separability probability.
\subsection{$k=-2$, $K=4$}
In the two-qubit case with $k=-2$, the associated separability probability must be null, since the ranks of the density matrices are not greater than the dimensions of the reduced systems  \cite{ruskai2009bipartite}. (The value zero is, in fact, yielded by the two-qubit formula (\ref{qubitinduced}) for $k=-2$.) Now, of 330,000,000 randomly-generated  $6 \times 6$ density matrices with $k=-2$, 55,037 had PPT's, giving 0.000166779, as an estimated separability probability. 

At the present stage of our research, we are reluctant to advance specific conjectures for the four immediately preceding
induced-measure qubit-qutrit analyses ($k=2,1,-1,-2$).

\section{Appendices of Charles Dunkl pertaining to Lovas-Andai formula extension to induced measures} \label{CFD}
\subsection{Evaluation of double sums} \label{doublesums}
The goal is a closed form for%
\begin{align*}
\chi_{d,k}\left(  \varepsilon^{2}\right)    & =\frac{\Gamma\left(
1+d+k\right)  ^{2}\Gamma\left(  1+d\right)  }{2\Gamma\left(  1+\frac{d}%
{2}+k\right)  \Gamma\left(  1+\frac{d}{2}\right)  ^{2}\Gamma\left(
1+\frac{3d}{2}+k\right)  }\varepsilon^{d}\\
& \times\{2~_{3}F_{2}\left(
%TCIMACRO{\QATOP{-d/2-k,d/2,d}{1+d/2,1+k+3d/2}}%
%BeginExpansion
\genfrac{}{}{0pt}{}{-d/2-k,d/2,d}{1+d/2,1+k+3d/2}%
%EndExpansion
;\varepsilon^{2}\right)  \\
& +\sum_{j=0}^{k-1}\frac{\left(  d\right)  _{k-j}}{\left(  \frac{d}%
{2}+1\right)  _{k-j}}\left(  1-\varepsilon^{2}\right)  ^{k-j}~_{3}F_{2}\left(
%
%TCIMACRO{\QATOP{-d/2-j,d/2,d+k-j}{1+d/2+k-j,1+k+3d/2}}%
%BeginExpansion
\genfrac{}{}{0pt}{}{-d/2-j,d/2,d+k-j}{1+d/2+k-j,1+k+3d/2}%
%EndExpansion
;\varepsilon^{2}\right)  \}
\end{align*}
when $d$ is even. Suppose $d=2m$; the conjecture is that $\chi_{2m,k}\left(
\varepsilon^{2}\right)  =\left(  1-\varepsilon^{2}\right)  ^{k+1}p_{m}\left(
k,\varepsilon^{2}\right)  +1$ where $p_{m}$ is a polynomial of degree $2m+1$
in $\varepsilon^{2}$ and the coefficients are rational functions of $k$.

Start with $d=2$. The sum specializes to%
\begin{gather*}
\chi_{2,k}\left(  \varepsilon^{2}\right)  =\frac{k+2}{k+3}\varepsilon^{2}\\
\times\{2~_{2}F_{1}\left(  -1-k,1;k+4;\varepsilon^{2}\right)  +\sum
_{j=0}^{k-1}\left(  1-\varepsilon^{2}\right)  ^{k-j}~_{2}F_{1}\left(
-1-j,1;k+4;\varepsilon^{2}\right)  \}.
\end{gather*}
Our approach is to replace $1-\varepsilon^{2}$ by $t$ and express the
expression as a polynomial in $t$ (the claim is that the coefficient of
$t^{s}$ is zero for $1\leq s\leq k$).

\begin{lemma}
Suppose $n=1,2,3,\ldots$ then%
\[
\left(  1-t\right)  ~_{2}F_{1}\left(  -n,1;c;1-t\right)  =\frac{c-1}%
{c+n-1}-\frac{\left(  c-2\right)  \left(  c-1\right)  t}{\left(  c+n-2\right)
\left(  c+n-1\right)  }~_{2}F_{1}\left(  -n,1;3-n-c;t\right)  .
\]

\end{lemma}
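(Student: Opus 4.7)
The plan is to reduce the identity to a simpler one via a classical polynomial specialization of the Gauss connection formula, and then finish by direct coefficient comparison.

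As the first step, I would invoke the identity
\[
{}_{2}F_{1}(-n, 1; c; 1-t) \;=\; \frac{c-1}{c+n-1}\,{}_{2}F_{1}(-n, 1; 2-c-n; t),
\]
which is the polynomial specialization of the standard $z \leftrightarrow 1-z$ Gauss connection formula; the "second" fundamental solution drops out because its coefficient carries a factor $1/\Gamma(-n)=0$. Since both sides are polynomials of degree $n$ in $t$, this is an algebraic identity, valid for all admissible $c$. After multiplying through by $(1-t)$ and matching against the conjectured right-hand side of the lemma, the claim collapses to
\[
(1-t)\,{}_{2}F_{1}(-n, 1; 2-c-n; t) \;=\; 1 \;-\; \frac{c-2}{c+n-2}\,t\,{}_{2}F_{1}(-n, 1; 3-c-n; t).
\]

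I would then prove this reduced identity by comparing coefficients of $t^{k+1}$ for $0\leq k\leq n-1$ on both sides; the constant terms match trivially, and the $t^{n+1}$ coefficient vanishes on both sides via a factor of $(k-n)$ at $k=n$. Expanding ${}_{2}F_{1}(-n,1;c';t)=\sum_{k}\frac{(-n)_k}{(c')_k}t^{k}$ and using the one-line Pochhammer ratio
\[
\frac{(2-c-n)_k}{(3-c-n)_k} \;=\; \frac{2-c-n}{k+2-c-n},
\]
together with the sign observation $2-c-n=-(c+n-2)$, both sides reduce after a routine manipulation to the common expression
\[
\frac{(c-2)\,(-n)_k}{(k+2-c-n)\,(2-c-n)_k},
\]
completing the verification.

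The main obstacle is essentially bookkeeping: correctly tracking the sign identities $c-2=-(2-c)$ and $c+n-2=-(2-c-n)$, and keeping the Pochhammer index shifts straight. No deep hypergeometric machinery is required; everything reduces to the recurrence $(x)_{k+1}=(x)_k(x+k)$. A more conceptual alternative, should one wish to avoid the explicit coefficient work, is to recognise the reduced identity as a particular Gauss contiguous relation between the functions ${}_{2}F_{1}(-n,1;2-c-n;t)$ and ${}_{2}F_{1}(-n,1;3-c-n;t)$, whose lower parameters differ by unity; this yields a less explicit but more structural derivation.
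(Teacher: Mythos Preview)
Your approach is correct and genuinely different from the paper's. The paper does not invoke the Gauss connection formula at all; instead it expands $(1-t)\,{}_2F_1(-n,1;c;1-t)$ directly as a double sum in powers of $t$, then evaluates two nested sums via Chu--Vandermonde: one for the constant term (yielding $\frac{c-1}{c+n-1}$) and one after swapping the order of summation in the remaining terms (producing the ${}_2F_1(-n,1;3-n-c;t)$ piece). Your route---connection formula first, then a contiguous relation in the lower parameter---is cleaner and more structural, and the final coefficient check is a one-line Pochhammer manipulation rather than a Chu--Vandermonde evaluation. The paper's proof is more self-contained (nothing cited), yours is shorter and makes the underlying hypergeometric identities visible.

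One slip to fix: your claim that the $t^{n+1}$ coefficient ``vanishes on both sides via a factor of $(k-n)$'' is incorrect. On the left, $(1-t)\,{}_2F_1(-n,1;2-c-n;t)$ has degree $n+1$ with leading coefficient $-\frac{(-n)_n}{(2-c-n)_n}\neq 0$; on the right, the $t^{n+1}$ coefficient is $-\frac{c-2}{c+n-2}\cdot\frac{(-n)_n}{(3-c-n)_n}$, also nonzero. They are, however, equal---your common expression $\frac{(c-2)(-n)_k}{(k+2-c-n)(2-c-n)_k}$ is valid at $k=n$ as well, and at $k=n$ it simplifies to $-\frac{(-n)_n}{(2-c-n)_n}$ via $k+2-c-n=2-c$. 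So simply extend your coefficient comparison to $0\le k\le n$ and drop the vanishing claim; the argument then goes through unchanged.
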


\begin{proof}
Expand the left side%
\begin{align*}
S  & =\sum_{j=0}^{n}\frac{\left(  -n\right)  _{j}}{\left(  c\right)  _{j}}%
\sum_{i=0}^{j+1}\frac{\left(  j+1\right)  !}{i!\left(  j+1-i\right)  !}\left(
-t\right)  ^{i}\\
& =\sum_{j=0}^{n}\frac{\left(  -n\right)  _{j}}{\left(  c\right)  _{j}}%
+\sum_{i=1}^{n+1}\frac{\left(  -t\right)  ^{i}}{i!}\sum_{j=i-1}^{n}%
\frac{\left(  -n\right)  _{j}\left(  j+1\right)  !}{\left(  c\right)
_{j}\left(  j+1-i\right)  !};
\end{align*}
the first term sums (by Chu-Vandermonde), in the second part let $j=s+i-1$ so
that $0\leq s\leq n-i+1$; then%
\begin{align*}
S  & =\frac{\left(  c-1\right)  _{n}}{\left(  c\right)  _{n}}-t\sum
_{i=1}^{n+1}\frac{\left(  -n\right)  _{i-1}i!}{\left(  c\right)  _{i-1}%
i!}\left(  -t\right)  ^{i-1}\sum_{s=0}^{n-i+1}\frac{\left(  -n+i-1\right)
_{s}\left(  i+1\right)  _{s}}{\left(  c+i-1\right)  _{s}s!}\\
& =\frac{c-1}{c+n-1}-t\sum_{i=1}^{n+1}\frac{\left(  -n\right)  _{i-1}%
i!}{\left(  c\right)  _{i-1}i!}\left(  -t\right)  ^{i-1}\frac{\left(
c+i-1-i-1\right)  _{n-i+1}}{\left(  c+i-1\right)  _{n-i+1}}\\
& =\frac{c-1}{c+n-1}-t\frac{\left(  c-2\right)  _{n}}{\left(  c\right)  _{n}%
}\sum_{u=0}^{n}\frac{\left(  -n\right)  _{u}}{\left(  3-n-c\right)  _{u}}%
t^{u};
\end{align*}
because $\left(  c-2\right)  _{n-i+1}=\left(  -1\right)  ^{s}\left(
c-2\right)  _{n}/\left(  3-n-c\right)  _{u}$ where $u=i-1$ (the sum is over
$0\leq u\leq n$).
\end{proof}

Now apply the Lemma with $n=j+1$ and $c=k+4$ obtaining%
\[
\varepsilon^{2}~_{2}F_{1}\left(  -1-j,1;k+4;\varepsilon^{2}\right)
=\frac{k+3}{k+4+j}-t\frac{\left(  k+2\right)  \left(  k+3\right)  }{\left(
k+3+j\right)  \left(  k+4+j\right)  }~_{2}F_{1}\left(  -1-j,1;-2-j-k;t\right)
.
\]
Denote this sum by $S_{j}\left(  t\right)  $. The aim is to determine the
coefficients of $t^{s}$ in the sum (omit the leading factor $\dfrac{k+2}{k+3}$
for now)%
\[
2S_{k}\left(  t\right)  +\sum_{j=0}^{k-1}t^{k-j}S_{j}\left(  t\right)  .
\]
The constant term appears only in $2S_{k}\left(  t\right)  $, and equals
$\dfrac{k+3}{2k+4}=\dfrac{k+3}{2\left(  k+2\right)  }$, $\chi_{2,k}\left(
1\right)  =1$. Fix $u$ with $1\leq k-u\leq k$ ($0\leq u\leq k-1$) and find the
coefficient of $t^{k-u}$ in the sum. Denote $coef\left(  f\left(  t\right)
,t^{i}\right)  $ to be the coefficient of $t^{i}$ in $f\left(  t\right)  $.
The above shows (for $1\leq i\leq j+2$)
\[
coef\left(  S_{j}\left(  t\right)  ,t^{i}\right)  =-\frac{\left(  k+2\right)
\left(  k+3\right)  }{\left(  k+3+j\right)  \left(  k+4+j\right)  }%
\frac{\left(  -1-j\right)  _{i-1}}{\left(  -2-j-k\right)  _{i-1}}
\]
Compute%
\begin{align*}
2coef\left(  S_{k}\left(  t\right)  ,t^{k-u}\right)  +\sum_{j=u}%
^{k-1}coef\left(  S_{j}\left(  t\right)  ,t^{j-u}\right)    & =-\frac
{k+3}{2k+3}\frac{\left(  -1-k\right)  _{k-u-1}}{\left(  -2-2k\right)
_{k-u-1}}+\frac{k+3}{k+4+u}\\
& -\sum_{j=u+1}^{k-1}\frac{\left(  k+2\right)  \left(  k+3\right)  \left(
-1-j\right)  _{j-u-1}}{\left(  k+3+j\right)  \left(  k+4+j\right)  \left(
-2-j-k\right)  _{j-u-1}}.
\end{align*}
Change summation index $j=u+1+s.$ Claim for $m=0,1,2,\ldots$%
\begin{equation}
\frac{1}{k+4+u}-\sum_{s=0}^{m}\frac{\left(  k+2\right)  \left(  -u-s-2\right)
_{s}}{\left(  k+4+u+s\right)  \left(  k+5+u+s\right)  \left(  -3-u-s-k\right)
_{s}}=\frac{\left(  u+3\right)  _{m+1}}{\left(  k+4+u\right)  _{m+2}%
}.\label{sum1}%
\end{equation}
Proceeding by induction suppose the formula is true (it holds for $m=-1)$ for
some $m$, then%
\begin{align*}
& \frac{\left(  u+3\right)  _{m+1}}{\left(  k+4+u\right)  _{m+2}}%
-\frac{\left(  k+2\right)  \left(  -u-m-3\right)  _{m+1}}{\left(
k+5+u+m\right)  \left(  k+6+u+m\right)  \left(  -4-u-m-k\right)  _{m+1}}\\
& =\frac{\left(  u+3\right)  _{m+1}}{\left(  k+4+u\right)  _{m+2}}%
-\frac{\left(  k+2\right)  \left(  u+3\right)  _{m+1}}{\left(  k+4+u\right)
_{m+3}}=\frac{\left(  u+3\right)  _{m+1}\left\{  \left(  k+6+u+m\right)
-(k+2)\right\}  }{\left(  k+4+u\right)  _{m+3}}\\
& =\frac{\left(  u+3\right)  _{m+1}\left(  u+4+m\right)  }{\left(
k+4+u\right)  _{m+3}}=\frac{\left(  u+3\right)  _{m+2}}{\left(  k+4+u\right)
_{m+3}};
\end{align*}
this completes the induction (note the reversal $\left(  -a-m\right)
_{m+1}=\left(  -1\right)  ^{m+1}\left(  a\right)  _{m+1}$). Apply the formula
with $m=k-u-2$ (since $j=u+s+1\leq k-1$)%
\begin{align*}
-\frac{k+3}{2k+3}\frac{\left(  -1-k\right)  _{k-u-1}}{\left(  -2-2k\right)
_{k-u-1}}+\frac{\left(  k+3\right)  \left(  u+3\right)  _{k-u-1}}{\left(
k+4+u\right)  _{k-u}}  & =-\frac{\left(  k+3\right)  \left(  u+3\right)
_{k-u-1}}{\left(  2k+3\right)  \left(  k+4+u\right)  _{k-u-1}}+\frac{\left(
k+3\right)  \left(  u+3\right)  _{k-u-1}}{\left(  k+4+u\right)  _{k-u}}\\
& =0,
\end{align*}
by use of the reversing formula (for $\left(  -a\right)  _{k-u-1}$) and
$\left(  2k+3\right)  \left(  k+4+u\right)  _{k-u-1}=\left(  k+4+u\right)
_{k-u}$. Thus $\chi_{d,k}\left(  1-t\right)  =1+c_{1}t^{k+1}+c_{2}t^{k+2}$.
Find the remaining two coefficients:%
\begin{align*}
\frac{k+3}{k+2}c_{1}  & =2coef\left(  S_{k}\left(  t\right)  ,t^{k+1}\right)
+\sum_{j=0}^{k-1}coef\left(  S_{j}\left(  t\right)  ,t^{j+1}\right)  \\
& =-\frac{2\left(  k+2\right)  \left(  k+3\right)  }{\left(  2k+3\right)
\left(  2k+4\right)  }\frac{\left(  -1-k\right)  _{k}}{\left(  -2-2k\right)
_{k}}-\sum_{j=0}^{k-1}\frac{\left(  k+2\right)  \left(  k+3\right)  \left(
-1-j\right)  _{j}}{\left(  k+3+j\right)  \left(  k+4+j\right)  \left(
-2-j-k\right)  _{j}}\\
& =-\frac{\left(  k+3\right)  \left(  2\right)  _{k}}{\left(  2k+3\right)
\left(  k+3\right)  _{k}}+\left(  k+3\right)  \left\{  \frac{\left(  2\right)
_{k}}{\left(  k+3\right)  _{k+1}}-\frac{1}{k+3}\right\}  =-1.
\end{align*}
using formula (\ref{sum1}) with $u=-1$ and $m=k-1$.\ Similarly%
\begin{align*}
\frac{k+3}{k+2}c_{2}  & =2coef\left(  S_{k}\left(  t\right)  ,t^{k+2}\right)
+\sum_{j=0}^{k-1}coef\left(  S_{j}\left(  t\right)  ,t^{j+2}\right)  \\
& =-\frac{2\left(  k+2\right)  \left(  k+3\right)  }{\left(  2k+3\right)
\left(  2k+4\right)  }\frac{\left(  -1-k\right)  _{k+1}}{\left(  -2-2k\right)
_{k+1}}-\sum_{j=0}^{k-1}\frac{\left(  k+2\right)  \left(  k+3\right)  \left(
-1-j\right)  _{j+1}}{\left(  k+3+j\right)  \left(  k+4+j\right)  \left(
-2-j-k\right)  _{j+1}}\\
& =-\frac{\left(  k+3\right)  \left(  1\right)  _{k+1}}{\left(  2k+3\right)
\left(  k+2\right)  _{k+1}}-\sum_{j=0}^{k-1}\frac{\left(  k+3\right)  \left(
-1-j\right)  _{j}}{\left(  k+3+j\right)  \left(  k+4+j\right)  \left(
-2-j-k\right)  _{j}}\\
& =-\frac{\left(  k+3\right)  \left(  1\right)  _{k+1}}{\left(  2k+3\right)
\left(  k+2\right)  _{k+1}}+\frac{k+3}{k+2}\left\{  \frac{\left(  2\right)
_{k}}{\left(  k+3\right)  _{k+1}}-\frac{1}{k+3}\right\}  =-\frac{1}{\left(
k+2\right)  }.
\end{align*}
We used $\left(  -1-j\right)  _{j+1}=\left(  -1\right)  \left(  -1-j\right)
_{j}$ and $\left(  -2-j-k\right)  _{j+1}=\left(  -1\right)  \left(
k+2\right)  \left(  -2-j-k\right)  _{j}$ and then the previous sum. Thus
$c_{1}=-\dfrac{k+2}{k+3}$ and $c_{2}=-\dfrac{1}{k+3}$. The factor of $t^{k+1}$
is
\[
-\dfrac{k+2}{k+3}-\dfrac{t}{k+3}=\frac{-k-2-\left(  1-\varepsilon^{2}\right)
}{k+3}=-1+\frac{\varepsilon^{2}}{k+3}%
\]
and%
\[
\chi_{2,k}\left(  \varepsilon^{2}\right)  =\left(  1-\varepsilon^{2}\right)
^{k+1}\left(  -1+\frac{\varepsilon^{2}}{k+3}\right)  +1.
\]
\subsection{Complementary second term of extension to induced measures of master Lovas-Andai formula} \label{CFDintegration}

Let us simplify
\begin{align*}
I\left(  \varepsilon\right)    & :=\frac{\Gamma\left(  1+d+k\right)  ^{2}%
}{\Gamma\left(  \frac{d}{2}\right)  ^{3}\Gamma\left(  1+\frac{d}{2}+k\right)
\Gamma\left(  1+k\right)  }\times\frac{2}{d}\varepsilon^{-d}\\
& \times\int_{0}^{1}dx\int_{\varepsilon^{2}x}^{\varepsilon^{2}}dy[\left\{
\left(  1-x\right)  \left(  1-y\right)  \right\}  ^{k}\left(  xy\right)
^{d/2-1}\left\{  \left(  \varepsilon^{2}-y\right)  \left(  1-x\varepsilon
^{2}\right)  \right\}  ^{d/2}\\
& \times~_{2}F_{1}\left(  -k,\frac{d}{2};1+\frac{d}{2};T\right)  ]\\
T  & :=\frac{\left(  \varepsilon^{2}-y\right)  \left(  1-x\varepsilon
^{2}\right)  }{\left(  1-x\right)  \left(  1-y\right)  \varepsilon^{2}}.
\end{align*}
First we apply the transformation $_{2}F_{1}\left(  a,b;c;t\right)  =\left(
1-t\right)  ^{-a}~_{2}F_{1}(a,c-b;c;\frac{t}{t-1}$), but the series on the
right side only converges if $a$ is a negative integer or $t<\frac{1}{2}$, not
the case in our application, thus \textbf{henceforth assume} $k=0,1,2,3,\ldots
$then%
\begin{align*}
1-T  & =\frac{\left(  1-\varepsilon^{2}\right)  \left(  y-x\varepsilon
^{2}\right)  }{\left(  1-x\right)  \left(  1-y\right)  \varepsilon^{2}},\\
\frac{T}{T-1}  & =-\frac{\left(  \varepsilon^{2}-y\right)  \left(
1-x\varepsilon^{2}\right)  }{\left(  1-\varepsilon^{2}\right)  \left(
y-x\varepsilon^{2}\right)  };
\end{align*}
the integrand becomes%
\begin{align*}
& \left(  xy\right)  ^{d/2-1}\left(  1-\varepsilon^{2}\right)  ^{k}%
\varepsilon^{-2k}\left(  y-x\varepsilon^{2}\right)  ^{k}\left(  1-x\varepsilon
^{2}\right)  ^{d/2}\left(  \varepsilon^{2}-y\right)  _{~}^{d/2}\\
& \times~_{2}F_{1}\left(  -k,1;1+\frac{d}{2};-\frac{\left(  \varepsilon
^{2}-y\right)  \left(  1-x\varepsilon^{2}\right)  }{\left(  1-\varepsilon
^{2}\right)  \left(  y-x\varepsilon^{2}\right)  }\right)  .
\end{align*}
Substitute $y=\varepsilon^{2}u$ so $dy=\varepsilon^{2}du$ and $0\leq x\leq
u\leq1.$ This gives a factor of $\varepsilon^{2d}$ in front of%
\begin{align*}
& \int\limits_{0\leq x\leq u\leq1}\int dx~du~\left(  xu\right)  ^{d/2-1}\\
& \sum_{j=0}^{k}\frac{\left(  -k\right)  _{j}}{\left(  1+\frac{d}{2}\right)
_{j}}\left(  -1\right)  ^{j}\left(  1-x\varepsilon^{2}\right)  ^{d/2+j}\left(
1-u\right)  ^{d/2+j}\left(  1-\varepsilon^{2}\right)  ^{k-j}\left(
u-x\right)  ^{k-j}.
\end{align*}
Isolate the $x$-integral (use the negative binomial series for $\left(
1-x\varepsilon^{2}\right)  ^{d/2+j}$)%
\begin{align*}
& \int_{0}^{u}x^{d/2-1}\left(  u-x\right)  ^{k-j}\sum_{i=0}^{\infty}%
\frac{\left(  -\frac{d}{2}-j\right)  _{i}}{i!}x^{i}\varepsilon^{2i}dx\\
& =\sum_{i=0}^{\infty}\frac{\left(  -\frac{d}{2}-j\right)  _{i}}{i!}%
\frac{\Gamma\left(  \frac{d}{2}+i\right)  \Gamma\left(  k-j+1\right)  }%
{\Gamma\left(  \frac{d}{2}+i+k-j+1\right)  }\varepsilon^{2i}u^{d/2+i+k-j},
\end{align*}
by use of $\int_{0}^{u}x^{\alpha-1}\left(  u-x\right)  ^{\beta-1}%
dx=u^{\alpha+\beta-1}B\left(  \alpha,\beta\right)  $. The inner $u$-integral
is%
\[
\int_{0}^{1}u^{d/2-1}u^{d/2+i+k-j}\left(  1-u\right)  ^{d/2+j}du=\frac
{\Gamma\left(  d+i+k-j\right)  \Gamma\left(  \frac{d}{2}+j+1\right)  }%
{\Gamma\left(  \frac{3d}{2}+i+k+1\right)  }.
\]
Thus the integral is%
\begin{align*}
& \sum_{j=0}^{k}\frac{\left(  -k\right)  _{j}}{\left(  1+\frac{d}{2}\right)
_{j}}\varepsilon^{2d}\left(  1-\varepsilon^{2}\right)  ^{k-j}\left(
-1\right)  ^{j}\\
& \times\sum_{i=0}^{\infty}\frac{\left(  -\frac{d}{2}-j\right)  _{i}}{i!}%
\frac{\Gamma\left(  \frac{d}{2}+i\right)  \Gamma\left(  k-j+1\right)  }%
{\Gamma\left(  \frac{d}{2}+i+k-j+1\right)  }\varepsilon^{2i}\frac
{\Gamma\left(  d+i+k-j\right)  \Gamma\left(  \frac{d}{2}+j+1\right)  }%
{\Gamma\left(  \frac{3d}{2}+i+k+1\right)  }\\
& =\sum_{j=0}^{k}\frac{\left(  -k\right)  _{j}}{\left(  1+\frac{d}{2}\right)
_{j}}\left(  -1\right)  ^{j}\varepsilon^{2d}\left(  1-\varepsilon^{2}\right)
^{k-j}\frac{\Gamma\left(  \frac{d}{2}\right)  \Gamma\left(  k-j+1\right)
\Gamma\left(  d+k-j\right)  \Gamma\left(  \frac{d}{2}+j+1\right)  }%
{\Gamma\left(  \frac{d}{2}+k-j+1\right)  \Gamma\left(  \frac{3d}%
{2}+k+1\right)  }\\
& \times\sum_{i=0}^{\infty}\frac{\left(  -\frac{d}{2}-j\right)  _{i}\left(
\frac{d}{2}\right)  _{i}\left(  d+k-j\right)  _{i}}{i!\left(  \frac{d}%
{2}+k-j+1\right)  _{i}\left(  \frac{3d}{2}+k+1\right)  _{i}}\varepsilon^{2i}.
\end{align*}
The last sum is a $_{3}F_{2}$ with argument $\varepsilon^{2}.$

Simplify the Gamma terms and note $\left(  -k\right)  _{j}=\left(  -1\right)
^{j}\frac{k!}{\left(  k-j\right)  !}$ and $\Gamma\left(  k-j+1\right)
=\left(  k-j\right)  !.$ Then%
\begin{align*}
& \frac{\left(  -k\right)  _{j}}{\left(  1+\frac{d}{2}\right)  _{j}}\left(
-1\right)  ^{j}\frac{\Gamma\left(  \frac{d}{2}\right)  \Gamma\left(
k-j+1\right)  \Gamma\left(  d+k-j\right)  \Gamma\left(  \frac{d}%
{2}+j+1\right)  }{\Gamma\left(  \frac{d}{2}+k-j+1\right)  \Gamma\left(
\frac{3d}{2}+k+1\right)  }\\
& =\frac{k!}{\left(  1+\frac{d}{2}\right)  _{j}}\frac{\Gamma\left(  \frac
{d}{2}\right)  \Gamma\left(  \frac{d}{2}+1\right)  \left(  \frac{d}%
{2}+1\right)  _{j}\Gamma\left(  d\right)  \left(  d\right)  _{k-j}}%
{\Gamma\left(  \frac{d}{2}+1\right)  \left(  \frac{d}{2}+1\right)
_{k-j}\Gamma\left(  \frac{3d}{2}+1+k\right)  }\\
& =\frac{k!\left(  d\right)  _{k-j}\Gamma\left(  \frac{d}{2}\right)
\Gamma\left(  d\right)  }{\left(  \frac{d}{2}+1\right)  _{k-j}\Gamma\left(
\frac{3d}{2}+1+k\right)  }.
\end{align*}
Combine the factors 
\begin{align*}
I\left(  \varepsilon\right)    & =\frac{2\Gamma\left(  1+d+k\right)
^{2}\Gamma\left(  \frac{d}{2}\right)  \Gamma\left(  d\right)  k!\varepsilon
^{d}}{d\Gamma\left(  \frac{d}{2}\right)  ^{3}\Gamma\left(  1+\frac{d}%
{2}+k\right)  \Gamma\left(  1+k\right)  \Gamma\left(  \frac{3d}{2}+1+k\right)
}\\
& \times\sum_{j=0}^{k}\frac{\left(  d\right)  _{k-j}}{\left(  \frac{d}%
{2}+1\right)  _{k-j}}\left(  1-\varepsilon^{2}\right)  ^{k-j}~_{3}F_{2}\left(
%
%TCIMACRO{\QATOP{-d/2-j,d/2,d+k-j}{1+d/2+k-j,1+k+3d/2}}%
%BeginExpansion
\genfrac{}{}{0pt}{}{-d/2-j,d/2,d+k-j}{1+d/2+k-j,1+k+3d/2}%
%EndExpansion
;\varepsilon^{2}\right)  .
\end{align*}
The first line simplifies to%
\[
\frac{\Gamma\left(  1+d+k\right)  ^{2}\Gamma\left(  1+d\right)  }%
{2\Gamma\left(  1+\frac{d}{2}+k\right)  \Gamma\left(  1+\frac{d}{2}\right)
^{2}\Gamma\left(  1+\frac{3d}{2}+k\right)  }\varepsilon^{d}%
\]
(apparently agrees with the postulated $k=0$ expression).
\subsection{Integration problem}
Integration problem, continued, P.S., C.D. 5/23/18, 6/2/18, 6/11/18

Consider the function defined by%
\begin{gather*}
\chi_{d,k}\left(  \varepsilon^{2}\right)  :=\frac{\Gamma\left(  1+d+k\right)
^{2}}{\Gamma\left(  \frac{d}{2}\right)  ^{3}\Gamma\left(  1+\frac{d}%
{2}+k\right)  \Gamma\left(  1+k\right)  }\times\frac{2}{d}\varepsilon^{d}\\
\times\sum_{j=0}^{k}c_{j}\left(  k,\frac{d}{2}\right)  \int_{0}^{1}\int
_{0}^{u}\left(  1-x\varepsilon^{2}\right)  ^{d/2+j}\left(  1-u\right)
^{d/2+j}\left(  1-\varepsilon^{2}\right)  ^{k-j}\left(  u-x\right)
^{k-j}\left(  xu\right)  ^{d/2-1}\mathrm{dx~du},
\end{gather*}
where $c_{k}\left(  k,\frac{d}{2}\right)  =2\frac{k!}{\left(  1+d/2\right)
_{k}}$ and $c_{j}\left(  k,\frac{d}{2}\right)  =\dfrac{\left(  -1\right)
^{j}\left(  -k\right)  _{j}}{\left(  1+\frac{d}{2}\right)  _{j}}$ if $0\leq
j<k$. Furthermore $k$ is a nonnegative integer. In these sections we will show
that  $\chi_{d,k}\left(  \varepsilon^{2}\right)  -1$ is divisible by $\left(
1-\varepsilon^{2}\right)  ^{k+1}$ and when $d$ is an even integer we will
derive formulas for the coefficients of the polynomial $p_{d}\left(
\varepsilon^{2}\right)  $ where $\chi_{d,k}\left(  \varepsilon^{2}\right)
=1+\left(  1-\varepsilon^{2}\right)  ^{k+1}p_{d}\left(  \varepsilon
^{2}\right)  $ and $p_{d}$ is of degree $d-1$ in $\varepsilon^{2}$. The
formula for the coefficient of $\varepsilon^{d+2j}$ is a sum over
approximately $j^{3}/6$ terms, for any value of $k.$

Introduce renamed variables: $d=2a$ and $\varepsilon^{2}=z.$ Change the
summation index to $l=k-j$.%
\begin{align*}
\chi_{2a,k}\left(  z\right)   &  :=\frac{\Gamma\left(  1+2a+k\right)  ^{2}%
}{a\Gamma\left(  a\right)  ^{3}\Gamma\left(  1+a+k\right)  k!}\times\sum
_{l=0}^{k}c_{k-l}\left(  k,a\right)  \\
&  \int_{0}^{1}\int_{0}^{u}z^{a}\left(  1-xz\right)  ^{a+k-l}\left(
1-u\right)  ^{a+k-l}\left(  1-z\right)  ^{l}\left(  u-x\right)  ^{l}\left(
xu\right)  ^{a-1}\mathrm{dx~du}.
\end{align*}

\subsubsection{Derivatives at $z=1$}

We will show that $\chi_{2a,k}\left(  1\right)  =1$ and $\chi_{2a,k}\left(
z\right)  -1$ is divisible by $\left(  1-z\right)  ^{k+1}$, equivalently%
\[
\left(  \frac{d}{dz}\right)  ^{m}\chi_{2a,k}\left(  z\right)  |_{z=1}=0,~1\leq
m\leq k.
\]
There is a trick: carrying out $\left(  \frac{d}{dz}\right)  ^{m}$ and
evaluating the integrand at $z=1$ results in an expression symmetric in $x,u$.
[ Suppose $f\left(  x,u\right)  =f\left(  u,x\right)  $ then $\int_{0}%
^{1}du\int_{0}^{u}f\left(  x,u\right)  dx=\frac{1}{2}\int_{0}^{1}du\int
_{0}^{1}f\left(  x,u\right)  dx$ -- the region of integration is the half of
the unit square under the main diagonal $x=u$.] Then the individual integrals
are ordinary Beta functions. Start with the integral for $\chi_{2a,k}\left(
1\right)  $, only the $l=0$ term remains (observe the symmetry):%
\begin{align*}
&  \int_{0}^{1}\int_{0}^{u}\left(  1-x\right)  ^{a+k}\left(  1-u\right)
^{a+k}\left(  xu\right)  ^{a-1}\mathrm{dx~du}\\
&  =\frac{1}{2}B(a+k+1,a)^{2}=\frac{\Gamma\left(  a+k+1\right)  ^{2}%
\Gamma\left(  a\right)  ^{2}}{2\Gamma\left(  2a+k+1\right)  ^{2}}%
\end{align*}
so%
\begin{align*}
\chi_{2a,k}\left(  1\right)   &  :=\frac{\Gamma\left(  1+2a+k\right)  ^{2}%
}{a\Gamma\left(  a\right)  ^{3}\Gamma\left(  1+a+k\right)  k!}\left(
2\frac{k!}{\left(  1+a\right)  _{k}}\right)  \frac{\Gamma\left(  a+k+1\right)
^{2}\Gamma\left(  a\right)  ^{2}}{2\Gamma\left(  2a+k+1\right)  ^{2}}\\
&  =\frac{\Gamma\left(  a+k+1\right)  }{a\Gamma\left(  a\right)  \left(
1+a\right)  _{k}}=1.
\end{align*}
To differentiate the integrand we use the simple formula (write $\partial
_{z}=\frac{d}{dz}$) (by the product formula $\partial_{z}^{m}\left[  f\left(
z\right)  g\left(  z\right)  \right]  =\sum_{j=0}^{m}\binom{m}{j}\left(
\partial_{z}^{m-j}f\left(  z\right)  \right)  \left(  \partial_{z}^{j}g\left(
z\right)  \right)  $)
\[
\partial_{z}^{m}\left[  f\left(  z\right)  \left(  1-z\right)  ^{l}\right]
|_{z=1}=\left(  -m\right)  _{l}\partial_{z}^{m-l}f\left(  1\right)  .
\]
Apply the formula to $f\left(  z\right)  =z^{a}\left(  1-xz\right)  ^{a+k-l}$:%
\begin{align*}
&  \left(  -m\right)  _{l}\sum_{i=0}^{m-l}\binom{m-l}{i}\left(  -a\right)
_{i}z^{a-i}\left(  -1\right)  ^{i}\left(  -a-k+l\right)  _{m-l-i}%
x^{m-l-i}\left(  1-xz\right)  ^{a+k-m+i}\\
&  =\sum_{i=0}^{m-l}\frac{m!\left(  -1\right)  ^{l+i}}{i!\left(  m-l-i\right)
!}\left(  -a\right)  _{i}\left(  -a-k+l\right)  _{m-l-i}x^{m-l-i}\left(
1-x\right)  ^{a+k-m+i},
\end{align*}
at $z=1$. For simplification let (the third line is a restatement of the
second one)%
\begin{align*}
c_{l}^{\prime}\left(  k,a\right)   &  :=c_{k-l}\left(  k,a\right)
\frac{\left(  1+a\right)  _{k}}{k!},\\
c_{l}^{\prime}\left(  k,a\right)   &  =\frac{\left(  1+a+k-l\right)  _{l}}%
{l!},1\leq l\leq k;~c_{0}^{\prime}\left(  k,a\right)  =2;\\
c_{l}^{\prime}\left(  k,a\right)   &  =\left(  1+\delta_{o,l}\right)  \left(
-1\right)  ^{l}\frac{\left(  -a-k\right)  _{l}}{l!}.
\end{align*}
The integrand is%
\begin{align*}
&  \sum_{l=0}^{m}\left(  1+\delta_{l,0}\right)  \left(  u-x\right)  ^{l}%
\sum_{i=0}^{m-l}\frac{m!\left(  -1\right)  ^{i}}{l!i!\left(  m-l-i\right)
!}\left(  -a\right)  _{i}\left(  -a-k\right)  _{m-i}\\
&  \times x^{m-l-i}\left(  1-x\right)  ^{a+k-m+i}\left(  1-u\right)
^{a+k-l}\left(  xu\right)  ^{a-1}.
\end{align*}
First we show the expression is symmetric in $\left(  u,x\right)  $ by
changing the order of summation and setting $i=m-s$ so that $0\leq l\leq s\leq
m$. We obtain%
\begin{align*}
&  \sum_{s=0}^{m}\left(  -1\right)  ^{m-s}\frac{m!}{\left(  m-s\right)
!s!}\left(  -a\right)  _{m-s}\left(  -a-k\right)  _{s}\left[  \left(
1-x\right)  \left(  1-u\right)  \right]  ^{a+k-s}\left(  xu\right)  ^{a-1}\\
&  \times\sum_{l=0}^{s}\left(  1+\delta_{l,0}\right)  \frac{s!}{l!\left(
s-l\right)  !}x^{s-l}\left(  1-u\right)  ^{s-l}\left(  u-x\right)  ^{l}\\
&  =\sum_{s=0}^{m}\left(  -1\right)  ^{m-s}\frac{m!}{\left(  m-s\right)
!s!}\left(  -a\right)  _{m-s}\left(  -a-k\right)  _{s}\left[  \left(
1-x\right)  \left(  1-u\right)  \right]  ^{a+k-s}\left(  xu\right)  ^{a-1}\\
&  \times\left\{  \left(  x\left(  1-u\right)  +u-x\right)  ^{s}+x^{s}\left(
1-u\right)  ^{s}\right\}  .
\end{align*}
Note the term in $\left\{  \cdot\right\}  $ is $u^{s}\left(  1-x\right)
^{s}+x^{x}\left(  1-u\right)  ^{s}$, symmetric as claimed. Thus the integrand
is%
\begin{align*}
&  \sum_{s=0}^{m}\left(  -1\right)  ^{m-s}\frac{m!}{\left(  m-s\right)
!s!}\left(  -a\right)  _{m-s}\left(  -a-k\right)  _{s}\left(  1-x\right)
^{a+k-s}x^{a+s-1}\left(  1-u\right)  ^{a+k}u^{a-1}\\
&  +\sum_{s=0}^{m}\left(  -1\right)  ^{m-s}\frac{m!}{\left(  m-s\right)
!s!}\left(  -a\right)  _{m-s}\left(  -a-k\right)  _{s}\left(  1-u\right)
^{a+k-s}u^{a+s-1}\left(  1-x\right)  ^{a+k}x^{a-1}.
\end{align*}
It suffices to integrate the first line over the unit square with the result%
\begin{align*}
&  \sum_{s=0}^{m}\left(  -1\right)  ^{m-s}\frac{m!}{\left(  m-s\right)
!s!}\left(  -a\right)  _{m-s}\left(  -a-k\right)  _{s}\frac{\Gamma\left(
a+s\right)  \Gamma\left(  a+k-s+1\right)  \Gamma\left(  a\right)
\Gamma\left(  a+k+1\right)  }{\Gamma\left(  2a+k+1\right)  ^{2}}\\
&  =\frac{\Gamma\left(  a\right)  ^{2}\Gamma\left(  a+k+1\right)  ^{2}}%
{\Gamma\left(  2a+k+1\right)  ^{2}}\left(  -1\right)  ^{m}\sum_{s=0}^{m}%
\frac{m!}{\left(  m-s\right)  !s!}\left(  -a\right)  _{m-s}\left(  a\right)
_{s},
\end{align*}
because $\Gamma\left(  a+s\right)  =\Gamma\left(  a\right)  \left(  a\right)
_{s}$ and $\Gamma\left(  a+k-s+1\right)  =\left(  -1\right)  ^{s}\Gamma\left(
a+k+1\right)  /\left(  -a-k\right)  _{s}$. The sum equals $\left(
-a+a\right)  _{m}=0$ for $1\leq m\leq k$ (this follows from $\sum
\limits_{s=0}^{m}\dfrac{\left(  \alpha\right)  _{m-s}\left(  \beta\right)
_{s}}{\left(  m-s\right)  !s!}=\dfrac{\left(  \alpha+\beta\right)  _{m}}{m!}$,
a version of the Chu-Vandermonde sum). We have shown:

\begin{proposition}
Suppose $a>0$ then $\chi_{2a,k}\left(  1\right)  =1$ and $\left(  \frac{d}%
{dz}\right)  ^{m}\chi_{2a,k}\left(  z\right)  |_{z=1}=0$ for $1\leq m\leq k$,
that is, $\chi_{2a,k}\left(  z\right)  -1$ is divisible by $\left(
1-z\right)  ^{k+1}$.
\end{proposition}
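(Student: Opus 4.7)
The plan is to establish both $\chi_{2a,k}(1)=1$ and the vanishing of the first $k$ derivatives at $z=1$ by exploiting a symmetrization trick. The integration domain is the triangle $\{0 \le x \le u \le 1\}$, on which the integrand is not in general symmetric in $(x,u)$; however, after evaluating (or appropriately differentiating) at the critical point $z=1$, the surviving expression can be regrouped into a form that \emph{is} symmetric in $(x,u)$. Once that symmetry is in hand, the triangular integral equals half of the full-square integral, which factors into a product of Beta functions, and the claims reduce to showing that a remaining combinatorial sum vanishes.

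For the value at $z=1$, only the $l=0$ summand survives (all others carry a factor $(1-z)^l$), and the remaining integrand $(1-x)^{a+k}(1-u)^{a+k}(xu)^{a-1}$ is manifestly symmetric. The triangle integral is therefore $\frac{1}{2} B(a,a+k+1)^2$. Multiplying by the explicit prefactor and by $c_0'(k,a)=2$ collapses the result, via $\Gamma(a+k+1)=a(1+a)_k \Gamma(a)$, to exactly $1$.

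For the derivatives $1 \le m \le k$, apply the Leibniz rule to the factor $z^a (1-xz)^{a+k-l}(1-z)^l$ in the integrand. Using the elementary identity $\partial_z^m[f(z)(1-z)^l]|_{z=1} = (-m)_l\,\partial_z^{m-l} f(1)$, only summands with $l \le m \le k$ contribute, and expanding the remaining derivative of $z^a (1-xz)^{a+k-l}$ produces a double sum in indices $(l,i)$ with $0 \le l+i \le m$. Re-index with $s=m-i$ so that $0 \le l \le s \le m$. The decisive step is recognizing by the binomial theorem that the inner $l$-sum equals $(x(1-u)+(u-x))^s + x^s(1-u)^s = u^s(1-x)^s + x^s(1-u)^s$, which is symmetric in $(x,u)$. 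With symmetry restored, the triangle integral equals half of the unit-square integral, which factors into Beta functions. After simplification with $\Gamma(a+s)=\Gamma(a)(a)_s$ and $\Gamma(a+k-s+1)=(-1)^s\Gamma(a+k+1)/(-a-k)_s$, the remaining sum is $\sum_{s=0}^{m}\binom{m}{s}(-a)_{m-s}(a)_s = (-a+a)_m = (0)_m$, a Chu--Vandermonde convolution that vanishes for every $m \ge 1$.

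The principal obstacle is the symmetrization in the derivative computation: identifying the inner $l$-sum as $u^s(1-x)^s + x^s(1-u)^s$ rests on the non-obvious algebraic regrouping $x(1-u)+(u-x)=u(1-x)$ inside a binomial expansion, and is the conceptual pivot that converts a constrained integral into an unconstrained product. Once this identity is secured, the remaining Beta-function evaluations and the Chu--Vandermonde collapse are routine manipulations, and the conclusion $\chi_{2a,k}(z) - 1 \equiv 0 \pmod{(1-z)^{k+1}}$ follows immediately.
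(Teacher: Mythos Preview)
Your proposal is correct and follows essentially the same route as the paper's own argument: the symmetrization trick on the triangle, the identity $\partial_z^m[f(z)(1-z)^l]|_{z=1}=(-m)_l\,\partial_z^{m-l}f(1)$, the re-indexing $s=m-i$, the binomial regrouping $x(1-u)+(u-x)=u(1-x)$ yielding the symmetric term $u^s(1-x)^s+x^s(1-u)^s$, and the Chu--Vandermonde collapse $\sum_s\binom{m}{s}(-a)_{m-s}(a)_s=(0)_m$ are exactly the steps the paper takes.
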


\begin{corollary}
Suppose $a=1,2,3,\ldots$ then $\chi_{2a,k}\left(  z\right)  =1+\left(
1-z\right)  ^{k+1}p_{2a,k}\left(  z\right)  $ where $p_{2a,k}$ is a polynomial
of degree $2a-1$ in $z$.
\end{corollary}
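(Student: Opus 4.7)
The Corollary falls out of the Proposition once polynomiality and a crude degree bound are in place. My plan has two main steps.

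First, I would verify that $\chi_{2a,k}(z)$ is itself a polynomial in $z$ whenever $a$ is a positive integer. In the explicit integral representation given just above the Proposition, the only factors of the integrand that depend on $z$ are $z^{a}$, $(1-z)^{l}$ and $(1-xz)^{a+k-l}$. Since $a$, $l$ and $a+k-l$ are all nonnegative integers, each of these is a polynomial in $z$ whose coefficients are polynomials in $x,u$. Expanding $(1-xz)^{a+k-l}$ by the binomial theorem reduces the integration at each power of $z$ to a standard iterated Beta integral of the shape $\int_{0}^{1}\!\int_{0}^{u}x^{\alpha}u^{\beta}(1-u)^{\gamma}(u-x)^{l}\,dx\,du$, which evaluates to an explicit rational number. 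Summing the finitely many terms, $\chi_{2a,k}(z)$ is a polynomial, and the $l$-th contribution carries $z$-degree at most $a+l+(a+k-l)=2a+k$, independently of $l$, so $\deg\chi_{2a,k}(z)\le 2a+k$.

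Second, I would invoke the Proposition. It asserts that $(1-z)^{k+1}$ divides $\chi_{2a,k}(z)-1$, and since we have just established that $\chi_{2a,k}$ is a polynomial, the divisibility is in the polynomial ring. Consequently the quotient $p_{2a,k}(z):=(\chi_{2a,k}(z)-1)/(1-z)^{k+1}$ is a polynomial with $\deg p_{2a,k}\le(2a+k)-(k+1)=2a-1$, which is the content of the Corollary.

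If in addition one wishes to interpret ``degree $2a-1$'' strictly, the remaining task is to show that the coefficient of $z^{2a+k}$ in $\chi_{2a,k}(z)$ is nonzero. Reading it off, this leading coefficient is a finite weighted sum over $l$ of products of the constants $c_{k-l}(k,a)$ with the explicit Beta values computed in step one. The main obstacle of the whole plan, if any, would be showing that this combination does not telescope to zero; I would attempt it with a Chu-Vandermonde-type identity, in the same spirit as the manipulations carried out in the Proposition's proof. For the uses of the Corollary later in the paper, however, the weaker bound $\deg p_{2a,k}\le 2a-1$ already suffices.
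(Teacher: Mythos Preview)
Your proposal is correct and matches the paper's implicit reasoning: the Corollary is stated without proof because it follows at once from the Proposition together with the obvious observation that, for integer $a$, the integrand is polynomial in $z$ of degree $a+l+(a+k-l)=2a+k$. Your remark about the strict degree is also apt; the paper does not verify nonvanishing of the top coefficient in general either, and later uses only the bound $\deg p_{2a,k}\le 2a-1$.
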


\subsubsection{Coefficients}

Recall%
\begin{align*}
c_{l}^{\prime}\left(  k,a\right)   &  :=c_{k-l}\left(  k,a\right)
\frac{\left(  1+a\right)  _{k}}{k!},\\
c_{l}^{\prime}\left(  k,a\right)   &  =\left(  1+\delta_{0,l}\right)  \left(
-1\right)  ^{l}\frac{\left(  -a-k\right)  _{l}}{l!},0\leq l\leq k.
\end{align*}

It remains to find the coefficients of $z$ in $\left(  \chi_{2a,k}\left(
z\right)  -1\right)  \left(  1-z\right)  ^{-k-1}$. This is extracted from%
\begin{align}
&  \frac{\Gamma\left(  1+2a+k\right)  ^{2}}{\Gamma\left(  a\right)  ^{2}%
\Gamma\left(  1+a+k\right)  ^{2}}\times\nonumber\\
&  \sum_{l=0}^{k}c_{l}^{\prime}\left(  k,a\right)  \int_{0}^{1}\int_{0}%
^{u}z^{a}\left(  1-xz\right)  ^{a+k-l}\left(  1-u\right)  ^{a+k-l}\left(
1-z\right)  ^{l-k-1}\left(  u-x\right)  ^{l}\left(  xu\right)  ^{a-1}%
\mathrm{dx~du}\label{z-int}\\
&  -\left(  1-z\right)  ^{-k-1}.\nonumber
\end{align}
The first line has absorbed the common factor $\frac{k!}{\left(  1+a\right)
_{k}}$. Observe that $z^{i}$ for $i<a$ can occur only in the third line,
indeed the coefficient is $-\dfrac{\left(  k+1\right)  _{i}}{i!}$, so this
takes care of $i<\frac{d}{2}$. The remaining ones require summing over $0\leq
l\leq k$.

The coefficient of $z^{a}$ is $\dfrac{\Gamma\left(  2a+k+1\right)  ^{3}%
}{\Gamma\left(  a+k+1\right)  ^{2}\Gamma\left(  3a+k+1\right)  a!}%
-\dfrac{\left(  k+1\right)  _{a}}{a!}$ for $a=1,2,3,\ldots$, equivalently
$\dfrac{\left(  a+k+1\right)  _{a}^{2}}{a!\left(  2a+k+1\right)  _{a}}%
-\dfrac{\left(  k+1\right)  _{a}}{a!}.$

Recall the rescaled Beta integral: $\int_{0}^{u}x^{\alpha-1}\left(
u-x\right)  ^{\beta-1}\mathrm{dx}=u^{\alpha+\beta-1}B\left(  \alpha
,\beta\right)  $. We begin by computing a basic integral%
\begin{align*}
I_{0}\left(  a,k,l,s\right)   &  :=\int_{0}^{1}\int_{0}^{u}x^{s}\left(
1-u\right)  ^{a+k-l}\left(  u-x\right)  ^{l}\left(  xu\right)  ^{a-1}%
\mathrm{dx~du}\\
&  =B\left(  a+s,l+1\right)  \int_{0}^{1}u^{2a+s+l-1}\left(  1-u\right)
^{a+k-l}\mathrm{du}\\
&  =B\left(  a+s,l+1\right)  B\left(  2a+s+l,a+k-l+1\right) \\
&  =\frac{\Gamma\left(  a+s\right)  l!\Gamma\left(  2a+s+l\right)
\Gamma\left(  a+k-l+1\right)  }{\Gamma\left(  a+s+l+1\right)  \Gamma\left(
3a+k+s+1\right)  }.
\end{align*}
The coefficient of $z^{j}$ in $\left(  1-xz\right)  ^{a+k-l}\left(
1-z\right)  ^{l-k-1}$ is $\sum\limits_{s=0}^{j}\dfrac{\left(  k+1-l\right)
_{j-s}\left(  l-a-k\right)  _{s}}{\left(  j-s\right)  !s!}x^{s}$. Thus the
coefficient of $z^{a+j}$ in (\ref{z-int}) is%
\[
\sum_{l=0}^{k}\sum\limits_{s=0}^{j}c_{l}^{\prime}\left(  k,a\right)
\dfrac{\left(  k+1-l\right)  _{j-s}\left(  l-a-k\right)  _{s}}{\left(
j-s\right)  !s!}I_{0}\left(  a,k,l,s\right)  .
\]
Aiming to sum over $0\leq l\leq k$ we first rewrite the $\left(  l,s\right)  $
term (observe $\Gamma\left(  a+k-l+1\right)  =\Gamma\left(  a+k+1\right)
\left(  -1\right)  ^{l}/\left(  -a-k\right)  _{l}$)
\begin{align*}
&  \frac{\Gamma\left(  a+s\right)  l!\Gamma\left(  2a+s\right)  \Gamma\left(
a+k+1\right)  }{\Gamma\left(  a+s+1\right)  \Gamma\left(  3a+k+s+1\right)
\left(  j-s\right)  !s!}\frac{\left(  2a+s\right)  _{l}}{\left(  a+s+1\right)
_{l}\left(  -a-k\right)  _{l}~l!}\\
&  \times\left(  1+\delta_{0,l}\right)  \left(  -1\right)  ^{l}\left(
k+1-l\right)  _{j-s}\left(  l-a-k\right)  _{s}\left(  1+a+k-l\right)
_{l}\left(  -1\right)  ^{l}%
\end{align*}
and $\left(  1+a+k-l\right)  _{l}\left(  -1\right)  ^{l}=\left(  -a-k\right)
_{l}$, thus the term equals%
\begin{align*}
&  \frac{\Gamma\left(  2a\right)  \Gamma\left(  a+k+1\right)  }{\Gamma\left(
3a+k+1\right)  }\frac{\left(  2a\right)  _{s}}{\left(  a+s\right)  \left(
3a+k+1\right)  _{s}\left(  j-s\right)  !s!}\\
&  \times\frac{\left(  2a+s\right)  _{l}}{\left(  a+s+1\right)  _{l}}\left(
1+\delta_{0,l}\right)  \left(  k+1-l\right)  _{j-s}\left(  l-a-k\right)  _{s}.
\end{align*}
Collect the prefactors (independent of $\left(  s,j\right)  $)%
\[
\frac{\Gamma\left(  1+2a+k\right)  ^{2}}{\Gamma\left(  a\right)  ^{2}%
\Gamma\left(  1+a+k\right)  ^{2}}\frac{\Gamma\left(  2a\right)  \Gamma\left(
a+k+1\right)  }{\Gamma\left(  3a+k+1\right)  }=\frac{\Gamma\left(  2a\right)
\Gamma\left(  1+2a+k\right)  ^{2}}{\Gamma\left(  a\right)  ^{2}\Gamma\left(
1+a+k\right)  \Gamma\left(  3a+k+1\right)  }.
\]
Compute
\begin{align*}
C_{j}\left(  a,k\right)  :=  &  \sum_{s=0}^{j}\frac{\left(  2a\right)  _{s}%
}{\left(  a+s\right)  \left(  3a+k+1\right)  _{s}\left(  j-s\right)  !s!}\\
&  \times\left\{  \sum_{l=0}^{k}\frac{\left(  2a+s\right)  _{l}}{\left(
a+s+1\right)  _{l}}\left(  k+1-l\right)  _{j-s}\left(  l-a-k\right)
_{s}+\left(  k+1\right)  _{j-s}\left(  -a-k\right)  _{s}\right\}  .
\end{align*}
(Observe the doubled $l=0$ term.) The $l$-sum looks hypergeometric but that is
the wrong/useless concept.

\begin{lemma}
\label{easysum}For variables $\alpha,\beta$ and $n=0,1,2,\ldots$%
\[
\sum_{i=0}^{n}\frac{\left(  \alpha\right)  _{i}}{\left(  \beta\right)  _{i}%
}=\frac{1}{\alpha-\beta+1}\left\{  \frac{\left(  \alpha\right)  _{n+1}%
}{\left(  \beta\right)  _{n}}-\beta+1\right\}
\]

\end{lemma}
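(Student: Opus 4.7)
The plan is to prove this identity by a one-line telescoping argument: exhibit an antidifference $g(i)$ for the summand, so that $g(i+1)-g(i)=(\alpha)_i/(\beta)_i$, and then collapse the sum. Motivated by Gosper's algorithm (the summand is a rational multiple of a Pochhammer ratio, so the candidate antidifference must be of the form $r(i)(\alpha)_i/(\beta)_i$ for a linear rational function $r$), I would guess
\[
g(i)=\frac{(\alpha)_i}{(\alpha-\beta+1)\,(\beta)_{i-1}}.
\]
Then
\[
g(i+1)-g(i)=\frac{(\alpha)_i}{(\alpha-\beta+1)\,(\beta)_i}\bigl[(\alpha+i)-(\beta+i-1)\bigr]=\frac{(\alpha)_i}{(\beta)_i},
\]
since the bracketed factor is precisely $\alpha-\beta+1$ and cancels the prefactor.

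Summing $i=0,\ldots,n$ telescopes to $g(n+1)-g(0)$. The upper endpoint contributes $(\alpha)_{n+1}/[(\alpha-\beta+1)(\beta)_n]$. For the lower endpoint, one uses the standard extension $(\beta)_{-1}=1/(\beta-1)$, giving $g(0)=(\beta-1)/(\alpha-\beta+1)$. Combining yields exactly
\[
\sum_{i=0}^{n}\frac{(\alpha)_i}{(\beta)_i}=\frac{1}{\alpha-\beta+1}\left\{\frac{(\alpha)_{n+1}}{(\beta)_n}-\beta+1\right\},
\]
as claimed.

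If one wishes to avoid the negative-index Pochhammer entirely, the alternative is a direct induction on $n$: the base case $n=0$ collapses to $1=(\alpha-\beta+1)/(\alpha-\beta+1)$, and the inductive step reduces to verifying the single identity
\[
\frac{(\alpha)_{n+2}}{(\beta)_{n+1}}-\frac{(\alpha)_{n+1}}{(\beta)_n}=(\alpha-\beta+1)\frac{(\alpha)_{n+1}}{(\beta)_{n+1}},
\]
which again follows from $(\alpha+n+1)-(\beta+n)=\alpha-\beta+1$ after factoring $(\alpha)_{n+1}/(\beta)_{n+1}$. There is no real obstacle here—this is an elementary rational-function identity, and both routes are essentially one calculation. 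The only subtle point is the boundary contribution $g(0)$ in the telescoping proof, which is why I would present the induction version in the paper for maximum clarity, relegating the telescoping discovery of $g$ to a brief motivating remark.
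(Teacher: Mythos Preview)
Your proposal is correct, and the induction argument you outline as the alternative is exactly the proof the paper gives: verify $n=0$, then add $(\alpha)_{n+1}/(\beta)_{n+1}$ and simplify using $(\alpha+n+1)-(\beta+n)=\alpha-\beta+1$. The telescoping version with the antidifference $g(i)=(\alpha)_i/[(\alpha-\beta+1)(\beta)_{i-1}]$ is a valid and slightly slicker packaging of the same computation, the only cosmetic cost being the appeal to $(\beta)_{-1}=1/(\beta-1)$ at the lower endpoint.
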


\begin{proof}
Proceed by induction. For $n=0$ the right hand side is $\frac{1}{\alpha
-\beta+1}\left(  \alpha-\beta+1\right)  =1$. Suppose the formula is true for
$n$ then add $\left(  \alpha\right)  _{n+1}/\left(  \beta\right)  _{n+1}$ to
both sides. The right hand side becomes%
\begin{align*}
&  \frac{1}{\alpha-\beta+1}\frac{\left(  \alpha\right)  _{n+1}}{\left(
\beta\right)  _{n}}+\frac{\left(  \alpha\right)  _{n+1}}{\left(  \beta\right)
_{n+1}}-\frac{\beta-1}{\alpha-\beta+1}\\
&  =\frac{\left(  \alpha\right)  _{n+1}}{\left(  \beta\right)  _{n}}\left\{
\frac{1}{\alpha-\beta+1}+\frac{1}{\beta+n}\right\}  -\frac{\beta-1}%
{\alpha-\beta+1}\\
&  =\frac{\left(  \alpha\right)  _{n+1}}{\left(  \beta\right)  _{n}}\left\{
\frac{\beta+n+\alpha-\beta+1}{\left(  \alpha-\beta+1\right)  \left(
\beta+n\right)  }\right\}  -\frac{\beta-1}{\alpha-\beta+1}=\frac{1}%
{\alpha-\beta+1}\left\{  \frac{\left(  \alpha\right)  _{n+2}}{\left(
\beta\right)  _{n+1}}-\beta+1\right\}  ,
\end{align*}
and this completes the induction.
\end{proof}

The trick to the desired sum is to expand $\left(  k+1-l\right)  _{j-s}\left(
l-a-k\right)  _{s}$ as a sum of $\left(  2\alpha+s+l\right)  _{i}$ with $0\leq
i\leq j$ (this is considering the terms as polynomials in $l$ of degree $\leq
j$).

\begin{lemma}
\label{sum0}Suppose $\alpha,\beta,\lambda,\mu$ are variables and
$n=0,1,2,\ldots,0\leq s\leq j$, then%
\[
\sum_{i=0}^{n}\frac{\left(  \alpha\right)  _{i}}{\left(  \beta\right)  _{i}%
}\left(  \lambda-i\right)  _{j-s}\left(  \mu+i\right)  _{s}=\sum_{u=0}%
^{j}\frac{c_{u}\left(  \alpha,\lambda,\mu,j,s\right)  \left(  \alpha\right)
_{u}}{\alpha+u-\beta+1}\left\{  \frac{\left(  \alpha+u\right)  _{n+1}}{\left(
\beta\right)  _{n}}-\beta+1\right\}  ,
\]
where%
\[
c_{u}\left(  \alpha,\lambda,\mu,j,s\right)  =\sum_{v=0}^{u}\frac{\left(
-1\right)  ^{v}}{v!\left(  u-v\right)  !}\left(  \lambda+\alpha+v\right)
_{j-s}\left(  \mu-\alpha-v\right)  _{s}.
\]

\end{lemma}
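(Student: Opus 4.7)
The plan is to recognize $P(i) := (\lambda-i)_{j-s}(\mu+i)_s$ as a polynomial in $i$ of degree $j$ and expand it in the basis $\{(\alpha+i)_u\}_{u=0}^{j}$ of the space of polynomials of degree $\le j$. Writing $P(i)=\sum_{u=0}^{j} c_u(\alpha+i)_u$ for suitable coefficients $c_u$, the Pochhammer identity $(\alpha)_i(\alpha+i)_u=(\alpha)_{u+i}=(\alpha)_u(\alpha+u)_i$ lets one rewrite the left-hand side of the lemma as
\[
\sum_{u=0}^{j} c_u(\alpha)_u\sum_{i=0}^{n}\frac{(\alpha+u)_i}{(\beta)_i}.
\]
Each inner sum is then immediately evaluated by Lemma~\ref{easysum} with $\alpha$ replaced by $\alpha+u$, producing exactly the factor $\tfrac{1}{\alpha+u-\beta+1}\{(\alpha+u)_{n+1}/(\beta)_n-\beta+1\}$ claimed on the right-hand side of Lemma~\ref{sum0}. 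So once the coefficients $c_u$ are identified, the lemma follows by interchange of summation.

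The bulk of the work is computing $c_u$ explicitly. I would change variables $y:=i+\alpha$ and set $Q(y):=P(y-\alpha)=(\lambda+\alpha-y)_{j-s}(\mu-\alpha+y)_s$, so the expansion becomes $Q(y)=\sum_{u}c_u(y)_u$, i.e.\ an expansion of a polynomial of degree $j$ in the rising-factorial basis. Evaluating at $y=-v$ for $v=0,1,\ldots,j$ and using $(-v)_u=(-1)^u v!/(v-u)!$ when $u\le v$ and $(-v)_u=0$ when $u>v$, one gets the strictly lower-triangular relation
\[
Q(-v)=v!\sum_{u=0}^{v}\frac{(-1)^u c_u}{(v-u)!}.
\]
Inverting this system — cleanly via the exponential-generating-function identity $\sum_{u}(-1)^u c_u x^u=e^{-x}\sum_{v}(Q(-v)/v!)x^v$, or by a direct induction on $u$ — yields
\[
c_u=\sum_{v=0}^{u}\frac{(-1)^v}{v!(u-v)!}\,Q(-v)=\sum_{v=0}^{u}\frac{(-1)^v}{v!(u-v)!}(\lambda+\alpha+v)_{j-s}(\mu-\alpha-v)_s,
\]
which is precisely the coefficient $c_u(\alpha,\lambda,\mu,j,s)$ stated in the lemma.

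The main obstacle, and the one spot requiring care, is the inversion/basis-change step: one must read it as an identity of rational functions in $\alpha,\beta,\lambda,\mu$ so that it persists away from the exceptional loci where denominators in Lemma~\ref{easysum} vanish (namely $\alpha+u-\beta+1=0$), with the standard limiting interpretation at those values. Beyond Lemma~\ref{easysum}, the Pochhammer identity $(\alpha)_{u+i}=(\alpha)_u(\alpha+u)_i$, and the elementary finite-difference inversion, no extra hypergeometric machinery is needed, so the proof should be brief once the correct change of basis is in hand.
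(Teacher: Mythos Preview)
Your proposal is correct and follows essentially the same approach as the paper: expand $(\lambda-i)_{j-s}(\mu+i)_s$ in the basis $\{(\alpha+i)_u\}$, use $(\alpha)_i(\alpha+i)_u=(\alpha)_u(\alpha+u)_i$, apply Lemma~\ref{easysum}, and determine the coefficients by the substitution $i=-\alpha-v$ followed by inversion of the resulting triangular system. Your EGF description of the inversion and your remark about interpreting the identity rationally to avoid the loci $\alpha+u-\beta+1=0$ are minor elaborations beyond what the paper writes, but the argument is the same.
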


\begin{proof}
Suppose the coefficients $c_{u}$ satisfying $\left(  \lambda-i\right)
_{j-s}\left(  \mu+i\right)  _{s}=\sum_{u=0}^{j}c_{u}\left(  \alpha+i\right)
_{u}$ for all $i$ have been determined then
\begin{align*}
\sum_{i=0}^{n}\frac{\left(  \alpha\right)  _{i}}{\left(  \beta\right)  _{i}%
}\left(  \lambda-i\right)  _{j-s}\left(  \mu+i\right)  _{s} &  =\sum_{u=0}%
^{j}c_{u}\frac{\left(  \alpha\right)  _{i}\left(  \alpha+i\right)  _{u}%
}{\left(  \beta\right)  _{i}}=\sum_{u=0}^{j}c_{u}\frac{\left(  \alpha\right)
_{u}\left(  \alpha+u\right)  _{i}}{\left(  \beta\right)  _{i}}\\
&  =\sum_{u=0}^{j}c_{u}\frac{\left(  \alpha\right)  _{u}}{\alpha+u-\beta
+1}\left\{  \frac{\left(  \alpha+u\right)  _{n+1}}{\left(  \beta\right)  _{n}%
}-\beta+1\right\}
\end{align*}
by Lemma \ref{easysum}. Substitute $i=-\alpha-m$ in the equation determining
$\left\{  c_{u}\right\}  $:%
\[
d_{m}:=\left(  \lambda+\alpha+m\right)  _{j-s}\left(  \mu-\alpha-m\right)
_{s}=\sum_{u=0}^{j}c_{u}\left(  -m\right)  _{u}=\sum_{u=0}^{m}\left(
-1\right)  ^{u}\frac{m!}{\left(  m-u\right)  !}c_{u}.
\]
This is a linear transformation from $\left\{  c_{u}\right\}  $ to $\left\{
d_{m}\right\}  $ with a triangular matrix whose inverse is as stated, by a
simple calculation.
\end{proof}

For the specific needed result set
\begin{equation}
\gamma_{u}\left(  j,s\right)  :=\sum_{v=0}^{u}\frac{\left(  -1\right)  ^{v}%
}{v!\left(  u-v\right)  !}\left(  2a+k+1+s+v\right)  _{j-s}\left(
-3a-k-s-v\right)  _{s}\label{gmsum1}%
\end{equation}
by use of the reversal formula $\left(  \alpha\right)  _{n}=\left(  -1\right)
^{n}\left(  1-a-n\right)  _{n}$. Thus for any $l$
\[
\left(  k+1-l\right)  _{j-s}\left(  l-a-k\right)  _{s}=\sum_{u=0}^{j}%
\gamma_{u}\left(  j,s\right)  \left(  2a+s+l\right)  _{u},
\]
in particular, $\left(  k+1\right)  _{j-s}\left(  -a-k\right)  _{s}=\sum
_{u=0}^{j}\gamma_{u}\left(  j,s\right)  \left(  2a+s\right)  _{u}$.

\begin{proposition}
For $j=0,1,2,\ldots$%
\begin{align}
C_{j}\left(  a,k\right)   &  =\sum_{s=0}^{j}\frac{\left(  2a\right)  _{s}%
}{\left(  a+s\right)  \left(  3a+k+1\right)  _{s}\left(  j-s\right)
!s!}\label{Cform1}\\
&  \times\sum_{u=0}^{j}\frac{\gamma_{u}\left(  j,s\right)  \left(
2a+s\right)  _{u}}{a+u}\left\{  \frac{\left(  2a+s+u\right)  _{k+1}}{\left(
a+s+1\right)  _{k}}+u-s\right\}  .\nonumber
\end{align}

\end{proposition}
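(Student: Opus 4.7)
The plan is to apply Lemma \ref{sum0} directly to the inner $l$-sum appearing in the definition of $C_{j}(a,k)$. I would identify the parameters as $n=k$, $\alpha=2a+s$, $\beta=a+s+1$, $\lambda=k+1$, and $\mu=-a-k$. Under this assignment one checks $\alpha-\beta+1=a$, $\alpha+u-\beta+1=a+u$, and $\beta-1=a+s$, so the denominator $a+u$ and the ratio $(2a+s+u)_{k+1}/(a+s+1)_{k}$ of the proposition appear automatically. Moreover the coefficients $c_{u}(\alpha,\lambda,\mu,j,s)$ produced by Lemma \ref{sum0} coincide with the $\gamma_{u}(j,s)$ of (\ref{gmsum1}), as already observed in the sentence preceding the proposition. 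Lemma \ref{sum0} therefore gives
\[
\sum_{l=0}^{k}\frac{(2a+s)_{l}}{(a+s+1)_{l}}(k+1-l)_{j-s}(l-a-k)_{s}
=\sum_{u=0}^{j}\frac{\gamma_{u}(j,s)(2a+s)_{u}}{a+u}\left\{\frac{(2a+s+u)_{k+1}}{(a+s+1)_{k}}-(a+s)\right\}.
\]

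Next I would account for the extra summand $(k+1)_{j-s}(-a-k)_{s}$ sitting inside the brace in the definition of $C_{j}(a,k)$, which encodes the ``doubled $l=0$'' convention noted in the construction. By the $l=0$ specialization of the expansion proved just before the proposition,
\[
(k+1)_{j-s}(-a-k)_{s}=\sum_{u=0}^{j}\gamma_{u}(j,s)(2a+s)_{u},
\]
so this extra summand contributes the same basis $\{\gamma_{u}(j,s)(2a+s)_{u}\}_{u=0}^{j}$ as the $l$-sum above.

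Finally I would combine the two contributions. Factoring out the common $\gamma_{u}(j,s)(2a+s)_{u}/(a+u)$ across the $u$-sum, the extra summand adds exactly $(a+u)$ inside the bracket, producing
\[
\frac{(2a+s+u)_{k+1}}{(a+s+1)_{k}}-(a+s)+(a+u)=\frac{(2a+s+u)_{k+1}}{(a+s+1)_{k}}+(u-s),
\]
which is precisely the bracket asserted in (\ref{Cform1}). Multiplying by the unchanged outer prefactor $(2a)_{s}/[(a+s)(3a+k+1)_{s}(j-s)!\,s!]$ and summing over $s$ yields the stated formula for $C_{j}(a,k)$.

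The nontrivial analytic input has been entirely absorbed into Lemmas \ref{easysum} and \ref{sum0}, so the remaining work is essentially bookkeeping; there is no genuine obstacle. The one place where care is needed is the cancellation $-(a+s)+(a+u)=u-s$, which is exactly the algebraic reason for having doubled the $l=0$ term when defining $C_{j}(a,k)$. As a sanity check I would verify the $j=0$ case, where $s=u=0$ forces $\gamma_{0}(0,0)=1$ and both sides of (\ref{Cform1}) collapse to $(2a)_{k+1}/[a^{2}(a+1)_{k}]$, in agreement with a direct evaluation using Lemma \ref{easysum}.
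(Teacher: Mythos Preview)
Your proposal is correct and follows essentially the same route as the paper: apply Lemma~\ref{sum0} with $n=k$, $\alpha=2a+s$, $\beta=a+s+1$, $\lambda=k+1$, $\mu=-a-k$ (so that $c_{u}=\gamma_{u}$, $\alpha+u-\beta+1=a+u$, $\beta-1=a+s$), then rewrite the doubled $l=0$ summand as $\sum_{u}\gamma_{u}(j,s)(2a+s)_{u}$ and combine, using the identity the paper records as $1-\tfrac{a+s}{a+u}=\tfrac{u-s}{a+u}$, equivalently your $-(a+s)+(a+u)=u-s$. The $j=0$ sanity check is fine.
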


\begin{proof}
The formula follows from Lemma \ref{sum0} with $\alpha=2a+s$, $\beta=a+s+1$,
$\lambda=k+1$, $\mu=-a-k$. Thus $\beta-1=a+s$ and $\alpha+u-\beta+1=a+u$. Next
add $\left(  k+1\right)  _{j-s}\left(  -a-k\right)  _{s}$ in the sum form
$\sum_{u=0}^{j}\gamma_{u}\left(  j,s\right)  \left(  2a+s\right)  _{u}$ then
$1-\frac{a+s}{a+u}=\frac{u-s}{a+u}$.
\end{proof}

Next we apply a transformation to $\gamma_{u}\left(  j,s\right)  $.

\begin{proposition}
For $0\leq s,u\leq j$%
\begin{align}
\gamma_{u}\left(  j,s\right)    & =\left(  -1\right)  ^{u+s}\frac{\left(
3a+k+1\right)  _{s}\left(  2a+k+1\right)  _{j}\left(  j-s\right)  !}{u!\left(
2a+k+1\right)  _{s+u}}\label{gmsum2}\\
& \times\sum_{i=\max\left(  0,s+u-j\right)  }^{\min\left(  s,u\right)  }%
\frac{\left(  -s\right)  _{i}\left(  -u\right)  _{i}\left(  2a+k+1+j\right)
_{i}}{i!\left(  j-s-u+i\right)  !\left(  3a+k+1\right)  _{i}}\nonumber
\end{align}

\end{proposition}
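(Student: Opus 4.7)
My plan is a direct double-sum manipulation that bypasses $_3F_2$ transformation machinery entirely; it produces the lower summation bound $\max(0,s+u-j)$ in (\ref{gmsum2}) automatically, via vanishing of a finite difference, rather than via a separate limit argument. The strategy is: first expand the factor $(-3a-k-s-v)_s$ in (\ref{gmsum1}) by the Pochhammer Vandermonde identity $(x+y)_n=\sum_{i=0}^{n}\binom{n}{i}(x)_{n-i}(y)_i$ (with $x=-3a-k-s$, $y=-v$); then swap the order of the $v$- and $i$-summations; then evaluate the resulting inner $v$-sum in closed form.

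After the swap, using $(-v)_i=0$ for $v<i$, substituting $(-v)_i=(-1)^i v!/(v-i)!$, and reindexing $v=i+v'$, the inner sum reduces to
\[
\frac{1}{(u-i)!}\sum_{v'=0}^{u-i}(-1)^{v'}\binom{u-i}{v'}(X+v')_{j-s},\qquad X:=2a+k+1+s+i,
\]
which is $(-1)^{u-i}/(u-i)!$ times the $(u-i)$-th forward difference of the Pochhammer polynomial $(X+\,\cdot\,)_{j-s}$ evaluated at $0$. Applying the classical identity $\Delta^{n}(x)_m=\tfrac{m!}{(m-n)!}(x+n)_{m-n}$ (valid for $n\le m$ and vanishing for $n>m$), the inner sum evaluates to
\[
\frac{(-1)^{u-i}(j-s)!}{(u-i)!\,(j-s-u+i)!}\,(2a+k+1+s+u)_{j-s-u+i}.
\]
The automatic vanishing for $u-i>j-s$, that is for $i<s+u-j$, is exactly what produces the lower summation bound $\max(0,s+u-j)$ in (\ref{gmsum2}); the upper bound $\min(s,u)$ comes from $\binom{s}{i}=0$ for $i>s$ together with the inner $v$-sum being empty for $i>u$.

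It then remains to match the outer $i$-summand term by term with the one in (\ref{gmsum2}), which is routine Pochhammer bookkeeping using the reversal $(-3a-k-s)_{s-i}=(-1)^{s-i}(3a+k+1+i)_{s-i}$, the telescoping $(3a+k+1+i)_{s-i}(3a+k+1)_i=(3a+k+1)_s$, the splittings $(2a+k+1+s+u)_{j-s-u+i}(2a+k+1)_{s+u}=(2a+k+1)_{j+i}=(2a+k+1)_j(2a+k+1+j)_i$, and the conversions $s!/(s-i)!=(-1)^i(-s)_i$ and $1/(u-i)!=(-1)^i(-u)_i/u!$. All signs collapse via $(-1)^{s-i}(-1)^{u-i}(-1)^{2i}=(-1)^{s+u}$ to the overall $(-1)^{s+u}$ prefactor in (\ref{gmsum2}), and every remaining Pochhammer factor on the left-hand side matches one on the right-hand side exactly. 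The main obstacle is the single conceptual step of recognizing the inner $v'$-sum as a finite difference of a Pochhammer polynomial; once that is spotted, both the closed-form evaluation of the inner sum and the automatic production of the correct lower summation bound follow in one stroke, and the remaining work is elementary Pochhammer algebra.
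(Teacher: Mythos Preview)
Your proposal is correct and its overall architecture is the same as the paper's: expand $(-3a-k-s-v)_s$ by the Vandermonde addition formula, swap the $v$- and $i$-sums, evaluate the inner $v$-sum in closed form, and then do the Pochhammer bookkeeping. The one genuine difference is in how the inner sum is handled. The paper first rewrites
\[
(2a+k+1+s+v)_{j-s}=\frac{(2a+k+1)_j\,(2a+k+1+j)_v}{(2a+k+1)_s\,(2a+k+1+s)_v}
\]
so that the inner sum becomes a terminating ${}_2F_1$ at $1$, which it then evaluates by Chu--Vandermonde to produce the factor $(s-j)_{u-i}$. You instead leave $(X+v')_{j-s}$ intact and recognise the inner sum as $(-1)^{u-i}/(u-i)!$ times the $(u-i)$-th forward difference of a rising factorial, invoking $\Delta^{n}(x)_m=\tfrac{m!}{(m-n)!}(x+n)_{m-n}$ directly. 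Your route is marginally cleaner: it skips the preliminary ratio rewriting, needs only the single finite-difference identity rather than two Vandermonde-type steps, and makes the vanishing for $i<s+u-j$ (hence the lower summation bound) transparent as the vanishing of an iterated difference of a polynomial of too-low degree. The paper's route, on the other hand, stays entirely within the familiar ${}_2F_1$/Chu--Vandermonde toolkit. The final Pochhammer simplifications you list are exactly those the paper carries out, and the sign count $(-1)^{s-i}(-1)^{u-i}(-1)^{2i}=(-1)^{s+u}$ is correct.
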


\begin{proof}
Rewrite%
\begin{align*}
\left(  2a+k+1+s+v\right)  _{j-s}  & =\frac{\left(  2a+k+1\right)  _{j}\left(
2a+k+1+j\right)  _{v}}{\left(  2a+k+1\right)  _{s}\left(  2a+k+1+s\right)
_{v}},\\
\left(  -3a-k-s-v\right)  _{s}  & =\sum_{i=0}^{s}\binom{s}{i}\left(
-3a-k-s\right)  _{s-i}\left(  -v\right)  _{i},
\end{align*}
by the Chu-Vandermonde sum. Thus
\[
\gamma_{u}\left(  j,s\right)  =\frac{\left(  2a+k+1\right)  _{j}}{u!\left(
2a+k+1\right)  _{s}}\sum_{i=0}^{s}\binom{s}{i}\left(  -3a-k-s\right)
_{s-i}\sum_{v=0}^{u}\frac{\left(  -u\right)  _{v}\left(  2a+k+1+j\right)
_{v}}{v!\left(  2a+k+1+s\right)  _{v}}\left(  -v\right)  _{i}.
\]
In the sum over $v$ change the summation variable $v=i+n$ (so that $\left(
-v\right)  _{i}=\left(  -1\right)  ^{i}v!/n!$) to obtain
\begin{gather*}
\left(  -1\right)  ^{i}\sum_{n=0}^{u-i}\frac{\left(  -u\right)  _{i+n}\left(
2a+k+1+j\right)  _{i+n}}{n!\left(  2a+k+1+s\right)  _{i+n}}\\
=\left(  -1\right)  ^{i}\frac{\left(  -u\right)  _{i}\left(  2a+k+1+j\right)
_{i}}{\left(  2a+k+1+s\right)  _{i}}\times\sum_{n=0}^{u-i}\frac{\left(
i-u\right)  _{n}\left(  2a+k+1+i+j\right)  _{n}}{n!\left(  2a+k+1+i+s\right)
_{n}}\\
=\left(  -1\right)  ^{i}\frac{\left(  -u\right)  _{i}\left(  2a+k+1+j\right)
_{i}\left(  s-j\right)  _{u-i}}{\left(  2a+k+1+s\right)  _{i}\left(
2a+k+1+i+s\right)  _{u-i}}=\left(  -1\right)  ^{i}\frac{\left(  -u\right)
_{i}\left(  2a+k+1+j\right)  _{i}\left(  s-j\right)  _{u-i}}{\left(
2a+k+1+s\right)  _{u}}.
\end{gather*}
Also
\[
\left(  -3a-k-s\right)  _{s-i}=\left(  -1\right)  ^{s-i}\left(
3a+k+1+i\right)  _{s-i}=\left(  -1\right)  ^{s-i}\frac{\left(  3a+k+1\right)
_{s}}{\left(  3a+k+1\right)  _{i}}%
\]
and $\left(  s-j\right)  _{u-i}=\left(  -1\right)  ^{u-i}\frac{\left(
j-s\right)  !}{\left(  j-s-u+i\right)  !}$; note the terms with $i>u$ vanish
due to the factor $\left(  -v\right)  _{i}$ and $v\leq u$. Also $\binom{s}%
{i}=\left(  -1\right)  ^{i}\frac{\left(  -s\right)  _{i}}{i!}$ and the powers
of $\left(  -1\right)  $ add up to $s+u$. Thus%
\[
\gamma_{u}\left(  j,s\right)  =\left(  -1\right)  ^{u+s}\frac{\left(
3a+k+1\right)  _{s}\left(  2a+k+1\right)  _{j}\left(  j-s\right)  !}{u!\left(
2a+k+1\right)  _{s+u}}\sum_{i=0}^{s}\frac{\left(  -s\right)  _{i}\left(
-u\right)  _{i}\left(  2a+k+1+j\right)  _{i}}{i!\left(  j-s-u+i\right)
!\left(  3a+k+1\right)  _{i}};
\]
in fact the sum is over $\max\left(  0,s+u-j\right)  \leq i\leq\min\left(
s,u\right)  $. The terms for $i$ outside this interval vanish.
\end{proof}

\begin{proposition}
For $j=0,1,2\ldots$%
\begin{align}
C_{j}\left(  a,k\right)    & =\sum_{s=0}^{j}\sum_{u=0}^{j}\frac{\left(
-1\right)  ^{u+s}\left(  2a\right)  _{s+u}\left(  2a+k+1\right)  _{j}}{\left(
a+s\right)  \left(  a+u\right)  s!u!\left(  2a+k+1\right)  _{s+u}}%
\frac{\left(  2a+s+u\right)  _{k+1}}{\left(  a+s+1\right)  _{k}}%
\label{cjak}\\
& \times\sum_{i=\max\left(  0,s+u-j\right)  }^{\min\left(  s,u\right)  }%
\frac{\left(  -s\right)  _{i}\left(  -u\right)  _{i}\left(  2a+k+1+j\right)
_{i}}{i!\left(  j-s-u+i\right)  !\left(  3a+k+1\right)  _{i}}.\nonumber
\end{align}

\end{proposition}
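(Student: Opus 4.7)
The plan is to derive the displayed formula directly from the combination of the previous two propositions (\ref{Cform1}) and (\ref{gmsum2}), exploiting the fact that the transformation from $\gamma_u(j,s)$ to its hypergeometric-like form makes the outer coefficient symmetric in $(s,u)$, so that the antisymmetric $(u-s)$ term contributed by the braces in (\ref{Cform1}) disappears upon summation.

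First I substitute the closed form (\ref{gmsum2}) for $\gamma_u(j,s)$ into the expression (\ref{Cform1}) for $C_j(a,k)$. The next step is algebraic simplification of the resulting prefactor in front of the $i$-sum. Three routine consolidations occur: the Pochhammer identity $(2a)_s(2a+s)_u = (2a)_{s+u}$ combines the two ascending factorials; the factor $(3a+k+1)_s$ coming from (\ref{gmsum2}) cancels against the $(3a+k+1)_s$ in the denominator of (\ref{Cform1}); and the factor $(j-s)!$ coming from (\ref{gmsum2}) cancels against the $(j-s)!$ in the denominator of (\ref{Cform1}). Collecting the sign, one obtains precisely the prefactor
\[
\frac{(-1)^{s+u}\,(2a)_{s+u}\,(2a+k+1)_j}{(a+s)(a+u)\,s!\,u!\,(2a+k+1)_{s+u}}
\]
multiplying the $i$-sum, which is manifestly symmetric under $s \leftrightarrow u$.

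The crucial observation is then that the inner $i$-sum inherited from (\ref{gmsum2}),
\[
\sum_{i=\max(0,s+u-j)}^{\min(s,u)}
\frac{(-s)_i(-u)_i(2a+k+1+j)_i}{i!\,(j-s-u+i)!\,(3a+k+1)_i},
\]
is also symmetric in $(s,u)$: the summand is symmetric through the symmetric product $(-s)_i(-u)_i$, and the summation range $\max(0,s+u-j)\le i\le\min(s,u)$ is invariant under $s\leftrightarrow u$. Hence the full coefficient of the brace $\{\cdots\}$ in (\ref{Cform1}), viewed as a function of $(s,u)$ on the square $0\le s,u\le j$, is symmetric. Since $u-s$ is antisymmetric, the double sum $\sum_{s,u} (u-s)\cdot[\text{symmetric function}]$ vanishes, and only the $\frac{(2a+s+u)_{k+1}}{(a+s+1)_k}$ contribution survives, yielding exactly the stated double sum.

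The main (minor) obstacle is purely clerical: verifying that the powers of $(-1)$ combine correctly when the signs from $\gamma_u(j,s)$ are accounted for, and that no hidden cancellation converts a Pochhammer into its reflected form. One should also note that although the surviving factor $\frac{(2a+s+u)_{k+1}}{(a+s+1)_k}$ is itself not symmetric in $(s,u)$, this is acceptable for an identity at the level of the double sum; symmetrizing it as $\tfrac{1}{2}\bigl[\frac{(2a+s+u)_{k+1}}{(a+s+1)_k}+\frac{(2a+s+u)_{k+1}}{(a+u+1)_k}\bigr]$ would give an equivalent form, but the asymmetric expression is more compact and arises naturally from the derivation via Lemma \ref{sum0}.
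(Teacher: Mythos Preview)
Your proposal is correct and follows essentially the same route as the paper: substitute the transformed expression (\ref{gmsum2}) for $\gamma_u(j,s)$ into (\ref{Cform1}), cancel the factors $(3a+k+1)_s$ and $(j-s)!$, combine $(2a)_s(2a+s)_u=(2a)_{s+u}$, and then observe that the resulting prefactor (together with the $i$-sum) is symmetric in $(s,u)$ so that the antisymmetric $(u-s)$ contribution vanishes. Your write-up simply spells out the symmetry verification in more detail than the paper does.
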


\begin{proof}
Combine the factors in (\ref{Cform1}) with (\ref{gmsum2}) and cancel the
factors $\left(  3a+k+1\right)  _{s}$ and $\left(  j-s\right)  !.$The part due
to the term $\left(  u-s\right)  $ in $\left\{  \cdot\right\}  $ adds up to
zero, summed over $0\leq s,u\leq j$  because the factor in front of $\left\{
\cdot\right\}  $ is symmetric in $\left(  u,s\right)  $.
\end{proof}

There may be another simplification of the triple sum; nevertheless it is an
easy symbolic computation for reasonably small values of $j$.

\subsubsection{Summary and examples}

To summarize, the coefficient of $z^{a+j}$ in $\left(  \chi_{2a,k}\left(
z\right)  -1\right)  \left(  1-z\right)  ^{-k-1}$ is%
\begin{align*}
& -\frac{\left(  k+1\right)  _{a+j}}{\left(  a+j\right)  !}+\frac
{a\Gamma\left(  2a+k+1\right)  ^{2}\Gamma\left(  2a+k+1+j\right)  }%
{\Gamma\left(  a\right)  \Gamma\left(  a+k+1\right)  ^{2}\Gamma\left(
3a+k+1\right)  }\times\\
& \sum_{s=0}^{j}\sum_{u=0}^{j}\frac{\left(  -1\right)  ^{u+s}\left(
a+1\right)  _{s}}{\left(  a+s\right)  \left(  a+u\right)  s!u!\left(
a+k+1\right)  _{s}}\sum_{i=\max\left(  0,s+u-j\right)  }^{\min\left(
s,u\right)  }\frac{\left(  -s\right)  _{i}\left(  -u\right)  _{i}\left(
2a+k+1+j\right)  _{i}}{i!\left(  j-s-u+i\right)  !\left(  3a+k+1\right)  _{i}}%
\end{align*}

Here is the calculation for this formula which combines the prefactor with
(\ref{cjak}):%
\begin{align*}
& \frac{\Gamma\left(  2a\right)  \Gamma\left(  1+2a+k\right)  ^{2}\left(
2a\right)  _{s+u}\left(  2a+k+1\right)  _{j}\left(  2a+s+u\right)  _{k+1}%
}{\Gamma\left(  a\right)  ^{2}\Gamma\left(  1+a+k\right)  \Gamma\left(
3a+k+1\right)  \left(  2a+k+1\right)  _{s+u}\left(  a+s+1\right)  _{k}}\\
& =\frac{\Gamma\left(  1+2a+k\right)  ^{2}\Gamma\left(  2a\right)  \left(
2a\right)  _{s+u+k+1}\left(  2a+k+1\right)  _{j}\left(  a\right)  _{s+1}%
}{\Gamma\left(  a\right)  ^{2}\Gamma\left(  1+a+k\right)  \Gamma\left(
3a+k+1\right)  \left(  2a+k+1\right)  _{s+u}\left(  a\right)  _{k+1+s}}\\
& =\frac{\Gamma\left(  1+2a+k\right)  ^{2}\Gamma\left(  2a+k+1+j\right)
}{\Gamma\left(  a\right)  \Gamma\left(  a+k+1\right)  \Gamma\left(
3a+k+1\right)  }\times\frac{\left(  a\right)  _{s+1}}{\left(  a+k+1\right)
_{s}},
\end{align*}
using $\Gamma\left(  2a\right)  \left(  2a\right)  _{k+1=s+u}=\Gamma\left(
2a+k+1\right)  \left(  2a+k+1\right)  _{s+u}$ and $\Gamma\left(
2a+k+1\right)  \left(  2a+k+1\right)  _{j}=\Gamma\left(  2a+k+1+j\right)  $.

As illustration here are the formulas for $\chi_{2,k},\chi_{4.k}$ and
$\chi_{6,k}$:%
\begin{align*}
\chi_{2,k}\left(  z\right)   &  =1+\left(  1-z\right)  ^{k+1}\left(
-1+\frac{1}{k+3}z\right)  ,\\
\chi_{4.k}\left(  z\right)   &  =1+\left(  1-z\right)  ^{k+1}\left(
-1-\left(  k+1\right)  z+\frac{2\left(  2k^{2}+14k+21\right)  }{\left(
k+5\right)  \left(  k+6\right)  }z^{2}-\frac{6\left(  k+3\right)  }{\left(
k+6\right)  \left(  k+7\right)  }z^{3}\right)  ,\\
\chi_{6,k}\left(  z\right)   &  =1+\left(  1-z\right)  ^{k+1}\{-1-\left(
k+1\right)  z-\frac{\left(  k+1\right)  \left(  k+2\right)  }{2}z^{2}\\
&  +\frac{3\left(  3k^{4}+60k^{3}+432k^{2}+1230k+1264\right)  }{2\left(
k+7\right)  \left(  k+8\right)  \left(  k+9\right)  }z^{3}-\frac{6\left(
k+4)(3k^{2}+33k+80\right)  }{\left(  k+8\right)  \left(  k+9\right)  \left(
k+10\right)  }z^{4}\\
&  +\frac{30\left(  k+4\right)  \left(  k+5\right)  }{\left(  k+9\right)
\left(  k+10\right)  \left(  k+11\right)  }z^{5}\}.
\end{align*}
\subsubsection{Comments on the last steps using $\chi_{d,k}\left(  \varepsilon^{2}\right)  $} \label{laststeps}

\mbox{Derivation of the integral formula}

Besides some Gamma factors the task is to integrate over $0<\varepsilon<1$%

\begin{align*}
&  2\varepsilon^{-3-4d-2k}\left(  1-\varepsilon^{2}\right)  ^{d}\chi
_{d,k}\left(  \varepsilon^{2}\right)  ~_{2}F_{1}\left(
%TCIMACRO{\QATOP{2+3d+2k,2+3d+2k}{4+6d+4k}}%
%BeginExpansion
\genfrac{}{}{0pt}{}{2+3d+2k,2+3d+2k}{4+6d+4k}%
%EndExpansion
;1-\varepsilon^{-2}\right) \\
&  =2\varepsilon^{1+2d+2k}\left(  1-\varepsilon^{2}\right)  ^{d}\chi
_{d,k}\left(  \varepsilon^{2}\right)  ~_{2}F_{1}\left(
%TCIMACRO{\QATOP{2+3d+2k,2+3d+2k}{4+6d+4k}}%
%BeginExpansion
\genfrac{}{}{0pt}{}{2+3d+2k,2+3d+2k}{4+6d+4k}%
%EndExpansion
;1-\varepsilon^{2}\right)
\end{align*}
then divide by
\[
_{3}F_{2}\left(
%TCIMACRO{\QATOP{2+3d+2k,2+3d+2k,1+d}{4+6d+4k,2+2d+k}}%
%BeginExpansion
\genfrac{}{}{0pt}{}{2+3d+2k,2+3d+2k,1+d}{4+6d+4k,2+2d+k}%
%EndExpansion
;1\right)
\]
First change variables $z=\varepsilon^{2},\mathrm{d}z=2\varepsilon
~\mathrm{d}\varepsilon$. Use the integral representation%
\[
_{2}F_{1}\left(
%TCIMACRO{\QATOP{\alpha,\alpha}{2\alpha}}%
%BeginExpansion
\genfrac{}{}{0pt}{}{\alpha,\alpha}{2\alpha}%
%EndExpansion
,1-z\right)  =\frac{\Gamma\left(  2\alpha\right)  }{\Gamma\left(
\alpha\right)  ^{2}}\int_{0}^{1}\left(  t\left(  1-t\right)  \right)
^{\alpha-1}\left(  1-\left(  1-z\right)  t\right)  ^{-\alpha}dt,
\]
then integrate with respect to $z$ first, (doing $t$ first results in $\log z$
terms). Again disregarding Gamma factors compute%
\begin{equation}
I_{d,k}:=\int_{0}^{1}\int_{0}^{1}z^{d+k}\chi_{d,k}\left(  z\right)  \left(
1-z\right)  ^{d}\left(  1-\left(  1-z\right)  t\right)  ^{-2-3d-2k}%
\mathrm{d}z\left\{  t\left(  1-t\right)  \right\}  ^{1+3d+2k}\mathrm{d}t.
\label{int(d,k)}%
\end{equation}
In this order of integration it is a relatively straightforward task for
symbolic computation, the result of the $z$-integral is a polynomial in $t$
divided by $\left(  1-t\right)  ^{3d/2+k+1}$. More details for the evaluation
are given in the following section.

For the denominator $_{3}F_{2}$-series use a transformation to produce a
terminating sum with $d+k+1$ terms. Note that $2+3d+2k=(2+2d+k)+(d+k)$ . There
is a transformation (for $n=0,1,2,\ldots$and $\gamma-\alpha-\beta>n$)%
\[
_{3}F_{2}\left(
%TCIMACRO{\QATOP{\alpha,\beta,\delta+n}{\gamma,\delta}}%
%BeginExpansion
\genfrac{}{}{0pt}{}{\alpha,\beta,\delta+n}{\gamma,\delta}%
%EndExpansion
;1\right)  =\frac{\Gamma\left(  \gamma\right)  \Gamma\left(  \gamma
-\alpha-\beta\right)  }{\Gamma\left(  \gamma-\alpha\right)  \Gamma\left(
\gamma-\beta\right)  }~_{3}F_{2}\left(
%TCIMACRO{\QATOP{-n,\alpha,\beta}{1+\alpha+\beta-\gamma,\delta}}%
%BeginExpansion
\genfrac{}{}{0pt}{}{-n,\alpha,\beta}{1+\alpha+\beta-\gamma,\delta}%
%EndExpansion
;1\right)  .
\]
Let%
\[
S_{d,k}:=~_{3}F_{2}\left(
%TCIMACRO{\QATOP{-d-k,2+3d+2k,1+d}{2+2d+k,-2d-2k}}%
%BeginExpansion
\genfrac{}{}{0pt}{}{-d-k,2+3d+2k,1+d}{2+2d+k,-2d-2k}%
%EndExpansion
;1\right)
\]
(not \textit{regularized} !). The previously ignored Gamma factors are now
collected -the probability is%
\[
\frac{\Gamma\left(  2+2d+k\right)  \Gamma\left(  3+5d+4k\right)  }%
{\Gamma\left(  2+3d+2k\right)  \Gamma\left(  1+d+k\right)  \Gamma\left(
1+2d+2k\right)  d!}\frac{I_{d,k}}{S_{d,k}}%
\]

\mbox{Evaluation of the double integral}

Throughout $a=d/2$ is a positive integer and $k=0,1,2,\ldots$.

Consider the integrand%
\[
\frac{z^{2a+k}\chi_{2a,k}\left(  z\right)  \left(  1-z\right)  ^{2a}}{\left(
1-\left(  1-z\right)  t\right)  ^{2+6a+2k}}.
\]
The numerator is a polynomial in $z$ of degree $\left(  2a+k\right)
+(2a+k)+2a=6a+2k$. Also from a previous formula (the key integral) the lowest
nonzero power of $z$ in $\chi_{2a,k}\left(  z\right)  $ is $a$; thus the
numerator vanishes at $z=0$ of order $3a+k-1$. First we solve the following
problem: determine $q\left(  z;t\right)  $ such that%
\[
\frac{d}{dz}\frac{q\left(  z;t\right)  }{\left(  1-\left(  1-z\right)
t\right)  ^{m-1}}=\frac{p\left(  z\right)  }{\left(  1-\left(  1-z\right)
t\right)  ^{m}},
\]
where $p\left(  z\right)  $ is a given polynomial of degree $m-2$ and
$q\left(  z;t\right)  $ is polynomial in $z$. By the product rule%
\begin{align}
\frac{p\left(  z\right)  }{\left(  1-\left(  1-z\right)  t\right)  ^{m}}  &
=\frac{\frac{d}{dz}q\left(  z;t\right)  }{\left(  1-t+zt\right)  ^{m-1}}%
-\frac{\left(  m-1\right)  tq\left(  z;t\right)  }{\left(  1-t+zt\right)
^{m}}\nonumber\\
&  =\frac{\left(  1-t+zt\right)  \frac{d}{dz}q\left(  z;t\right)  -\left(
m-1\right)  tq\left(  z;t\right)  }{\left(  1-t+zt\right)  ^{m}}\nonumber\\
p\left(  z\right)   &  =\left(  1-t+zt\right)  \frac{d}{dz}q\left(
z;t\right)  -\left(  m-1\right)  tq\left(  z;t\right)  \label{eqnpq}%
\end{align}
Let $p\left(  z\right)  =\sum\limits_{i=0}^{m-2}\beta_{i}z^{i}$ and $q\left(
z;t\right)  =\sum\limits_{i=0}^{m-2}\gamma_{i}\left(  t\right)  z^{i}$. The
coefficient of $z^{j}$ in the numerator is%
\[
\beta_{j}=t\left(  j-m+1\right)  \gamma_{j}\left(  t\right)  +\left(
1-t\right)  \left(  j+1\right)  \gamma_{j+1}\left(  t\right)  ;
\]
in the special case $j=m-2$ one has $\gamma_{m-2}=-\frac{1}{t}\beta_{m-2}$.
The solution $\left[  \gamma_{j}\right]  _{j=0}^{m-2}$ is found recursively
starting with $j=m-3$ down to $j=0$:%
\begin{equation}
\gamma_{j}\left(  t\right)  =\frac{1}{t\left(  m+1-j\right)  }\left\{  \left(
j+1\right)  \left(  1-t\right)  \gamma_{j+1}\left(  t\right)  -\beta
_{j}\right\}  . \label{receqgm}%
\end{equation}
Thus $\gamma_{j}\left(  t\right)  $ is a polynomial in $\frac{1}{t}$ with the
highest power of $\frac{1}{t}$ being $m-1-j$ and the lowest being $1$ (no
constant term since $t\gamma_{m-2}=-\beta_{m-2}$). Let $\widetilde{\gamma
}\left(  t\right)  :=t^{m-1}\sum\limits_{j=0}^{m-2}\gamma_{j}\left(  t\right)
$ and $\widetilde{\gamma_{0}}\left(  t\right)  :=t^{m-1}\gamma_{0}\left(
t\right)  $; then both $\widetilde{\gamma}\left(  t\right)  $ and
$\widetilde{\gamma_{0}}\left(  t\right)  $ are polynomials in $t$ of degree
$\leq m-2$ (from $\left(  \frac{1}{t}\right)  /t^{m-1}=t^{m-2}$),

\begin{lemma}
Suppose that $\beta_{i}=0$ for $0\leq i<n$ for some $n$ then $\gamma_{j}$ is
divisible by $\left(  1-t\right)  ^{n-j}$ for $j=0,1,\ldots,n-1$.
\end{lemma}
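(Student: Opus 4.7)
The plan is to argue by downward induction on $j$, starting at $j=n-1$ and descending to $j=0$, using the recursion (\ref{receqgm}) together with the hypothesis $\beta_0=\cdots=\beta_{n-1}=0$. The key observation is that whenever $\beta_j$ vanishes, (\ref{receqgm}) collapses to
\[
\gamma_j(t)=\frac{(j+1)(1-t)}{t(m+1-j)}\,\gamma_{j+1}(t),
\]
so each such step contributes one additional factor of $(1-t)$ to $\gamma_j$ beyond whatever is already present in $\gamma_{j+1}$. The content of the lemma is then simply that this extra factor accumulates $n-j$ times as one descends from $j=n-1$ to $j=0$.

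For the base case $j=n-1$, substituting $\beta_{n-1}=0$ into (\ref{receqgm}) yields $\gamma_{n-1}(t)=\frac{n(1-t)}{t(m+2-n)}\,\gamma_n(t)$; since $\gamma_n(t)$ is a Laurent polynomial with at worst a pole at $t=0$ (in particular it is regular at $t=1$), this directly exhibits $(1-t)=(1-t)^{n-(n-1)}$ as a factor. In the boundary situation $n=m-1$ one instead invokes the explicit formula $\gamma_{m-2}=-\beta_{m-2}/t=0$, which is trivially divisible by every power of $1-t$, so the induction can still be started.

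The inductive step is equally short: assuming $(1-t)^{n-j-1}$ divides $\gamma_{j+1}(t)$, the simplified recursion picks up one further $(1-t)$ to yield $(1-t)^{n-j}$ as a factor of $\gamma_j(t)$, closing the induction at $j=0$.

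The only subtlety worth pinning down is what the word ``divisible'' means, because each $\gamma_j(t)$ is a Laurent polynomial in $t$ with a possible pole at $t=0$ rather than an honest polynomial. The cleanest remedy is to run the entire induction on the auxiliary polynomial $\widetilde{\gamma}_j(t):=t^{m-1}\gamma_j(t)$ already introduced just above the lemma; since $\gcd(t,1-t)=1$ in $\mathbb{Q}[t]$, divisibility by powers of $(1-t)$ passes cleanly back and forth through multiplication or division by powers of $t$, and the simplified recursion becomes $t\,\widetilde{\gamma}_j(t)=\frac{(j+1)(1-t)}{m+1-j}\widetilde{\gamma}_{j+1}(t)$, preserving the counting. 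I do not anticipate any real obstacle: the whole argument is a three-line induction whose only substantive input is that vanishing of $\beta_j$ turns (\ref{receqgm}) into scalar multiplication by $(1-t)/t$.
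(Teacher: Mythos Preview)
Your proof is correct and follows essentially the same route as the paper: downward induction on $j$ using the recursion (\ref{receqgm}) with $\beta_j=0$, so that each step contributes one extra factor of $(1-t)$. The paper's version is a bit terser---it simply writes $\gamma_{n-1}=\dfrac{n(1-t)}{t(m-n+2)}\gamma_n$, then $\gamma_{n-2}=\dfrac{n(n-1)(1-t)^2}{t^2(m-n+2)(m-n+3)}\gamma_n$, ``and so on''---whereas you spell out the inductive hypothesis, handle the boundary case $n=m-1$ explicitly, and clarify what divisibility means for Laurent polynomials via $\widetilde{\gamma}_j(t)=t^{m-1}\gamma_j(t)$; these are welcome refinements but not a different argument.
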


\begin{proof}
By formula (\ref{receqgm}) $\gamma_{n-1}\left(  t\right)  =\dfrac{n}{t\left(
m-n+2\right)  }\left(  1-t\right)  \gamma_{n}\left(  t\right)  $ (since
$\beta_{n-1}=0$); thus $\gamma_{n-2}\left(  t\right)  =\dfrac{n\left(
n-1\right)  }{t^{2}\left(  m-n+2\right)  \left(  m-n+3\right)  }\left(
1-t\right)  ^{2}\gamma_{n}\left(  t\right)  $ and so on (the proof is
completed by induction).
\end{proof}

The previous hypothesis refers to the order of vanishing at $z=0$ of $p\left(
z\right)  $. The following deals with the corresponding property at $z=1$.

\begin{lemma}
\label{dz(p(1))}Suppose $\left(  \frac{d}{dz}\right)  ^{i}p\left(  z\right)
=0|_{z=1}$ for $0\leq i<r$ then $\deg\left(  \widetilde{\gamma}\left(
t\right)  \right)  \leq m-2-r$.
\end{lemma}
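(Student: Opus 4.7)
The plan is to differentiate the defining relation
$p(z) = (1-t+zt)\,q'(z;t) - (m-1)t\,q(z;t)$
repeatedly in $z$, evaluate at $z=1$, and iterate. Applying the Leibniz rule to the product $(1-t+zt)\,q'(z;t)$, and using that the first factor has derivative $t$ and zero second derivative, one obtains
\begin{equation*}
p^{(i)}(z) \;=\; (1-t+zt)\,q^{(i+1)}(z;t) \;-\; (m-1-i)\,t\,q^{(i)}(z;t),
\end{equation*}
so that setting $z=1$ collapses the leading factor to $1$ and yields the clean identity $p^{(i)}(1) = q^{(i+1)}(1;t) - (m-1-i)\,t\,q^{(i)}(1;t)$. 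Under the hypothesis $p^{(i)}(1)=0$ for $0\le i<r$, this is exactly the recursion $q^{(i+1)}(1;t) = (m-1-i)\,t\,q^{(i)}(1;t)$, which I would telescope to
\begin{equation*}
q^{(r)}(1;t) \;=\; \Big(\prod_{k=1}^{r}(m-k)\Big)\,t^{\,r}\,q(1;t).
\end{equation*}

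Next I would multiply both sides by $t^{m-1}$. The right-hand side becomes $\prod_{k=1}^{r}(m-k)\cdot t^{\,r}\,\widetilde{\gamma}(t)$, a polynomial in $t$ of degree $r+\deg\widetilde{\gamma}(t)$. For the left-hand side, I would invoke the structural observation recorded in the paragraph preceding the lemma---that $\gamma_{j}(t)$ is a polynomial in $1/t$ whose highest power is $1/t^{\,m-1-j}$---which implies that each $t^{m-1}\gamma_{j}(t)$ is a polynomial in $t$ of degree at most $m-2$. Since
\begin{equation*}
t^{m-1}\,q^{(r)}(1;t) \;=\; \sum_{j=r}^{m-2}\frac{j!}{(j-r)!}\,t^{m-1}\gamma_{j}(t),
\end{equation*}
the left-hand side is therefore a polynomial in $t$ of degree at most $m-2$. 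Provided $1\le r\le m-1$, the prefactor $\prod_{k=1}^{r}(m-k) = (m-1)!/(m-1-r)!$ is nonzero, and comparing degrees on the two sides will force $r+\deg\widetilde{\gamma}(t)\le m-2$, which is exactly the desired bound $\deg\widetilde{\gamma}(t)\le m-2-r$.

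The edge cases should cause no difficulty: $r=0$ simply recovers the ambient bound $\deg\widetilde{\gamma}\le m-2$, and for $r \ge m-1$ the fact that $q(z;t)$ has $z$-degree $m-2$ makes $q^{(m-1)}\equiv 0$, forcing $q(1;t)=0$ and hence $\widetilde{\gamma}\equiv 0$, matching the asserted bound. The key conceptual step is recognizing that the two ingredients---the derivative recursion at $z=1$ coming from the defining first-order ODE in $z$, and the $1/t$-degree structure of $\gamma_{j}(t)$---fit together to pin down $\deg\widetilde{\gamma}(t)$; neither piece is by itself difficult, and I expect no real obstacle beyond the routine bookkeeping already implicit in the setup immediately above the lemma.
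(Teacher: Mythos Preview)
Your proof is correct and follows essentially the same approach as the paper: both differentiate the defining relation $p(z)=(1-t+zt)q'(z;t)-(m-1)tq(z;t)$ repeatedly, evaluate at $z=1$ to obtain the recursion $q^{(i+1)}(1;t)=(m-1-i)t\,q^{(i)}(1;t)$, telescope it, and then use the $1/t$-degree structure of the $\gamma_j(t)$ to bound the degree of $\widetilde{\gamma}(t)$. The only cosmetic difference is that the paper tracks powers of $1/t$ in $q(1;t)$ directly, whereas you multiply through by $t^{m-1}$ and compare degrees in $t$; your handling of the edge cases $r=0$ and $r\ge m-1$ is a nice bonus the paper omits.
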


\begin{proof}
Differentiate equation (\ref{eqnpq}) $s$ times to obtain%
\[
\left(  \frac{d}{dz}\right)  ^{s}p\left(  z\right)  =\left(  s-m+1\right)
t\left(  \frac{d}{dz}\right)  ^{s}q\left(  z;t\right)  +\left(  1-t+zt\right)
\left(  \frac{d}{dz}\right)  ^{s+1}q\left(  z;t\right)  .
\]
Suppose $0\leq s<r$ and $z=1$ then%
\[
\left(  \frac{d}{dz}\right)  ^{s}q\left(  1;t\right)  =\frac{1}{\left(
m-1-s\right)  t}\left(  \frac{d}{dz}\right)  ^{s+1}q\left(  1;t\right)  .
\]
By induction%
\[
q\left(  1;t\right)  =\left(  \frac{1}{t}\right)  ^{r}\frac{\left(
m-r-1\right)  !}{\left(  m-1\right)  !}\left(  \frac{d}{dz}\right)
^{r}q\left(  1;t\right)  .
\]
Also $\left(  \frac{d}{dz}\right)  ^{r}q\left(  1;t\right)  =\sum
\limits_{j=r}^{m-2}\dfrac{j!}{\left(  j-r\right)  !}\gamma_{j}\left(
t\right)  $ so the lowest power of$\frac{1}{t}$ in $\left(  \frac{d}%
{dz}\right)  ^{r}q\left(  1;t\right)  $ is $1$. Thus $q\left(  1;t\right)  $
is a polynomial in $\frac{1}{t}$ of degree $\leq m-1$ and the lowest power of
$\frac{1}{t}$ is $r+1$. Hence the highest power of $t$ in $\widetilde{\gamma
}\left(  t\right)  :=t^{m-1}q\left(  1;t\right)  $ is $t^{m-2-r}$.
\end{proof}

\begin{proposition}
\label{int(p,m)}Suppose $\beta_{i}=0$ for $0\leq i<n$ then the integral%
\[
I\left(  p,m\right)  :=\int_{0}^{1}\frac{p\left(  z\right)  }{\left(
1-\left(  1-z\right)  t\right)  ^{m}}\mathrm{d}z=\frac{h\left(  t\right)
}{\left(  1-t\right)  ^{m-n-1}}%
\]
where $h\left(  t\right)  $ is a polynomial in $t$ and $\deg h\left(
t\right)  \leq m-n-2$.
\end{proposition}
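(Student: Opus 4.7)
The plan is to read the desired formula directly off of the antiderivative encoded by $q(z;t)$ and then to use the two preparatory lemmas to control (i) the pole structure at $t=1$, (ii) the absence of a pole at $t=0$, and (iii) the degree in $t$. By construction of $q(z;t)$ and the fundamental theorem of calculus,
\[
I(p,m)=\left[\frac{q(z;t)}{(1-(1-z)t)^{m-1}}\right]_{z=0}^{z=1}
 =q(1;t)-\frac{q(0;t)}{(1-t)^{m-1}}
 =\frac{\widetilde{\gamma}(t)}{t^{m-1}}-\frac{\widetilde{\gamma_0}(t)}{t^{m-1}(1-t)^{m-1}},
\]
since $q(0;t)=\gamma_0(t)=\widetilde{\gamma_0}(t)/t^{m-1}$ and $q(1;t)=\sum_j\gamma_j(t)=\widetilde{\gamma}(t)/t^{m-1}$. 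Both $\widetilde{\gamma}(t)$ and $\widetilde{\gamma_0}(t)$ are polynomials in $t$ of degree at most $m-2$.

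Next I would invoke the first of the two preceding lemmas: under the vanishing hypothesis $\beta_0=\cdots=\beta_{n-1}=0$, that lemma asserts $\gamma_0(t)$ is divisible (as a rational function of $t$) by $(1-t)^n$. Since $\widetilde{\gamma_0}(t)=t^{m-1}\gamma_0(t)$ is an honest polynomial in $t$, this divisibility transfers to $\widetilde{\gamma_0}(t)$ as a polynomial identity. Hence there is a polynomial $Q(t)$ with $\deg Q\le (m-2)-n=m-n-2$ such that $\widetilde{\gamma_0}(t)=(1-t)^n Q(t)$. Substituting and combining over a common denominator yields
\[
I(p,m)=\frac{(1-t)^{m-1-n}\,\widetilde{\gamma}(t)-Q(t)}{t^{m-1}(1-t)^{m-1-n}}
 \;=:\;\frac{N(t)}{t^{m-1}(1-t)^{m-n-1}},
\]
with $N(t)$ a polynomial of degree at most $\max\bigl((m-1-n)+(m-2),\,m-n-2\bigr)=2m-n-3$.

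It remains to clear the spurious factor of $t^{m-1}$. Here I would appeal to the integral definition: for $|t|$ small, $(1-(1-z)t)^{-m}$ is analytic in $t$ uniformly on $[0,1]$, so $I(p,m)$ is analytic (indeed, a power series) at $t=0$. Therefore the rational expression $N(t)/[t^{m-1}(1-t)^{m-n-1}]$ has no pole at $t=0$, which forces $t^{m-1}\mid N(t)$ as polynomials. Setting $h(t):=N(t)/t^{m-1}$ gives a polynomial with $\deg h\le (2m-n-3)-(m-1)=m-n-2$, and we obtain the claimed identity $I(p,m)=h(t)/(1-t)^{m-n-1}$.

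The only step I expect to require any care is the divisibility transfer in the middle paragraph, i.e., making sure that ``$\gamma_0$ is divisible by $(1-t)^n$'' in the rational-function sense really does produce $(1-t)^n\mid \widetilde{\gamma_0}(t)$ as polynomials with the sharp degree bound $\deg Q\le m-n-2$; this follows from the explicit form of $\gamma_0$ as $1/t^{m-1}$ times a polynomial in $t$ of degree $\le m-2$, but it is where the power of the first lemma is being used. Everything else is formal manipulation of the single identity $I(p,m)=q(1;t)-q(0;t)/(1-t)^{m-1}$ together with the elementary observation that the integral is holomorphic at $t=0$.
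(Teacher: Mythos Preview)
Your proposal is correct and follows essentially the same route as the paper's proof: both evaluate the antiderivative $q(z;t)/(1-(1-z)t)^{m-1}$ at the endpoints to obtain $q(1;t)-q(0;t)/(1-t)^{m-1}$, invoke the first lemma to extract the factor $(1-t)^n$ from $\widetilde{\gamma_0}(t)$, combine over the common denominator $t^{m-1}(1-t)^{m-n-1}$, and then use analyticity of the integral at $t=0$ to cancel $t^{m-1}$ and read off the degree bound $\deg h\le m-n-2$. Your write-up is in fact slightly cleaner in naming the intermediate polynomials $Q$, $N$, $h$, but the logic and the degree count match the paper line by line.
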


\begin{proof}
By elementary calculus%
\[
I\left(  p,m\right)  =\sum_{j=0}^{m-2}\gamma_{j}\left(  t\right)
-\frac{\gamma_{0}\left(  t\right)  }{\left(  1-t\right)  ^{m-1}}%
=\frac{\widetilde{\gamma}\left(  t\right)  \left(  1-t\right)  ^{m-1}%
-\widetilde{\gamma_{0}}\left(  t\right)  }{t^{m-1}\left(  1-t\right)  ^{m-1}}%
\]
By hypothesis $\widetilde{\gamma_{0}}\left(  t\right)  =\left(  1-t\right)
^{n}\sum_{i=0}^{m-n-2}d_{i}t^{i}$ for some coefficients $\left\{
d_{i}\right\}  $ then
\[
I\left(  p,m\right)  =\frac{\widetilde{\gamma}\left(  t\right)  \left(
1-t\right)  ^{m-n-1}-\sum_{i=0}^{m-n-2}d_{i}t^{i}}{t^{m-1}\left(  1-t\right)
^{m-n-1}}%
\]
Suppose $\widetilde{\gamma}\left(  t\right)  \left(  1-t\right)  ^{m-n-1}%
=\sum\limits_{i=0}^{2m-n-3}c_{i}t^{i}$ for some coefficients $\left\{
c_{i}\right\}  $ then $\widetilde{\gamma}\left(  t\right)  \left(  1-t\right)
^{m-n-1}-\sum\limits_{i=0}^{m-n-2}d_{i}t^{i}=\sum\limits_{i=m-n-1}^{2m-3}%
c_{i}t^{i}+\sum\limits_{i=0}^{m-n-2}\left(  c_{i}-d_{i}\right)  t^{i}$ but the
integrand is analytic in $t$ for $\left\vert t\right\vert <1$ thus $t^{m-1}$
must be a factor of the numerator and it follows that $c_{i}=d_{i}$ for $0\leq
i\leq m-n-2$ and $c_{i}=0$ for $m-n-1\leq i<m-1$ so that%
\[
I\left(  p,m\right)  =\frac{\sum_{j=0}^{m-n-2}c_{j+m-1}t^{j}}{\left(
1-t\right)  ^{m-n-1}}.
\]
This completes the proof.
\end{proof}

\begin{corollary}
\label{int(p,m)cor}If also $\left(  \frac{d}{dz}\right)  ^{i}p\left(
z\right)  |_{z=1}=0$ for $0\leq i<r$ then $\deg h\left(  t\right)  \leq
m-n-r-2$.
\end{corollary}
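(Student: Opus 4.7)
The plan is to derive the corollary by re-running the final bookkeeping in the proof of Proposition \ref{int(p,m)} using the sharper bound on $\deg \widetilde{\gamma}(t)$ that Lemma \ref{dz(p(1))} already provides. In the original proposition, only the vanishing of $p$ at $z=0$ was used, and $\widetilde{\gamma}(t)$ was treated with its crude degree bound $2m-n-3$ (coming from $(1-t)^{m-n-1}$ times a polynomial of degree $\leq m-2$). The extra hypothesis here — vanishing of $p$ to order $r$ at $z=1$ — is exactly the input to Lemma \ref{dz(p(1))}, which improves the bound to $\deg \widetilde{\gamma}(t) \leq m-2-r$. Nothing else in the setup changes: $\widetilde{\gamma_0}(t)$ still has the factor $(1-t)^n$ from the vanishing at $z=0$, and the identity
\[
I(p,m)=\frac{\widetilde{\gamma}(t)(1-t)^{m-n-1}-\sum_{i=0}^{m-n-2}d_{i}t^{i}}{t^{m-1}(1-t)^{m-n-1}}
\]
from the proposition's proof remains valid verbatim.

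Next, I would expand $\widetilde{\gamma}(t)(1-t)^{m-n-1}=\sum_{i\geq 0}c_{i}t^{i}$. By the improved bound, the top index is now $i=(m-2-r)+(m-n-1)=2m-n-r-3$ rather than $2m-n-3$. The analyticity argument used in the proposition (the integrand is analytic in $t$ for $|t|<1$, hence $t^{m-1}$ must divide the numerator) still forces $c_{i}=d_{i}$ for $0\leq i\leq m-n-2$ and $c_{i}=0$ for $m-n-1\leq i\leq m-2$. What changes is only the range of the surviving tail: it now runs from $i=m-1$ up to $i=2m-n-r-3$, giving
\[
h(t)=\sum_{j=0}^{m-n-r-2}c_{j+m-1}t^{j},
\]
so $\deg h(t)\leq m-n-r-2$, as claimed.

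The only place that needs justification beyond a quoting of the proposition is the compatibility of the two truncations: one must check that $m-n-r-2\geq m-n-2$ is not required, i.e., that the interval $m-n-1\leq i\leq m-2$ of forced zeros does not conflict with the upper end $2m-n-r-3$ of the expansion. This is immediate from $2m-n-r-3\geq m-2$, which is equivalent to $m\geq n+r+1$, and this is exactly the nontriviality condition (if it fails, then $p\equiv 0$ by a dimension count on the space of polynomials vanishing to orders $n$ at $0$ and $r$ at $1$ inside degree $\leq m-2$, and the bound holds vacuously). I do not anticipate a serious obstacle: the hard analytic content is already packaged in Lemma \ref{dz(p(1))} and in the analyticity-forces-divisibility step of the proposition, so the corollary is essentially an accounting argument on one polynomial degree.
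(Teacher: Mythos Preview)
Your proposal is correct and takes essentially the same approach as the paper: invoke Lemma \ref{dz(p(1))} to sharpen the degree bound on $\widetilde{\gamma}(t)$ to $m-2-r$, so that $\widetilde{\gamma}(t)(1-t)^{m-n-1}$ has degree at most $2m-n-r-3$, which forces $c_{j+m-1}=0$ for $j>m-n-r-2$. The paper's proof is terser---it simply cites the degree bound and reads off the conclusion from the expression for $h(t)$ already obtained in the proposition---while you re-run more of the bookkeeping and add a sanity check on the nontriviality condition, but the argument is the same.
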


\begin{proof}
By Lemma \ref{dz(p(1))} $\deg\widetilde{\gamma}\left(  t\right)  \leq m-2-r$
thus $\deg\left(  \widetilde{\gamma}\left(  t\right)  \left(  1-t\right)
^{m-n-1}\right)  \leq2m-n-r-3$ implying $c_{j+m-1}=0$ for $j>m-n-r-2$.
\end{proof}

Suppose $\beta_{i}=0$ for $0\leq i<n$, $\left(  \frac{d}{dz}\right)
^{j}p\left(  z\right)  |_{z=1}=0$ for $0\leq j<r$ and $\widetilde{\gamma
}\left(  t\right)  =\sum\limits_{j=0}^{m-2-r}g_{j}t^{j}$ then the coefficients
$\left[  c_{j}\right]  _{j=m-1}^{2m-n-r-3}$ depend only on $\left[
g_{i}\right]  _{i=n}^{m-r-2}$ (because $\deg\left(  \left(  1-t\right)
^{m-n-1}\sum_{i=0}^{n-1}g_{i}t^{i}\right)  \leq m-2$. Indeed%
\[
c_{j+m-1}=\sum_{i=0}^{m-n-r-2-j}\left(  -1\right)  ^{m-n-1-i}\binom{m-n-1}%
{i}g_{n+j+i};
\]
note only the terms $g_{u}$ with $u\geq n$ appear; also $c_{2m-n-r-3}=\left(
-1\right)  ^{m-n-1}g_{m-r-2}$.

We now apply the general formulas to $p\left(  z\right)  =z^{2a+k}\chi
_{2a,k}\left(  z\right)  \left(  1-z\right)  ^{2a}$ and $m=6a+2k+2$. As
mentioned above $\left(  \frac{d}{dz}\right)  ^{i}p\left(  z\right)
|_{z=0}=0$ for $0\leq i<3a+k$ and $\left(  \frac{d}{dz}\right)  ^{j}p\left(
z\right)  |_{z=1}=0$ for $0\leq j<2a$. By Proposition \ref{int(p,m)} and
Corollary \ref{int(p,m)cor}%
\[
\int_{0}^{1}\frac{z^{2a+k}\chi_{2a,k}\left(  z\right)  \left(  1-z\right)
^{2a}}{\left(  1-\left(  1-z\right)  t\right)  ^{2+6a+2k}}\mathrm{d}%
z=\frac{h\left(  t\right)  }{\left(  1-t\right)  ^{3a+k+1}},
\]
where $h\left(  t\right)  $ is a polynomial in $t$ and $\deg\left(  h\left(
t\right)  \right)  =a+k$. To finish the integration with respect to $\left\{
t\left(  1-t\right)  \right\}  ^{1+6a+2k}\mathrm{d}t$ let $h\left(  t\right)
=\sum_{i=0}^{a+k}b_{i}t^{i}$ and compute
\begin{gather*}
\int_{0}^{1}\frac{h\left(  t\right)  }{\left(  1-t\right)  ^{3a+k+1}}\left\{
t\left(  1-t\right)  \right\}  ^{1+6a+2k}\mathrm{d}t=\int_{0}^{1}\sum
_{i=0}^{a+k}b_{i}t^{i}\left\{  t^{1+6a+2k}\left(  1-t\right)  ^{3a+k}\right\}
\mathrm{d}t\\
=\frac{\Gamma\left(  6a+2k+2\right)  \Gamma\left(  3a+k+1\right)  }%
{\Gamma\left(  9a+3k+3\right)  }\sum_{i=0}^{a+k}b_{i}\frac{\left(
6a+2k+2\right)  _{i}}{\left(  9a+3k+3\right)  _{i}}.
\end{gather*}
This used the Beta integral $\int_{0}^{1}t^{i}t^{M-1}\left(  1-t\right)
^{N-1}\mathrm{d}t=\frac{\Gamma\left(  M\right)  \Gamma\left(  N\right)
}{\Gamma\left(  M+N\right)  }\frac{\left(  M\right)  _{i}}{\left(  M+N\right)
_{i}}$. The integral was denoted $I_{2a,k}$ in formula (\ref{int(d,k)}) in the
previous section.

We have demonstrated a computational scheme that involves only finite
operations; basically just some elementary polynomial algebra.

\begin{acknowledgements}
My considerable thanks to Charles Dunkl for contributing the  appendices, and his general support and advice.
\end{acknowledgements}

\bibliography{main}% Produces the bibliography via BibTeX.

\begin{thebibliography}{60}
\expandafter\ifx\csname natexlab\endcsname\relax\def\natexlab#1{#1}\fi
\expandafter\ifx\csname bibnamefont\endcsname\relax
  \def\bibnamefont#1{#1}\fi
\expandafter\ifx\csname bibfnamefont\endcsname\relax
  \def\bibfnamefont#1{#1}\fi
\expandafter\ifx\csname citenamefont\endcsname\relax
  \def\citenamefont#1{#1}\fi
\expandafter\ifx\csname url\endcsname\relax
  \def\url#1{\texttt{#1}}\fi
\expandafter\ifx\csname urlprefix\endcsname\relax\def\urlprefix{URL }\fi
\providecommand{\bibinfo}[2]{#2}
\providecommand{\eprint}[2][]{\url{#2}}

\bibitem[{\citenamefont{Lovas and Andai}(2017)}]{lovas2017invariance}
\bibinfo{author}{\bibfnamefont{A.}~\bibnamefont{Lovas}} \bibnamefont{and}
  \bibinfo{author}{\bibfnamefont{A.}~\bibnamefont{Andai}},
  \bibinfo{journal}{Journal of Physics A: Mathematical and Theoretical}
  \textbf{\bibinfo{volume}{50}}, \bibinfo{pages}{295303}
  (\bibinfo{year}{2017}).

\bibitem[{\citenamefont{{\.Z}yczkowski and
  Sommers}(2003)}]{zyczkowski2003hilbert}
\bibinfo{author}{\bibfnamefont{K.}~\bibnamefont{{\.Z}yczkowski}}
  \bibnamefont{and} \bibinfo{author}{\bibfnamefont{H.-J.}
  \bibnamefont{Sommers}}, \bibinfo{journal}{Journal of Physics A: Mathematical
  and General} \textbf{\bibinfo{volume}{36}}, \bibinfo{pages}{10115}
  (\bibinfo{year}{2003}).

\bibitem[{\citenamefont{Bengtsson and
  {\.Z}yczkowski}(2017)}]{bengtsson2017geometry}
\bibinfo{author}{\bibfnamefont{I.}~\bibnamefont{Bengtsson}} \bibnamefont{and}
  \bibinfo{author}{\bibfnamefont{K.}~\bibnamefont{{\.Z}yczkowski}},
  \emph{\bibinfo{title}{Geometry of quantum states: an introduction to quantum
  entanglement}} (\bibinfo{publisher}{Cambridge university press},
  \bibinfo{year}{2017}).

\bibitem[{\citenamefont{Slater}(2018{\natexlab{a}})}]{slater2017master}
\bibinfo{author}{\bibfnamefont{P.~B.} \bibnamefont{Slater}},
  \bibinfo{journal}{Quantum Information Processing}
  \textbf{\bibinfo{volume}{17}}, \bibinfo{pages}{83}
  (\bibinfo{year}{2018}{\natexlab{a}}).

\bibitem[{\citenamefont{Milz and Strunz}(2014)}]{milz2014volumes}
\bibinfo{author}{\bibfnamefont{S.}~\bibnamefont{Milz}} \bibnamefont{and}
  \bibinfo{author}{\bibfnamefont{W.~T.} \bibnamefont{Strunz}},
  \bibinfo{journal}{Journal of Physics A: Mathematical and Theoretical}
  \textbf{\bibinfo{volume}{48}}, \bibinfo{pages}{035306}
  (\bibinfo{year}{2014}).

\bibitem[{\citenamefont{Fei and Joynt}(2016)}]{fei2016numerical}
\bibinfo{author}{\bibfnamefont{J.}~\bibnamefont{Fei}} \bibnamefont{and}
  \bibinfo{author}{\bibfnamefont{R.}~\bibnamefont{Joynt}},
  \bibinfo{journal}{Reports on Mathematical Physics}
  \textbf{\bibinfo{volume}{78}}, \bibinfo{pages}{177} (\bibinfo{year}{2016}).

\bibitem[{\citenamefont{Shang et~al.}(2015)\citenamefont{Shang, Seah, Ng, Nott,
  and Englert}}]{shang2015monte}
\bibinfo{author}{\bibfnamefont{J.}~\bibnamefont{Shang}},
  \bibinfo{author}{\bibfnamefont{Y.-L.} \bibnamefont{Seah}},
  \bibinfo{author}{\bibfnamefont{H.~K.} \bibnamefont{Ng}},
  \bibinfo{author}{\bibfnamefont{D.~J.} \bibnamefont{Nott}}, \bibnamefont{and}
  \bibinfo{author}{\bibfnamefont{B.-G.} \bibnamefont{Englert}},
  \bibinfo{journal}{New Journal of Physics} \textbf{\bibinfo{volume}{17}},
  \bibinfo{pages}{043017} (\bibinfo{year}{2015}).

\bibitem[{\citenamefont{Slater}(2013)}]{slater2013concise}
\bibinfo{author}{\bibfnamefont{P.~B.} \bibnamefont{Slater}},
  \bibinfo{journal}{Journal of Physics A: Mathematical and Theoretical}
  \textbf{\bibinfo{volume}{46}}, \bibinfo{pages}{445302}
  (\bibinfo{year}{2013}).

\bibitem[{\citenamefont{Slater and Dunkl}(2012)}]{slater2012moment}
\bibinfo{author}{\bibfnamefont{P.~B.} \bibnamefont{Slater}} \bibnamefont{and}
  \bibinfo{author}{\bibfnamefont{C.~F.} \bibnamefont{Dunkl}},
  \bibinfo{journal}{Journal of Physics A: Mathematical and Theoretical}
  \textbf{\bibinfo{volume}{45}}, \bibinfo{pages}{095305}
  (\bibinfo{year}{2012}).

\bibitem[{\citenamefont{Slater}(2007)}]{slater2007dyson}
\bibinfo{author}{\bibfnamefont{P.~B.} \bibnamefont{Slater}},
  \bibinfo{journal}{Journal of Physics A: Mathematical and Theoretical}
  \textbf{\bibinfo{volume}{40}}, \bibinfo{pages}{14279} (\bibinfo{year}{2007}).

\bibitem[{\citenamefont{Adler}(1995)}]{adler1995quaternionic}
\bibinfo{author}{\bibfnamefont{S.~L.} \bibnamefont{Adler}},
  \emph{\bibinfo{title}{Quaternionic quantum mechanics and quantum fields}},
  vol.~\bibinfo{volume}{88} (\bibinfo{publisher}{Oxford University Press on
  Demand}, \bibinfo{year}{1995}).

\bibitem[{\citenamefont{Szarek et~al.}(2006)\citenamefont{Szarek, Bengtsson,
  and {\.Z}yczkowski}}]{szarek2006structure}
\bibinfo{author}{\bibfnamefont{S.~J.} \bibnamefont{Szarek}},
  \bibinfo{author}{\bibfnamefont{I.}~\bibnamefont{Bengtsson}},
  \bibnamefont{and}
  \bibinfo{author}{\bibfnamefont{K.}~\bibnamefont{{\.Z}yczkowski}},
  \bibinfo{journal}{Journal of Physics A: Mathematical and General}
  \textbf{\bibinfo{volume}{39}}, \bibinfo{pages}{L119} (\bibinfo{year}{2006}).

\bibitem[{\citenamefont{Samuel et~al.}(2018)\citenamefont{Samuel, Shivam, and
  Sinha}}]{samuel2018lorentzian}
\bibinfo{author}{\bibfnamefont{J.}~\bibnamefont{Samuel}},
  \bibinfo{author}{\bibfnamefont{K.}~\bibnamefont{Shivam}}, \bibnamefont{and}
  \bibinfo{author}{\bibfnamefont{S.}~\bibnamefont{Sinha}},
  \bibinfo{journal}{arXiv preprint arXiv:1801.00611}  (\bibinfo{year}{2018}).

\bibitem[{\citenamefont{Avron and Kenneth}(2009)}]{avron2009entanglement}
\bibinfo{author}{\bibfnamefont{J.}~\bibnamefont{Avron}} \bibnamefont{and}
  \bibinfo{author}{\bibfnamefont{O.}~\bibnamefont{Kenneth}},
  \bibinfo{journal}{Annals of Physics} \textbf{\bibinfo{volume}{324}},
  \bibinfo{pages}{470} (\bibinfo{year}{2009}).

\bibitem[{\citenamefont{Braga et~al.}(2010)\citenamefont{Braga, Souza, and
  Mizrahi}}]{braga2010geometrical}
\bibinfo{author}{\bibfnamefont{H.}~\bibnamefont{Braga}},
  \bibinfo{author}{\bibfnamefont{S.}~\bibnamefont{Souza}}, \bibnamefont{and}
  \bibinfo{author}{\bibfnamefont{S.~S.} \bibnamefont{Mizrahi}},
  \bibinfo{journal}{Physical Review A} \textbf{\bibinfo{volume}{81}},
  \bibinfo{pages}{042310} (\bibinfo{year}{2010}).

\bibitem[{\citenamefont{Gamel}(2016)}]{gamel2016entangled}
\bibinfo{author}{\bibfnamefont{O.}~\bibnamefont{Gamel}},
  \bibinfo{journal}{Physical Review A} \textbf{\bibinfo{volume}{93}},
  \bibinfo{pages}{062320} (\bibinfo{year}{2016}).

\bibitem[{\citenamefont{Jevtic et~al.}(2014)\citenamefont{Jevtic, Pusey,
  Jennings, and Rudolph}}]{jevtic2014quantum}
\bibinfo{author}{\bibfnamefont{S.}~\bibnamefont{Jevtic}},
  \bibinfo{author}{\bibfnamefont{M.}~\bibnamefont{Pusey}},
  \bibinfo{author}{\bibfnamefont{D.}~\bibnamefont{Jennings}}, \bibnamefont{and}
  \bibinfo{author}{\bibfnamefont{T.}~\bibnamefont{Rudolph}},
  \bibinfo{journal}{Physical review letters} \textbf{\bibinfo{volume}{113}},
  \bibinfo{pages}{020402} (\bibinfo{year}{2014}).

\bibitem[{\citenamefont{Aubrun and Szarek}(2017)}]{aubrun2017alice}
\bibinfo{author}{\bibfnamefont{G.}~\bibnamefont{Aubrun}} \bibnamefont{and}
  \bibinfo{author}{\bibfnamefont{S.~J.} \bibnamefont{Szarek}},
  \emph{\bibinfo{title}{Alice and Bob Meet Banach: The Interface of Asymptotic
  Geometric Analysis and Quantum Information Theory}}, vol.
  \bibinfo{volume}{223} (\bibinfo{publisher}{American Mathematical Soc.},
  \bibinfo{year}{2017}).

\bibitem[{\citenamefont{{\.Z}yczkowski
  et~al.}(1998)\citenamefont{{\.Z}yczkowski, Horodecki, Sanpera, and
  Lewenstein}}]{zyczkowski1998volume}
\bibinfo{author}{\bibfnamefont{K.}~\bibnamefont{{\.Z}yczkowski}},
  \bibinfo{author}{\bibfnamefont{P.}~\bibnamefont{Horodecki}},
  \bibinfo{author}{\bibfnamefont{A.}~\bibnamefont{Sanpera}}, \bibnamefont{and}
  \bibinfo{author}{\bibfnamefont{M.}~\bibnamefont{Lewenstein}},
  \bibinfo{journal}{Physical Review A} \textbf{\bibinfo{volume}{58}},
  \bibinfo{pages}{883} (\bibinfo{year}{1998}).

\bibitem[{\citenamefont{Petz and Sud{\'a}r}(1996)}]{petz1996geometries}
\bibinfo{author}{\bibfnamefont{D.}~\bibnamefont{Petz}} \bibnamefont{and}
  \bibinfo{author}{\bibfnamefont{C.}~\bibnamefont{Sud{\'a}r}},
  \bibinfo{journal}{Journal of Mathematical Physics}
  \textbf{\bibinfo{volume}{37}}, \bibinfo{pages}{2662} (\bibinfo{year}{1996}).

\bibitem[{\citenamefont{Rexiti et~al.}(2018)\citenamefont{Rexiti, Felice, and
  Mancini}}]{e20020146}
\bibinfo{author}{\bibfnamefont{M.}~\bibnamefont{Rexiti}},
  \bibinfo{author}{\bibfnamefont{D.}~\bibnamefont{Felice}}, \bibnamefont{and}
  \bibinfo{author}{\bibfnamefont{S.}~\bibnamefont{Mancini}},
  \bibinfo{journal}{Entropy} \textbf{\bibinfo{volume}{20}}
  (\bibinfo{year}{2018}), ISSN \bibinfo{issn}{1099-4300},
  \urlprefix\url{http://www.mdpi.com/1099-4300/20/2/146}.

\bibitem[{\citenamefont{Singh et~al.}(2014)\citenamefont{Singh, Kunjwal, and
  Simon}}]{singh2014relative}
\bibinfo{author}{\bibfnamefont{R.}~\bibnamefont{Singh}},
  \bibinfo{author}{\bibfnamefont{R.}~\bibnamefont{Kunjwal}}, \bibnamefont{and}
  \bibinfo{author}{\bibfnamefont{R.}~\bibnamefont{Simon}},
  \bibinfo{journal}{Physical Review A} \textbf{\bibinfo{volume}{89}},
  \bibinfo{pages}{022308} (\bibinfo{year}{2014}).

\bibitem[{\citenamefont{Batle and Abdel-Aty}(2014)}]{batle2014geometric}
\bibinfo{author}{\bibfnamefont{J.}~\bibnamefont{Batle}} \bibnamefont{and}
  \bibinfo{author}{\bibfnamefont{M.}~\bibnamefont{Abdel-Aty}},
  \bibinfo{journal}{JOSA B} \textbf{\bibinfo{volume}{31}},
  \bibinfo{pages}{2540} (\bibinfo{year}{2014}).

\bibitem[{\citenamefont{Andai}(2006)}]{andai2006volume}
\bibinfo{author}{\bibfnamefont{A.}~\bibnamefont{Andai}},
  \bibinfo{journal}{Journal of Physics A: Mathematical and General}
  \textbf{\bibinfo{volume}{39}}, \bibinfo{pages}{13641} (\bibinfo{year}{2006}).

\bibitem[{\citenamefont{Zyczkowski and Sommers}(2001)}]{zyczkowski2001induced}
\bibinfo{author}{\bibfnamefont{K.}~\bibnamefont{Zyczkowski}} \bibnamefont{and}
  \bibinfo{author}{\bibfnamefont{H.-J.} \bibnamefont{Sommers}},
  \bibinfo{journal}{Journal of Physics A: Mathematical and General}
  \textbf{\bibinfo{volume}{34}}, \bibinfo{pages}{7111} (\bibinfo{year}{2001}).

\bibitem[{\citenamefont{Slater}(2002)}]{slater2002priori}
\bibinfo{author}{\bibfnamefont{P.~B.} \bibnamefont{Slater}},
  \bibinfo{journal}{Quantum Information Processing}
  \textbf{\bibinfo{volume}{1}}, \bibinfo{pages}{397} (\bibinfo{year}{2002}).

\bibitem[{\citenamefont{Slater}(2005)}]{slater2005silver}
\bibinfo{author}{\bibfnamefont{P.~B.} \bibnamefont{Slater}},
  \bibinfo{journal}{Journal of Geometry and Physics}
  \textbf{\bibinfo{volume}{53}}, \bibinfo{pages}{74} (\bibinfo{year}{2005}).

\bibitem[{\citenamefont{Slater and Dunkl}(2015)}]{slater2015formulas}
\bibinfo{author}{\bibfnamefont{P.~B.} \bibnamefont{Slater}} \bibnamefont{and}
  \bibinfo{author}{\bibfnamefont{C.~F.} \bibnamefont{Dunkl}},
  \bibinfo{journal}{Advances in Mathematical Physics}
  \textbf{\bibinfo{volume}{2015}}, \bibinfo{pages}{621353}
  (\bibinfo{year}{2015}).

\bibitem[{\citenamefont{Slater}(2018{\natexlab{b}})}]{slater2016formulas}
\bibinfo{author}{\bibfnamefont{P.~B.} \bibnamefont{Slater}},
  \bibinfo{journal}{Advances in Mathematical Physics}
  \textbf{\bibinfo{volume}{2018}}, \bibinfo{pages}{9365213}
  (\bibinfo{year}{2018}{\natexlab{b}}).

\bibitem[{\citenamefont{Slater}(2016{\natexlab{a}})}]{slater2016invariance}
\bibinfo{author}{\bibfnamefont{P.~B.} \bibnamefont{Slater}},
  \bibinfo{journal}{Quantum Information Processing}
  \textbf{\bibinfo{volume}{15}}, \bibinfo{pages}{3745}
  (\bibinfo{year}{2016}{\natexlab{a}}).

\bibitem[{\citenamefont{Gerdt et~al.}(2011)\citenamefont{Gerdt, Mladenov,
  Palii, and Khvedelidze}}]{gerdt20116}
\bibinfo{author}{\bibfnamefont{V.}~\bibnamefont{Gerdt}},
  \bibinfo{author}{\bibfnamefont{D.}~\bibnamefont{Mladenov}},
  \bibinfo{author}{\bibfnamefont{Y.}~\bibnamefont{Palii}}, \bibnamefont{and}
  \bibinfo{author}{\bibfnamefont{A.}~\bibnamefont{Khvedelidze}},
  \bibinfo{journal}{Journal of Mathematical Sciences}
  \textbf{\bibinfo{volume}{179}}, \bibinfo{pages}{690} (\bibinfo{year}{2011}).

\bibitem[{\citenamefont{Byrd and Khaneja}(2003)}]{byrd2003characterization}
\bibinfo{author}{\bibfnamefont{M.~S.} \bibnamefont{Byrd}} \bibnamefont{and}
  \bibinfo{author}{\bibfnamefont{N.}~\bibnamefont{Khaneja}},
  \bibinfo{journal}{Physical Review A} \textbf{\bibinfo{volume}{68}},
  \bibinfo{pages}{062322} (\bibinfo{year}{2003}).

\bibitem[{\citenamefont{Al~Osipov et~al.}(2010)\citenamefont{Al~Osipov,
  Sommers, and {\.Z}yczkowski}}]{al2010random}
\bibinfo{author}{\bibfnamefont{V.}~\bibnamefont{Al~Osipov}},
  \bibinfo{author}{\bibfnamefont{H.-J.} \bibnamefont{Sommers}},
  \bibnamefont{and}
  \bibinfo{author}{\bibfnamefont{K.}~\bibnamefont{{\.Z}yczkowski}},
  \bibinfo{journal}{Journal of Physics A: Mathematical and Theoretical}
  \textbf{\bibinfo{volume}{43}}, \bibinfo{pages}{055302}
  (\bibinfo{year}{2010}).

\bibitem[{\citenamefont{{\.Z}yczkowski
  et~al.}(2011)\citenamefont{{\.Z}yczkowski, Penson, Nechita, and
  Collins}}]{zyczkowski2011generating}
\bibinfo{author}{\bibfnamefont{K.}~\bibnamefont{{\.Z}yczkowski}},
  \bibinfo{author}{\bibfnamefont{K.~A.} \bibnamefont{Penson}},
  \bibinfo{author}{\bibfnamefont{I.}~\bibnamefont{Nechita}}, \bibnamefont{and}
  \bibinfo{author}{\bibfnamefont{B.}~\bibnamefont{Collins}},
  \bibinfo{journal}{Journal of Mathematical Physics}
  \textbf{\bibinfo{volume}{52}}, \bibinfo{pages}{062201}
  (\bibinfo{year}{2011}).

\bibitem[{\citenamefont{Augusiak et~al.}(2008)\citenamefont{Augusiak,
  Demianowicz, and Horodecki}}]{augusiak2008universal}
\bibinfo{author}{\bibfnamefont{R.}~\bibnamefont{Augusiak}},
  \bibinfo{author}{\bibfnamefont{M.}~\bibnamefont{Demianowicz}},
  \bibnamefont{and}
  \bibinfo{author}{\bibfnamefont{P.}~\bibnamefont{Horodecki}},
  \bibinfo{journal}{Physical Review A} \textbf{\bibinfo{volume}{77}},
  \bibinfo{pages}{030301} (\bibinfo{year}{2008}).

\bibitem[{\citenamefont{Johnston}(2013{\natexlab{a}})}]{johnston2013non}
\bibinfo{author}{\bibfnamefont{N.}~\bibnamefont{Johnston}},
  \bibinfo{journal}{Physical Review A} \textbf{\bibinfo{volume}{87}},
  \bibinfo{pages}{064302} (\bibinfo{year}{2013}{\natexlab{a}}).

\bibitem[{\citenamefont{Mendon{\c{c}}a
  et~al.}(2017)\citenamefont{Mendon{\c{c}}a, Marchiolli, and
  Hedemann}}]{mendoncca2017maximally}
\bibinfo{author}{\bibfnamefont{P.~E.} \bibnamefont{Mendon{\c{c}}a}},
  \bibinfo{author}{\bibfnamefont{M.~A.} \bibnamefont{Marchiolli}},
  \bibnamefont{and} \bibinfo{author}{\bibfnamefont{S.~R.}
  \bibnamefont{Hedemann}}, \bibinfo{journal}{Physical Review A}
  \textbf{\bibinfo{volume}{95}}, \bibinfo{pages}{022324}
  (\bibinfo{year}{2017}).

\bibitem[{\citenamefont{Provost}(2005)}]{provost2005moment}
\bibinfo{author}{\bibfnamefont{S.~B.} \bibnamefont{Provost}},
  \bibinfo{journal}{Mathematica Journal} \textbf{\bibinfo{volume}{9}},
  \bibinfo{pages}{727} (\bibinfo{year}{2005}).

\bibitem[{\citenamefont{{\.Z}yczkowski}(1999)}]{zyczkowski1999volume}
\bibinfo{author}{\bibfnamefont{K.}~\bibnamefont{{\.Z}yczkowski}},
  \bibinfo{journal}{Physical Review A} \textbf{\bibinfo{volume}{60}},
  \bibinfo{pages}{3496} (\bibinfo{year}{1999}).

\bibitem[{\citenamefont{Qian et~al.}(2018)\citenamefont{Qian, Li, and
  Qiao}}]{qian2018state}
\bibinfo{author}{\bibfnamefont{C.}~\bibnamefont{Qian}},
  \bibinfo{author}{\bibfnamefont{J.-L.} \bibnamefont{Li}}, \bibnamefont{and}
  \bibinfo{author}{\bibfnamefont{C.-F.} \bibnamefont{Qiao}},
  \bibinfo{journal}{Quantum Information Processing}
  \textbf{\bibinfo{volume}{17}}, \bibinfo{pages}{84} (\bibinfo{year}{2018}).

\bibitem[{\citenamefont{Li and Qiao}(2018)}]{Li2018}
\bibinfo{author}{\bibfnamefont{J.-L.} \bibnamefont{Li}} \bibnamefont{and}
  \bibinfo{author}{\bibfnamefont{C.-F.} \bibnamefont{Qiao}},
  \bibinfo{journal}{Quantum Information Processing}
  \textbf{\bibinfo{volume}{17}}, \bibinfo{pages}{92} (\bibinfo{year}{2018}),
  ISSN \bibinfo{issn}{1573-1332},
  \urlprefix\url{https://doi.org/10.1007/s11128-018-1862-5}.

\bibitem[{\citenamefont{Johnston}(2013{\natexlab{b}})}]{johnston2013separability}
\bibinfo{author}{\bibfnamefont{N.}~\bibnamefont{Johnston}},
  \bibinfo{journal}{Physical Review A} \textbf{\bibinfo{volume}{88}},
  \bibinfo{pages}{062330} (\bibinfo{year}{2013}{\natexlab{b}}).

\bibitem[{\citenamefont{Hildebrand}(2008)}]{hildebrand2008semidefinite}
\bibinfo{author}{\bibfnamefont{R.}~\bibnamefont{Hildebrand}},
  \bibinfo{journal}{Linear Algebra and its Applications}
  \textbf{\bibinfo{volume}{429}}, \bibinfo{pages}{901} (\bibinfo{year}{2008}).

\bibitem[{\citenamefont{Baumgartner et~al.}(2006)\citenamefont{Baumgartner,
  Hiesmayr, and Narnhofer}}]{baumgartner2006state}
\bibinfo{author}{\bibfnamefont{B.}~\bibnamefont{Baumgartner}},
  \bibinfo{author}{\bibfnamefont{B.~C.} \bibnamefont{Hiesmayr}},
  \bibnamefont{and}
  \bibinfo{author}{\bibfnamefont{H.}~\bibnamefont{Narnhofer}},
  \bibinfo{journal}{Physical Review A} \textbf{\bibinfo{volume}{74}},
  \bibinfo{pages}{032327} (\bibinfo{year}{2006}).

\bibitem[{\citenamefont{Dunkl and Slater}(2015)}]{dunkl2015separability}
\bibinfo{author}{\bibfnamefont{C.~F.} \bibnamefont{Dunkl}} \bibnamefont{and}
  \bibinfo{author}{\bibfnamefont{P.~B.} \bibnamefont{Slater}},
  \bibinfo{journal}{Random Matrices: Theory and Applications}
  \textbf{\bibinfo{volume}{4}}, \bibinfo{pages}{1550018}
  (\bibinfo{year}{2015}).

\bibitem[{\citenamefont{Slater}(2000)}]{slater2000exact}
\bibinfo{author}{\bibfnamefont{P.~B.} \bibnamefont{Slater}},
  \bibinfo{journal}{The European Physical Journal B-Condensed Matter and
  Complex Systems} \textbf{\bibinfo{volume}{17}}, \bibinfo{pages}{471}
  (\bibinfo{year}{2000}).

\bibitem[{\citenamefont{{\v{S}}afr{\'a}nek}(2017)}]{vsafranek2017discontinuities}
\bibinfo{author}{\bibfnamefont{D.}~\bibnamefont{{\v{S}}afr{\'a}nek}},
  \bibinfo{journal}{Physical Review A} \textbf{\bibinfo{volume}{95}},
  \bibinfo{pages}{052320} (\bibinfo{year}{2017}).

\bibitem[{\citenamefont{Penson and {\.Z}yczkowski}(2011)}]{penson2011product}
\bibinfo{author}{\bibfnamefont{K.~A.} \bibnamefont{Penson}} \bibnamefont{and}
  \bibinfo{author}{\bibfnamefont{K.}~\bibnamefont{{\.Z}yczkowski}},
  \bibinfo{journal}{Physical Review E} \textbf{\bibinfo{volume}{83}},
  \bibinfo{pages}{061118} (\bibinfo{year}{2011}).

\bibitem[{\citenamefont{Borot and Nadal}(2012)}]{borot2012purity}
\bibinfo{author}{\bibfnamefont{G.}~\bibnamefont{Borot}} \bibnamefont{and}
  \bibinfo{author}{\bibfnamefont{C.}~\bibnamefont{Nadal}},
  \bibinfo{journal}{Journal of Physics A: Mathematical and Theoretical}
  \textbf{\bibinfo{volume}{45}}, \bibinfo{pages}{075209}
  (\bibinfo{year}{2012}).

\bibitem[{\citenamefont{Leobacher and
  Pillichshammer}(2014)}]{leobacher2014introduction}
\bibinfo{author}{\bibfnamefont{G.}~\bibnamefont{Leobacher}} \bibnamefont{and}
  \bibinfo{author}{\bibfnamefont{F.}~\bibnamefont{Pillichshammer}},
  \emph{\bibinfo{title}{Introduction to quasi-Monte Carlo integration and
  applications}} (\bibinfo{publisher}{Springer}, \bibinfo{year}{2014}).

\bibitem[{Rob()}]{Roberts}
\emph{\bibinfo{title}{The unreasonable effectiveness of quasirandom
  sequences}},
  \urlprefix\url{http://extremelearning.com.au/unreasonable-effectiveness-of-quasirandom-sequences/}.

\bibitem[{\citenamefont{Devroye}(1986)}]{devroye1986}
\bibinfo{author}{\bibfnamefont{L.}~\bibnamefont{Devroye}},
  \emph{\bibinfo{title}{Non-uniform random variate generation}}
  (\bibinfo{publisher}{Springer}, \bibinfo{year}{1986}).

\bibitem[{\citenamefont{Khvedelidze and
  Rogojin}(2018)}]{khvedelidze2018generation}
\bibinfo{author}{\bibfnamefont{A.}~\bibnamefont{Khvedelidze}} \bibnamefont{and}
  \bibinfo{author}{\bibfnamefont{I.}~\bibnamefont{Rogojin}}, in
  \emph{\bibinfo{booktitle}{EPJ Web of Conferences}}
  (\bibinfo{organization}{EDP Sciences}, \bibinfo{year}{2018}), vol.
  \bibinfo{volume}{173}, p. \bibinfo{pages}{02010}.

\bibitem[{\citenamefont{Tilma et~al.}(2002)\citenamefont{Tilma, Byrd, and
  Sudarshan}}]{tilma2002parametrization}
\bibinfo{author}{\bibfnamefont{T.}~\bibnamefont{Tilma}},
  \bibinfo{author}{\bibfnamefont{M.}~\bibnamefont{Byrd}}, \bibnamefont{and}
  \bibinfo{author}{\bibfnamefont{E.}~\bibnamefont{Sudarshan}},
  \bibinfo{journal}{Journal of Physics A: Mathematical and General}
  \textbf{\bibinfo{volume}{35}}, \bibinfo{pages}{10445} (\bibinfo{year}{2002}).

\bibitem[{\citenamefont{Slater}(2009)}]{slater2009eigenvalues}
\bibinfo{author}{\bibfnamefont{P.~B.} \bibnamefont{Slater}},
  \bibinfo{journal}{Journal of Geometry and Physics}
  \textbf{\bibinfo{volume}{59}}, \bibinfo{pages}{17} (\bibinfo{year}{2009}).

\bibitem[{\citenamefont{Sommers and Zyczkowski}(2003)}]{sommers2003bures}
\bibinfo{author}{\bibfnamefont{H.-J.} \bibnamefont{Sommers}} \bibnamefont{and}
  \bibinfo{author}{\bibfnamefont{K.}~\bibnamefont{Zyczkowski}},
  \bibinfo{journal}{Journal of Physics A: Mathematical and General}
  \textbf{\bibinfo{volume}{36}}, \bibinfo{pages}{10083} (\bibinfo{year}{2003}).

\bibitem[{\citenamefont{Slater}(2015)}]{slater2015bloch}
\bibinfo{author}{\bibfnamefont{P.~B.} \bibnamefont{Slater}},
  \bibinfo{journal}{arXiv preprint arXiv:1506.08739}  (\bibinfo{year}{2015}).

\bibitem[{\citenamefont{Slater}(2016{\natexlab{b}})}]{slater2016two}
\bibinfo{author}{\bibfnamefont{P.~B.} \bibnamefont{Slater}},
  \bibinfo{journal}{International Journal of Quantum Information}
  \textbf{\bibinfo{volume}{14}}, \bibinfo{pages}{1650042}
  (\bibinfo{year}{2016}{\natexlab{b}}).

\bibitem[{\citenamefont{Altafini}(2004)}]{altafini2004tensor}
\bibinfo{author}{\bibfnamefont{C.}~\bibnamefont{Altafini}},
  \bibinfo{journal}{Physical Review A} \textbf{\bibinfo{volume}{69}},
  \bibinfo{pages}{012311} (\bibinfo{year}{2004}).

\bibitem[{\citenamefont{Ruskai and Werner}(2009)}]{ruskai2009bipartite}
\bibinfo{author}{\bibfnamefont{M.~B.} \bibnamefont{Ruskai}} \bibnamefont{and}
  \bibinfo{author}{\bibfnamefont{E.~M.} \bibnamefont{Werner}},
  \bibinfo{journal}{Journal of Physics A: Mathematical and Theoretical}
  \textbf{\bibinfo{volume}{42}}, \bibinfo{pages}{095303}
  (\bibinfo{year}{2009}).

\end{thebibliography}

\end{document}